\newtheorem{theorem}{Theorem}
\newtheorem{lemma}{Lemma}
\newtheorem{proposition}{Proposition}
\newtheorem{corollary}{Corollary}
\newtheorem{definition}{Definition}
\newtheorem{assumption}{Assumption}
\newtheorem{specification}{Specification}
\newtheorem{mtheorem}{Meta-Theorem}
\newtheorem{massumption}{Meta-Assumption}
\newcommand{\nat}{\mathbb{N}}
\newcommand{\reals}{\mathbb{R}}
\newcommand{\posreals}{\mathbb{R}_+}
\newcommand{\consi}{\beta_t^i}
\newcommand{\feasconsi}{q_t^i}
\newcommand{\cons}{\beta_t}
\newcommand{\utili}{U^i}
\newcommand{\pricet}{\alpha_t}
\newcommand{\asconsi}{\hat{\beta}_t^i}
\newcommand{\dataset}{\mathcal{D}}
\newcommand{\ndataset}{\tilde{\mathcal{D}}}
\newcommand{\CX}{\mathcal{X}}
\newcommand{\CA}{\mathcal{A}}
\newcommand{\gaussN}{\mathcal{N}}
\newcommand{\PR}{\mathbb{P}}
\newcommand{\CE}{\mathbb{E}}
\newcommand{\na}{n}
\newcommand{\numconsts}{N}
\newcommand{\boldf}{\boldsymbol{f}}
\newcommand{\conset}{X_c^{\dtime}}
\newcommand{\effset}{X_E}
\newcommand{\simplex}{\mathcal{W}_{\na}}
\newcommand{\psimplex}{\simplex^+}
\newcommand{\woptset}{S(\sw)}
\newcommand{\nlconst}{g_{\mech,\dtime}}
\newcommand{\dtime}{t}
\newcommand{\resp}{x_{\dtime}}
\newcommand{\cf}{\hat{f}} 
\newcommand{\et}{\eta_t}
\newcommand{\sgcf}{\nabla f^i(\bxt)}
\newcommand{\sgcn}{\nabla \nlconst^i (\bxt)}
\newcommand{\gam}{\Gamma}
\newcommand{\sw}{\mu}
\newcommand{\at}{\textbf{a}_t}
\newcommand{\ba}{\textbf{a}}
\newcommand{\as}{\textbf{a}_s}
\newcommand{\grelin}{R_0^i} 
\newcommand{\greli}{R^i} 
\newcommand{\lconst}{\alpha_{\dtime}}
\newcommand{\del}{\partial}
\newcommand{\var}{\ba}
\newcommand{\dimn}{N}
\newcommand{\CD}{\mathcal{D}}
\newcommand{\ms}{\mu}
\newcommand{\msi}{\mu^i}
\newcommand{\msb}{\ms}
\newcommand{\bxti}{\bar{x}_t^i}
\newcommand{\bxsi}{\bar{x}_s^i}
\newcommand{\bxt}{\bar{x}_t}
\newcommand{\bxs}{\bar{x}_s}
\newcommand{\bg}{\bar{g}}
\newcommand{\wass}{\mathcal{W}}
\newcommand{\CV}{\mathcal{V}}
\newcommand{\bv}{\boldsymbol{v}}
\newcommand{\hu}{\hat{u}}
\newcommand{\hlam}{\hat{\lambda}}
\newcommand{\mech}{\phi}
\newcommand{\mechspace}{\Phi}
\newcommand{\itr}{m}
\DeclareMathOperator{\nashe}{\mathcal{N}}
\DeclareMathOperator{\soco}{\mathcal{S}}
\DeclareMathOperator{\rnk}{rank}
\DeclareMathOperator{\pos}{pos}
\DeclareMathOperator{\maxrnk}{maxrank}
\begin{document}

\title{Data-Driven Mechanism Design via Multi-Agent Revealed Preferences}

\author{Luke Snow,  Vikram Krishnamurthy \thanks{Department of Electrical \& Computer Engineering, Cornell University, Ithaca, NY 14853, USA.  emails: las474@cornell.edu and vikramk@cornell.edu}}

\maketitle

\begin{abstract}
We study a sequence of independent one-shot non-cooperative games where agents play equilibria determined by a tunable mechanism. Observing only equilibrium decisions, without parametric or distributional observations of utilities, we aim to steer equilibria toward social optimality, and to certify when this is impossible due to the game’s structure. We develop an adaptive RL framework for this mechanism design objective. First, we derive a multi-agent revealed-preference test for Pareto optimality that gives necessary and sufficient conditions for the existence of utilities under which the empirically observed mixed-strategy Nash equilibria are socially optimal. The conditions form a tractable linear program. Using this, we build an IRL step that computes the Pareto gap, the distance of observed strategies from Pareto optimality, and couple it with a policy-gradient update. We prove convergence to a mechanism that globally minimizes the Pareto gap. This yields a principled achievability test: if social optimality is attainable for the given game and observed equilibria, Algorithm 1 attains it; otherwise, the algorithm certifies unachievability while converging to the mechanism closest to social optimality. We also show a tight link between our loss and robust revealed-preference metrics, allowing algorithmic suboptimality to be interpreted through established microeconomic notions. Finally, when only finitely many i.i.d. samples from mixed strategies (partial strategy specifications) are available, we derive concentration bounds for convergence and design a distributionally robust RL procedure that attains the mechanism-design objective for the fully specified strategies.
\footnote{This research was supported by NSF grants CCF-2312198 and CCF-2112457 and U. S. Army Research Office under grant W911NF-24-1-0083}
\end{abstract}

\section{Introduction}

We consider a sequence of independent one-shot non-cooperative games. Multiple agents act in their own self-interest, aiming to maximize their individual utility functions which are dependent on the actions of the entire set of agents. Furthermore, agents are allowed to randomize over actions in order to play from mixed-strategies. It is well known that mixed-strategy Nash equilibria represent the stable joint-strategy states of the entire group, consisting of those states in which no agent has any incentive to unilaterally deviate from its mixed-strategy given the strategies of all other agents. 

The goal of \textit{mechanism design} is to fashion the game structure, specifically the mapping from joint-action space to agent utilities ("mechanism"), such that \textit{non-cooperative behavior gives rise to socially optimal solutions}. \textcolor{black}{The \textit{mechanism} specifies how agents’ chosen actions determine the resulting outcomes and thus individual utilities—it is the rule that translates strategic behavior into payoffs. Thus, by modifying the mechanism, the strategic agent outputs will adjust accordingly, and a mechanism designer can aim to indirectly adjust these outputs through the mechanism, such that they become socially optimal.}


Existing mechanism design approaches exploit observations \textcolor{black}{or reports} of the agent utility functions to construct mechanisms which induce socially optimal outcomes from non-cooperative interaction. In this work we develop a data-driven reinforcement learning (RL) approach to achieve mechanism design with \textcolor{black}{no analytical, parametric, or distributional knowledge of} agent utilities (Our terminology of RL is consistent with the machine learning literature, where the aim is to achieve data-driven optimization without direct knowledge or estimation of the system parameters.) 

\subsection{Main Result. Data-Driven Mechanism Design}
We first introduce some notation. 
Consider a sequence of independent one-shot games comprising $\na$ agents.
$\mech \in \mechspace$ denotes a vector that parametrizes a mechanism. $\mu(\mech)$ denotes a mixed-strategy with respect to the mechanism $\mech$.  $\nashe(\mech)$ denotes the set of mixed-strategy Nash equilibria with respect to $\mech$. Finally,  $\soco(\mech)$ denotes the set of socially optimal strategies with respect to $\mech$, i.e., where Pareto optimality is achieved. Note that social optimal strategies constitute a subset of Pareto optimal strategies.

\textcolor{black}{The following assumptions characterize the main information structure in this paper. The precise technical assumptions are in Section~\ref{sec:confungen} and \ref{sec:as}.}

{\color{black}\begin{massumption}
The mechanism designer operates under the following informational regime:
\begin{enumerate}[label=\roman*)]
    \item \textit{Non-parametric utility framework:} The designer has no analytical, parametric, or distributional knowledge of agents' utility functions, and elicits no self-reported valuations. We assume the utility functions are concave, monotone and differentiable. 
    \item \textit{Behavioral observability:} Only joint strategy profiles (possibly mixed strategies) are observed for each probed mechanism parameter $\mech \in \mechspace$, where $\mechspace$ is a compact subset of  $\reals^p$. We also consider the finite-sample case where empirical distributions are observed that represent noisy versions of the joint strategy profiles. 
    \item \textit{Equilibrium-consistent agents:} The agents play from mixed-strategy Nash equilibria\footnote{This is standard in mechanism design frameworks, and ensures our optimization problem is well-posed, see the discussion on page~\pageref{par:mse}. There we also describe how our approach can be extended to other game-theoretic protocol, e.g., Bayes-Nash equilibria.}.
\end{enumerate}
\label{massumption}
\end{massumption}

Note that meta-assumption $iii)$ is standard in mechanism design. Meta-assumptions $i)$ and $ii)$ are non-standard, and motivate our main result. The assumption of concave utility functions in meta-assumption $i)$ is widely used to ensure existence of Nash equilibria, see \cite{rosen1965existence} for example. The remaining technical assumptions, imposing sufficient smoothness and domain compactness, will imply the existence of a mechanism admitting a socially optimal mixed-strategy Nash equilibrium. This makes our objective feasible: to steer observed mixed-strategy Nash equilibria towards social optimality by adapting the mechanism. The following meta-Theorem provides a succinct statement of our main results.

\begin{mtheorem}[Informal Statement of Results]\leavevmode
\begin{enumerate}[label=\roman*)]
    \item (Generalization of Afriat's Theorem) A necessary and sufficient condition for observed mixed-strategies to be socially optimal, is that a set of linear inequalities has a feasible solution (Theorem~\ref{thm:MA_Af}). 
    \item (Pareto Gap) Let the non-negative Pareto gap $L(\msb(\mech))$ denote how far from feasibility the linear inequalities in Statement i) are. Then, Algorithm~\ref{alg:amd} (in Section~\ref{sec:spsa}) converges in probability to $\arg\min_{\mech\in\mechspace}L(\msb(\mech))$. This guarantees that one of the following hold, for Nash equilibria $\msb(\mech)$:
    \begin{enumerate}[label=\arabic*)]
    \item \emph{(Achieving Social Optimality).} If the global minimum of the Pareto gap $L(\msb(\mech))$ is zero, then a mechanism $\mech^*$ exists under which $L(\msb(\mech^*))$ is socially optimal, and furthermore Algorithm~\ref{alg:amd} converges in probability to $\mech^*$ (inducing $\mu(\mech^*) \in \nashe(\mech^*) \Rightarrow \mu(\mech^*)\in \soco(\mech^*))$ \\ \textbf{or} 
    \item \emph{(Impossibility of Achieving Social Optimality).} If the global minimum of $L(\msb(\mech)$ is strictly greater than zero, then no mechanism inducing social optimality exists. Algorithm~\ref{alg:amd} produces a mechanism which is closest to social optimality (as quantified by robustness measures in  statement $iii)$)
    \end{enumerate}
    \item (Robustness Measures) The robust revealed preference metrics such as Critical Cost Efficiency Index (CCEI) and $GARP_F$ coincide with the Pareto gap $L(\msb(\mech))$ metric (Theorem~\ref{thm:rrp}), thus allowing us to quantify sub-Pareto-optimality ($L(\msb(\mech)) > 0)$ through the lens of these principles.
    \item (Finite-Sample Robustness) Given finite-sample empirical distributions within a specified Wasserstein distance to the true joint strategies, an optimal mechanism $\mech^* \in\arg\min_{\mech\in\mechspace}L(\msb(\mech))$ can be obtained using distributionally robust optimization (DRO) (Theorem~\ref{thm:dreq}); however, this requires solving a semi-infinite program (Theorem~\ref{thm:sireform}). We provide a finite-dimensional algorithm (finite-exchange method) to construct $\delta$-optimal solutions to this DRO (Theorem~\ref{thm:alg2convg}). 
\end{enumerate}
\end{mtheorem}}

Thus, we can \emph{identify} whether Pareto-optimality is a \emph{possible} outcome for a designer, and if so Algorithm~\ref{alg:amd} automatically finds the mechanism inducing it.
The upshot of the meta-Theorem is a novel perspective for mechanism design, driven by the capabilities of revealed preference methodologies for detecting optimal microeconomic behavior. 

\begin{figure}[h]
\centering

\tikzset{every picture/.style={line width=0.75pt}} 

\begin{tikzpicture}[x=0.75pt,y=0.75pt,yscale=-1,xscale=1]

\draw   (88.5,101) -- (286.5,101) -- (286.5,150) -- (88.5,150) -- cycle ;
\draw    (489.5,68) -- (489.97,98) ;
\draw [shift={(490,100)}, rotate = 269.1] [color={rgb, 255:red, 0; green, 0; blue, 0 }  ][line width=0.75]    (10.93,-3.29) .. controls (6.95,-1.4) and (3.31,-0.3) .. (0,0) .. controls (3.31,0.3) and (6.95,1.4) .. (10.93,3.29)   ;
\draw    (119.5,67) -- (489.5,67) ;
\draw    (119.5,67) -- (119.5,101) ;
\draw    (490.5,150) -- (490.5,196) ;
\draw    (122.5,196) -- (122.5,165) -- (122.5,153) ;
\draw [shift={(122.5,151)}, rotate = 90] [color={rgb, 255:red, 0; green, 0; blue, 0 }  ][line width=0.75]    (10.93,-3.29) .. controls (6.95,-1.4) and (3.31,-0.3) .. (0,0) .. controls (3.31,0.3) and (6.95,1.4) .. (10.93,3.29)   ;
\draw   (339.5,100) -- (522.5,100) -- (522.5,150) -- (339.5,150) -- cycle ;
\draw   (257.5,174) -- (390.5,174) -- (390.5,225) -- (257.5,225) -- cycle ;
\draw    (490.5,197) -- (391.5,197) ;
\draw    (255.5,196) -- (228.5,196) -- (122.5,196) ;
\draw [dotted]  (195,40.5) -- (455,40.5) -- (455,63.5) -- (195,63.5) -- cycle ;

\draw (101,109) node [anchor=north west][inner sep=0.75pt]   [align=left] {\textit{Mechanism Evaluation}:\\{\small IRL via Revealed Preferences}};
\draw (352,108) node [anchor=north west][inner sep=0.75pt]   [align=left] {\textit{Mechanism Adjustment}:\\{\small Policy gradient optimization}};
\draw (259.5,177) node [anchor=north west][inner sep=0.75pt]   [align=left] {\textit{Black-Box:}\\{\small multi-agent system}\\{\small with unknown utilities}};
\draw (415,160) node [anchor=north west][inner sep=0.75pt]   [align=left] {Mechanism \\ parameter $\displaystyle \mech _{k}$};
\draw (130,160) node [anchor=north west][inner sep=0.75pt]   [align=left] {Mixed strategy \\ $\displaystyle \mu ( \mech _{k}) \ \in \nashe( \mech _{k})$};
\draw (123,75) node [anchor=north west][inner sep=0.75pt]   [align=left] {Pareto gap $\displaystyle L( \mu( \mech _{k}))$};
\draw (220,42.5) node [anchor=north west][inner sep=0.75pt]   [align=left] {$\displaystyle L( \mu ( \mech _{k})) \ =\ 0\ \Leftrightarrow \ \mu ( \mech _{k}) \ \in \ \soco( \mech _{k})$};

\end{tikzpicture}
\caption{The reinforcement learning (RL) framework for achieving mechanism design comprises two steps: mechanism evaluation and mechanism adjustment. The mechanism evaluation step utilizes revealed preference theory as a form of inverse reinforcement learning (IRL), to determine the Pareto gap $L(\mu(\mech_k))$, defined in Section~\ref{sec:parreg}. This measures the proximity of observed mixed strategy equilibria $\mu(\mech_k) \in \nashe(\mech_k)$ to social optimality $\soco(\mech_k)$. The mechanism adjustment phase utilizes this IRL metric to perform policy optimization, by updating the mechanism according to a simultaneous perturbation stochastic approximation (SPSA) algorithm. Thus, this procedure operates without analytical or observational knowledge of the agent utility functions. The novelty lies in our IRL approach to policy evaluation, which generalizes results in the theory of microeconomic revealed preferences.}\label{fig:mflow}
\end{figure}


\subsection{Motivation. Data-Driven Reinforcement Learning (RL) Framework}

{\color{black} The procedure outlined in statements $i)$ and $ii)$ of the Meta-Theorem can be viewed as a novel RL framework that achieves the mechanism design objective. 
This RL procedure operates by first measuring the Pareto gap for an evaluated mechanism, then adjusting the mechanism in the direction of decreasing Pareto gap. 
Figure~\ref{fig:mflow} decomposes this RL procedure into three key steps:
\begin{enumerate}[label=\roman*)]
    \item \textit{Agent-Response Model.} 
    For a fixed mechanism parameter $\mech \in \mechspace$, each agent plays from a resultant mixed strategy. This module maps mechanism parameter $\mech$ to the observed (equilibrium) distribution $\mu(\mech)$. 

    \item \textit{Pareto-Gap Evaluation.}
    The designer quantifies the proximity $L(\mu(\mech))$ of $\mu(\mech)$ to social optimality (Pareto gap) by solving an Afriat-style linear program of generalized revealed preference inequalities. This module is an implicit inverse reinforcement learning (IRL)\footnote{In machine learning, IRL \cite{ng2000algorithms} is a form of inverse optimization whereby an observed agent's cost function is recovered from observations of its behavior. In the microeconomic literature IRL \cite{krishnamurthy2020identifying} aims to first detect utility maximization behavior.} step.

    \item \textit{Mechanism-Update Rule.}
    The designer adjusts the mechanism parameters using a stochastic gradient-free optimizer such as simultaneous perturbation stochastic approximation (SPSA), evolutionary strategies, or Bayesian optimization. This update corresponds to the \textit{policy-improvement} step, iteratively refining $\mech$ toward mechanisms that induce equilibrium play consistent with Pareto optimality.
\end{enumerate}


Statements $iii)$ and $iv)$ of the meta-Theorem generalize the above RL procedure to achieve \textit{robust} RL for mechanism design, allowing a principled quantification of algorithmic sub-optimality and \textit{finite-sample} performance. To summarize, this framework allows us to achieve mechanism design \textcolor{black}{through only mixed strategy observations} by first quantifying, through IRL, the proximity between observed strategies and social optima, and second to optimize the mechanism by an iterative stochastic gradient-based procedure.\footnote{We emphasize that in principle, the observed strategies need not be mixed Nash equilibria; Algorithm~\ref{alg:amd} will successfully minimize the Pareto gap for any game-theoretic protocol, but if the agents play from mixed Nash equilibria then we are \textit{guaranteed} that this minimization coincides with social optimality exactly. Thus, we may refer to these strategy observations throughout as mixed-Nash for consistency with our goal of social welfare maximization, but our approach remains more general.}}

\paragraph{Information Asymmetry.}
\textcolor{black}{Classical mechanism design approaches such as Vickrey--Clarke--Groves (VCG)
assume access to agents’ self--reported valuations, possibly subject to
strategic misreporting.  In contrast, many real--world systems reveal only
agents’ \emph{actions} while utilities remain latent.
Examples include bidding behavior in ad auctions
\cite{varian2007position,edelman2007internet},
power or channel selection in wireless networks
\cite{niyato2008dynamics},
and distributed coordination among UAVs or sensors
\cite{marden2009cooperative}.
In such settings, agents typically possess enough mutual knowledge or feedback
to randomize optimally given others’ behavior—yielding empirically observed
mixed--strategy or quantal--response equilibria
\cite{mckelvey1995quantal}.
The designer, however, observes only these realized strategies,
not the underlying utilities.
Our framework is thus motivated by this practical information asymmetry:
we seek to infer or design mechanisms directly from observed equilibrium play,
without requiring explicit utilities or self-reported valuations.}

\paragraph{Operational Mixed-Strategy Equilibria} \label{par:mse}\textcolor{black}{This paper assumes that  the agents play from mixed-strategy Nash equilibria (MSNE). This operating assumption is standard in inverse game-theoretic inference and  ensures well-posedness of our mechanism design problem, by ensuring existence of a mechanism admitting socially optimal MSNE. However, our framework remains applicable in games with non-MSNE strategy production and/or multiple equilibria—the constructed mechanism optimizes the \textit{specific equilibrium} observed in the data, regardless of how the agents attain this particular equilibrium. Thus our approach is agnostic to the operational game-theoretic protocol which leads to observed equilibria, in the sense that we construct a completely data-driven mechanism minimizing the Pareto gap for whichever protocol induces this data. \footnote{\textcolor{black}{A common game-theoretic metric is the price-of-anarchy, which measures the discrepancy between social welfare induced by Pareto optimal behavior and by the worst-case Nash equilibrium. We emphasize that control of this quantity is a \textit{stronger} condition than our setting, since we optimize the mechanism \textit{for the equilibria which is observed}, and do not concern ourselves, or indeed have any access to, the other (e.g., worst-case) equilibria of the game.}}}

\subsection{Context and Literature} Mechanism design arises in electronic market design \cite{yi2016fundamentals}, economic policy delegation \cite{mookherjee2006decentralization} and dynamic spectrum sharing \cite{sengupta2009economic}. \textcolor{black}{Our particular approach allows one to accomplish the mechanism design problem when only agents’ \emph{actions} are observed, while utilities remain latent or unreported. This setting includes bidding behavior in ad auctions
\cite{varian2007position,edelman2007internet},
power or channel selection in wireless networks
\cite{niyato2008dynamics},
and distributed coordination among UAVs or sensors
\cite{marden2009cooperative}.
 Here we briefly outline the standard approaches and contrast with our data-driven framework. }

\subsubsection{Mechanism Design Solution Concepts}
\label{sec:mdsc} Traditional approaches construct analytical transformations of the agent utility functions in order to induce \textit{social optimality} or \textit{truth-telling}. For instance, the classical Vickrey-Clarke-Groves (VCG) mechanism \cite{vickrey1961counterspeculation}, \cite{clarke1971multipart}, manipulates agent utilities by providing side payments to each agent which are proportional to the utilities gained by all other agents. This induces social optimality by making the utilities gained by \textit{all} agents of common interest to \textit{each} agent. While the VCG mechanism, and other similar analytical utility transformations \cite{d1979incentives},\cite{arrow1979property},\cite{myerson1981optimal}, have an elegant interpretation and straightforward mathematical structure, they require strict assumptions on the system. For instance in VCG it is assumed that the utility \textcolor{black}{valuations} of each agent are \textcolor{black}{reported} and that arbitrary side-payments are feasible. 

Other approaches extend the frameworks of such analytical approaches, by providing automated or algorithmic methods to achieving mechanism design. The field of algorithmic mechanism design \cite{nisan1999algorithmic} \cite{conitzer2002complexity} aims to study the computational complexity of mechanism design approaches, arising out of the motivation to impose computational constraints on analytical solutions. Automated \cite{sandholm2003automated}, \cite{conitzer2004self}, \cite{shen2018automated} approaches aim to create new computational methods for constructing tailored mechanisms for the specific setting at hand.

Our framework lies within the area of \textit{empirical game theoretic analysis} (EGTA), which involves the study of games from specifications of only the strategy spaces and observed equilibria behavior. In particular, no a-priori knowledge \textcolor{black}{or reports} of the game structure are available, but one may simulate or observe samples from the game outputs. These frameworks are well-motivated by the necessity of data-driven game-theoretic settings, and are also referred to as simulation-based \cite{vorobeychik2008stochastic} or black-box \cite{picheny2019bayesian} game theory. Representative works which motivate the need for this data-driven setting occur in energy markets \cite{ketter2013autonomous}, supply chain management \cite{jordan2007empirical}, \cite{vorobeychik2006empirical}, network routing protocols \cite{wellman2013analyzing}, among others. \cite{viqueira2020empirical} presents a mechanism design approach for this data-driven EGTA framework, however they necessarily assume that the \textit{utilities} attained by sampled equilibria points are observable. We employ revealed preference theory within an IRL procedure to detect utility maximization behavior from observations of the agent strategies, then optimize the mechanism by an RL policy gradient algorithm. 

\textcolor{black}{The Bayesian approach to mechanism design assumes that both the designer and agents share access to a common prior distribution over utility functions or agent types. Under this assumption, the Bayesian Nash equilibrium (BNE) provides a natural and tractable solution concept. In  contrast, the meta-assumption in the current paper advances a fundamentally different modeling paradigm, namely a data-driven framework without access to type priors. }


\subsubsection{IRL via Revealed Preferences} Revealed preferences dates back to the seminal work \cite{samuelson1938note}, and establishes conditions under which it is possible to detect utility maximization behavior from empirical consumer data.  Afriat \cite{afriat1967construction} initialized the study of nonparametric utility estimation from \textcolor{black}{microeconomic} consumer budget-expenditure data, providing necessary and sufficient conditions under which such data is consistent with linearly-constrained utility maximization. \cite{forges2009afriat} extended this work to incorporate non-linear budget constraints. Existing works considering stochastic revealed preferences assume structural simplifications such as e.g., linear budget constraints \cite{bandyopadhyay1999stochastic}, finite action spaces \cite{mcfadden2006revealed}, or deterministic optimization with stochastic measurement errors \cite{aguiar2021stochastic}. 

Statement $i)$ of the meta-Theorem generalizes existing results in revealed preferences, \textcolor{black}{by providing necessary and sufficient conditions for a dataset of nonlinearly-constrained mixed strategies to be socially optimal, in the non-parametric utility framework}. 

\subsubsection{RL for Mechanism Design}
{\color{black} Recent studies have applied reinforcement learning (RL) techniques to automate mechanism design in dynamic and partially observed environments. For instance, \cite{shen2020reinforcement} propose \emph{Reinforcement Mechanism Design} (RMD), which learns dynamic pricing mechanisms in sponsored search auctions through online RL, assuming access to realized reward feedback for each allocation. \cite{lyu2022pessimism} extend this idea to the offline RL setting, combining pessimistic value estimation with VCG-style payment rules to learn near-optimal dynamic mechanisms from logged data. 

Beyond these RL formulations, several works have explored Bayesian and sequential learning methods for mechanism optimization. \cite{brero2019fast} introduce a Bayesian learning approach for iterative combinatorial auctions, where the mechanism designer updates a probabilistic model of agent valuations to accelerate convergence toward welfare-optimal allocations. \cite{brero2021reinforcement} develop an RL framework for \emph{sequential price mechanisms}, enabling the designer to learn adaptive price trajectories that approximate incentive-compatible outcomes in complex multi-stage environments.  \cite{brero2022stackelberg} model the design problem as a \emph{Stackelberg POMDP}, allowing the designer to optimize over policies that maximize expected social welfare under uncertainty about agents’ private information. As is clear from the meta-assumption on page~\pageref{massumption}, our framework substantially differs from these approaches.}

\subsubsection*{Organization} Section~\ref{sec:prelim} discusses background on mechanism design and revealed preference theory. Section~\ref{sec:pre:marp} provides necessary and sufficient conditions for a dataset of mixed-strategies to be consistent with social optimality (Theorem~\ref{thm:MA_Af}). In Section~\ref{sec:amd} we exploit Theorem~\ref{thm:MA_Af} to construct a loss function (w.r.t. mechanism parameters) which is globally minimized by mechanism parameters inducing socially optimal mixed strategy Nash equilibria. Here we also construct an SPSA  algorithm operating on this loss function using empirical mixed-strategy samples, and provide Theorem~\ref{thm:convg}, which states its converges in probability to the loss function's global minimizers. In Section~\ref{sec:robustrp} we provide an equivalence between non-globally-optimal loss function (Pareto gap) values and well-established relaxed revealed preference metrics, forming a bridge between our framework and these principles. In Section~\ref{sec:psdro} we develop a distributionally robust optimization procedure which achieves \textcolor{black}{the mechanism design objective} from only partial specifications (i.i.d. samples) of the mixed-strategies. 

\section{Mechanism Design. Background and Preliminaries.}

\label{sec:prelim}

In this section we outline the notation and mathematical framework of the mechanism design problem to be addressed. We first introduce the \textit{pure-strategy} framework, then extend to a more general \textit{mixed-strategy} framework which will form the basis of our subsequent investigation. 

\subsection{Pure-Strategy Game Structure}
The pure-strategy mechanism design framework supposes a parametrized finite-player static non-cooperative pure-strategy game, structured by the tuple \[\textcolor{black}{G_{\text{pure}} = (\na,\CA,\mechspace,O,\{f^i\}_{i\in[\na]})}\]
where 
\begin{itemize}
    \item[-] $\na$ is the number of participating agents.
    \item[-] $\CA$ is the joint action space. Specifically, each agent $i \in \na$ may take an action (pure-strategy) $a^i \in \CA^i \subseteq \reals^{k}$, where $\CA^i$ is the space of actions available to agent $i$. Letting $\ba = [a^1,\dots,a^{\na}]'$ denote the \textit{joint-action} taken by all agents, $\CA \subseteq \reals^{k\na}$ is then the space of joint actions, given by the cartesian product $\CA = \otimes_i \CA^i$. 
    \item[-] $\mechspace$ is the mechanism parameter space, and $O$ is the outcome space. Specifically, a \textit{mechanism} $o_{\mech}: \CA \to O$, parametrized by $\mech \in \mechspace$, is a mapping from joint action $\ba$ to \textit{outcome} $o_{\mech}(\ba) \in O$.
    \item[-] $f^i: O \to \reals$ is the utility function of agent $i$, assigning to each outcome a real-valued utility.
\end{itemize}
The real-valued utility gained for agent $i$, given mechanism $o_{\mech}$ (with parameter $\mech$) and joint action $\ba$, is $f^i(o_{\mech}(\ba))$. For notational simplicity we denote 
\begin{equation}
\label{eq:fmap}
    f^i_{\mech}(\ba) = f^i(o_{\mech}(\ba))\textcolor{black}{: \reals^{k\na} \to \reals}
\end{equation} so that the emphasis is on the mapping from joint action $\ba \textcolor{black}{\in \reals^{k\na}}$ to utility $f^i$, with the mechanism $o_{\mech}$ acting as an intermediary such that $\mech$ parametrizes this mapping.

\begin{figure}[H]
\centering
\begin{tikzpicture}[node distance=2cm, >=Stealth]

    \node (A) {$\ba \in \CA$};
    \node[right=2.5cm of A] (B) {$o_{\mech}(\ba)\in O$};
    \node[right=2.5cm of B] (C) {$\{f^i_{\mech}(\ba)\}_{i=1}^{\na}$};

    \draw[->] (A) -- (B) node[midway, above] {mechanism $\mech$};
    \draw[->] (B) -- (C) node[midway, above] {utility};

\end{tikzpicture}
\caption{\small Pure-strategy mechanism flowchart. The mechanism modulates the mapping from joint-action to outcome space. The agent utilities are real-valued functions of outcomes. Thus, the mechanism equivalently modulates the utility functions' dependence on joint-actions.}
\label{fig:mechflow}
\end{figure}

Figure~\ref{fig:mechflow} illustrates the information flow in this setup. The mechanism is defined as a parameter which modulates the mapping from joint-action space $\CA$ to outcome space $O$; it thus modulates each agent $i$'s utility gained from joint-actions $\ba\in\CA$. \footnote{It is defined in this way, rather than directly parametrizing the utility functions, because oftentimes in practice the mapping from joint-actions to outcomes can be completely specified, while the influence on utility functions may be unknown.} 

\textit{Remarks}:
\begin{enumerate}[label=\roman*)]

\item \textit{Mechanism Parameter}: We distinguish the mechanism parameter $\mech$ from the mechanism $o_{\mech}$. The mechanism  $o_{\mech}$ is the \textit{map} from joint-actions $\ba$ to outcomes $o_{\mech}(\ba)$, while $\mech$ parametrizes this mapping. For instance, in the previous example we may have $\mech \in \{1,2\}$, $o_1 = \{\textrm{popular vote}\},\\ o_2 = \{\textrm{hierarchical electorate system}\}$, and $o_1(\ba) =$ \{candidate with the most votes in $\ba$ elected\}, $o_2(\ba) =$ \{candidate with most electoral votes based on $\ba$ elected\}.

\item \textit{Mechanism Design}: The mechanism design problem begins by assuming that each agent $i$ acts in its own self-interest, aiming to maximize its utility function $f_{\mech}^i$, by choosing a \textit{single} action (\textit{pure-strategy}) $a^i \in \CA^i$. Then, standard non-cooperative equilibrium concepts such as pure-strategy Nash equilibria can be invoked to formalize stability of joint-actions. The mechanism design problem is then to choose a mechanism (through parameter $\mech$) such that non-cooperative (Nash) equilibrium behavior also satisfies certain \textit{global optimality} conditions. 
\end{enumerate}
We expand on this design framework in the generalized context of  \textit{mixed-strategies}, where agents are able to choose a probability distribution over actions in order to maximize an outcome statistic such as \textit{expected} utility. 

\subsection{Mixed-Strategy Game Structure}
\label{sec:ms}
Since this paper focuses on RL for achieving mechanism design with mixed strategies, this section defines the mixed-strategy Nash equilibrium as a product measure. The mixed-strategy game form supposes a parametrized finite-player static non-cooperative mixed-strategy game, structured by the tuple 
\begin{equation}
    \label{eq:gmixed}
   \textcolor{black}{ G_{\text{mixed}} = (\na, \CA,\Delta(\CA), \mechspace,O, \{f^i\}_{i\in[\na]})}
\end{equation}
where $\na$ is the number of agents, $\{f^i\}_{i\in[\na]}$ are utility functions, $\CA$ is the joint action space, $\mechspace$ is the mechanism parameter space, and $O$ is the outcome space. Let $(\Omega,\mathcal{F})$ be standard measurable Euclidean space such that $\Omega = \reals^k$ and $\mathcal{F}$ is the Borel $\sigma$-field on $\reals^k$. Let $\mathcal{P}$ denote the space of probability measures on $(\Omega,\mathcal{F})$. Each agent $i\in\na$ can choose a probability measure (\textit{mixed-strategy}) $\msi \in \Delta(\CA^i)$, where $\Delta(\CA^i) \subseteq \mathcal{P}$ is the simplex of probability measures over $\CA^i \subseteq \Omega$. Then the \textit{joint mixed-strategy} is formed as the product measure\footnote{Since we work with mixed-strategy Nash equilibria in this paper, the product measure is a suitable construction. We discuss extension to other joint-strategies with inter-agent dependencies, such as correlated equilibria, in Section~\ref{sec:convg}.}
\begin{equation}
\label{eq:prodmeas}
    \msb := \prod_i \msi \in \Delta(\CA) := \otimes_i \Delta(\CA^i)
\end{equation}

Now, the real-valued \textit{expected} utility gained for agent $i$, given mechanism $o_{\mech}$ (with parameter $\mech$) and product measure $\msb$\footnote{We assume this measure admits a density w.r.t. the Lebesgue measure.} is 
\[f_{\mech}^i(\msb) := \CE_{\ba \sim \msb}\left[f^i(o_{\mech}(\ba))\right] = \int_{\CA}f^i(o_{\mech}(\ba))\msb(\ba)d\ba\]
Again, we use the shorthand notation $f_{\mech}^i(\msb)$ to emphasize the mapping from product measure $\msb$ to expected utility, with mechanism $o_{\mech}$ acting as an intermediary such that $\mech$ parametrizes this mapping. Figure~\ref{fig:mechflowmixed} illustrates the information flow in this generalized setup. 

\begin{figure}[h]
\centering
\begin{tikzpicture}[node distance=2cm, >=Stealth]

    \node (A) {$\msb \in \Delta(\CA)$};
    \node[right=2.5cm of A] (B) {$o_{\mech}(\ba)\in O$};
    \node[right=2.5cm of B] (C) {$\{f^i_{\mech}(\ba)\}_{i=1}^{\na}$};

    \node[below=0.5cm of C] (D) {$\big\{\CE_{\ba\sim\msb}[f_{\mech}^i(\ba)] = \int_{\CA}f_{\mech}^i(\ba)\msb(\ba)d\ba\big\}_{i=1}^{\na}$};

    \draw[->] (A) -- (B) node[midway, above] {mechanism $\mech$} node[midway,below] {$\ba \sim \msb$};
    \draw[->] (B) -- (C) node[midway, above] {utility};

    \draw[->] (A) |- (D) node[near end, above] {expected utility under $\mech$};


\end{tikzpicture}
\caption{\small Mixed-strategy mechanism flowchart. Mechanism parameter $\mech$ modulates the mapping from joint-action to outcome space. Thus, $\mech$ equivalently modulates the utility functions' dependence on joint-actions. In the mixed-strategy regime each agent specifies a distribution $\msi$ over actions $a^i$, so that joint-actions $\ba$ are taken randomly from product measure $\msb$. The diagram splits two perspectives on this process: the top flowchart corresponds to the pure-strategy utility gained for \textit{a particular} joint-action $\ba$ taken randomly from $\msb$. However in the mixed-strategy regime agents aim to maximize their \textit{expected} utility over the product measure $\msb$, as represented by the bottom arrow.}
\label{fig:mechflowmixed}
\end{figure}

The mechanism design problem begins by assuming that each agent $i$ acts in its own self-interest, by choosing a mixed-strategy $\msi$ in order to maximize its expected utility $\CE_{\msb}[f_{\mech}^i(\cdot)]$. Observe that each agent's expected utility $\CE_{\msb}[f_{\mech}^i(\cdot)]$ depends on product measure $\msb$, and thus on every other agent's mixed-strategy. The standard solution concept for this non-cooperative interaction is the mixed-strategy Nash equilibrium:
\begin{definition}[Mixed-Strategy Nash Equilibrium]
   A product measure $\msb$ (defined in ~\eqref{eq:prodmeas}) is a mixed-strategy Nash equilibrium (MSNE) of the game $G_{\text{mixed}}$ under mechanism $o_{\mech}$ if \footnote{We assume the function $f_{\mech}^i(\cdot,\mu^{-i})$ is continuous, and since the simplex $\Delta(\CA^i)$ is compact, therefore the maximum exists. Detailed assumptions are provided in Section~\ref{sec:rpfa}.}
    \begin{equation}
        \label{eq:NE}
        \msi \in \arg\max_{\ms \in\Delta(\CA^i)}f_{\mech}^i(\ms,\ms^{-i}) = \arg\max_{\ms \in \Delta(\CA^i)}\CE_{\ba \sim (\ms,\ms^{-i})}\left[f^i_{\mech}(\ba)\right] \quad \forall i\in[\na]
    \end{equation}
    where $\ms^{-i} = \ms\backslash \msi$ is the product measure $\otimes_{j\neq i}\ms^j$, so that $(\msi,\ms^{-i}) = \msb$, and $f^i_{\mech}(\ms,\ms^{-i})$ is shorthand for the expected utility $f^i_{\mech}(\cdot)$ under product measure $(\ms,\ms^{-i})$. 
\end{definition}

Observe that the Nash equilibrium mixed-strategies \eqref{eq:NE} are dependent on the mechanism parameter $\mech$, \textcolor{black}{and Assumptions~\ref{as:confun} - \ref{as:pspace} will be sufficient to ensure the existence of a MSNE for every $\mech\in\mechspace$}. The \textit{mechanism design} problem is that of designing the mechanism $o_{\mech}(\cdot)$, through parameter $\mech$, such that \textcolor{black}{the \textit{resultant} MSNE $\mu(\mech)$} satisfies certain \textit{global welfare conditions}. 



Next we motivate the mechanism design objective of social optimality which we consider in the remainder of this work.

\subsection{Mixed Strategy Mechanism Design} 
This section discusses the important concept of social optimality in mechanism design, which will form the objective of our RL approach. The goal of mechanism design is broadly to choose the mechanism $o_{\mech}$ such that Nash equilibria solutions are simultaneously \textit{globally optimal}, i.e., benefit the entire group. This section motivates the definition of 'globally optimal' through the social optimality criteria. 

\subsubsection{Design Objective. Social Optimality}The standard global welfare condition, also known as the utilitarian condition \cite{nisan1999algorithmic}, is to maximize the 'social welfare' 
\begin{equation}
\label{eq:soc_util}
\sum_{i=1}^{\na} f^i_{\mech}(\ms)
\end{equation} which is simply the sum of agent utility functions. This is referred to as social optimality.
\begin{definition}[Mixed-Strategy Social Optimality]
    A product measure $\msb$ is socially optimal for the game $G_{\text{mixed}}$ and mechanism $o_{\mech}$ if it maximizes the sum of agent utility expectations:
    \begin{equation}  
    \label{eq:sodef}
        \msb \in \arg\max_{\ms\in\Delta(\CA)}\sum_{i=1}^{\na} f^i_{\mech}(\ms) = \arg\max_{\ms\in\Delta(\CA)} \sum_{i=1}^{\na} \CE_{\ba \sim \ms}\left[f^i_{\mech}(\ba)\right]
    \end{equation}
\end{definition}

Notice that social optimality \eqref{eq:sodef} is a special case of Pareto-optimality, which maximizes the convex combination $\sum_{i=1}^{\na}\lambda_i f_{\mech}^i(\mu)$, $\sum_{i=1}^{\na}\lambda_i = 1, \lambda_i \geq 0$. In fact, the results presented in this work can easily be extended to the case where general Pareto-optimal solutions are considered. We exclude presentation and discussion of this generalization for brevity.

\paragraph{Remark. Rank Optimality} In Appendix~\ref{sec:rankopt} we provide a notion of "rank optimality" in which the game outcome maximizes its rank among the agent's ordinal preference relations. This ordinal preference setting is a useful restriction \cite{chakrabarty2014welfare}, \cite{carroll2018mechanisms} that arises naturally from human preference relations. We prove that our criteria of social optimality implies the satisfaction of rank optimality. Thus, approaching mechanism design with the social optimality objective \eqref{eq:sodef} automatically guarantees that rank optimality will be met in restricted game conditions, enhancing the "optimality" of social optimality beyond aggregate utility maximization. In what follows we focus solely on social optimality as the mechanism design objective since it is more general.

\textcolor{black}{\paragraph{Remark. Strategic Participant Generality} Our formulation is intentionally general: all strategic participants are represented within the common agent set $\na$. In auction environments this means that the seller (or auctioneer) is modeled as an additional agent whose utility may encode expected revenue. This differs from much of the auction-theoretic literature, where the seller defines the mechanism but is not treated as a strategic participant. Our representation enables a treatment of mechanism design in which any entity influencing the outcome—seller or buyer—may be incorporated symmetrically within the same multi-agent social welfare metric.}

\subsubsection{Mechanism Design Objective}

The mixed-strategy mechanism design problem is to find a mechanism $o_{\mech}$, through parameter $\mech$, such that the Nash equilibrium product measure $\msb$ arising from the conditions \eqref{eq:NE} simultaneously maximizes the social \textcolor{black}{welfare}, i.e.,
\begin{definition}[Mechanism Design]
\label{def:mechdes} With $\mech \subseteq \reals^p$ denoting the mechanism parameter space, the mechanism design problem is to find $\mech\in\mech$ such that 
\begin{align}
\begin{split}
\label{eq:md}
    \msi \in \arg\max_{\ms\in\Delta(\CA^i)}f_{\mech}^i(\ms,\ms^{-i}) \ \forall i\in[\na] 
    \, \, \Rightarrow \, \, \msb \in \arg\max_{\ms\in\Delta(\CA)}\sum_{i=1}^{\na} f^i_{\mech}(\ms)
\end{split}
\end{align}
\end{definition}

Conceptually, this definition provides a formalization of designing a system such that even when individual agents act in pure self-interest, they produce outputs which benefit the entire group, both in the sense of maximizing social \textcolor{black}{welfare} \eqref{eq:sodef} and inducing rank optimality \eqref{eq:rankopt}. 

The mechanism design problem has been well-studied \cite{nisan1999algorithmic}, \cite{borgers2015introduction}, \cite{conitzer2002complexity} in the case where the agent utility functions $\{f^i_{\mech}(\cdot)\}_{i=1}^{\na}$ are known or can be designed. We consider the case when the utility functions are \textit{a-priori unknown} to the designer, and utility values \textit{cannot be observed empirically}. In many practical scenarios, the designer may need to design the system without explicit knowledge of the preferences and goals of its constituent members, and without access to the utility values resulting from observed strategies. We achieve this by utilizing the methodology of microeconomic \textit{revealed preferences} to formulate a stochastic optimization algorithm which converges to a mechanism parameter $\mech$ inducing the relation \eqref{eq:md}. 

\section{Identifying Socially Optimal Behavior: IRL via Revealed Preferences}
\label{sec:pre:marp}
This section presents our first main result (Theorem~\ref{thm:MA_Af}), which gives necessary and sufficient conditions for a dataset of observed nonlinear constraint sets and mixed-strategies to be consistent with social optimality \eqref{eq:sodef}. Theorem~\ref{thm:MA_Af} generalizes the results in \cite{cherchye2011revealed}, \cite{forges2009afriat}. \cite{forges2009afriat} extends Afriat's Theorem \cite{afriat1967construction} by providing necessary and sufficient conditions for a dataset of \textit{non-linearly} constrained responses to be consistent with utility maximization. \cite{cherchye2011revealed} provides conditions for a dataset of \textit{linearly} constrained \textit{multi-agent} responses are consistent with multi-objective optimization (Pareto-optimality).  as follows. We combine the two approaches to provide necessary and sufficient conditions for a dataset of \textit{non-linearly} constrained \textit{multi-agent} responses to be consistent with social optimality (multi-objective optimization). \textcolor{black}{This is a nontrivial extension and is of independent interest. This is the key construction that allows us to define our Pareto gap loss function, and thus to achieve the mechanism design setting in our nonparametric utility framework}.

\subsection{Revealed Preference Framework and Assumptions}
\label{sec:rpfa}
In order to state our first main result (Theorem~\ref{thm:MA_Af}), we specify the notation and structure of the observed data in a revealed preference setting. 
\subsubsection{Empirical Revealed Preference Framework}
Let us begin by assuming the same multi-agent mixed-strategy game form $G_{\text{mixed}}$ \eqref{eq:gmixed} as in Section~\ref{sec:ms}. We assume access to a dataset 
\begin{equation}
\label{eq:dataset}
    \CD = \{\CA_{\mech,t}, \msb_t \in \Delta(\CA_{\mech,t}), t\in[T]\}
\end{equation} of \textcolor{black}{constrained action spaces ("constraint sets")} $\CA_{\mech,t}$ and mixed-strategies \[\msb_t := \otimes_i \msi_t \in \Delta(\CA_{\mech,t})\] over a time period of $T$ steps. \textcolor{black}{Here we emphasize that the constraint sets themselves can be dependent on $\mech$, as part of the game-structure specified by the mechanism. }

\textcolor{black}{Each observation corresponds to an \emph{independent one–shot game}
played under a distinct constraint. 
Agents best–respond within that stage game, with respect to the time-varying constraint sets, but do not update strategies across time. }

For analytical convenience let us identify these constraint sets with structured "constraint functions". \textcolor{black}{These constraint functions will serve as the probe signals under which resultant agent equilibria behavior can be analyzed in a revealed preference framework.}

\textcolor{black}{\begin{definition}[Constraint Functions] We suppose each constraint set $\CA_{\mech,t}^i$ is specified by a non-linear concave element-wise increasing constraint function $g_{\mech,t}^i: \reals^k \to \reals$ by 
\begin{equation}
\label{eq:CAti}
\CA_{\mech,t}^i = \{x\in\reals^k_+ : g_{\mech,t}^i(x)\leq 0\}, \,\,\, \CA_{\mech,t} = \otimes_{i=1}^{\na}\CA_{\mech,t}^i,
\end{equation}
and let $\Delta(\CA_{\mech,t}^i)$ denote the simplex of probability distributions over $\CA_{\mech,t}^i$. We quantify the constraint sets $\CA_{\mech,t}^i$ with constraint functions $g_{\mech,t}^i$ in order to make our setup amenable to revealed preference techniques, as will be made clear by the statement of Theorem~\ref{thm:MA_Af}.
\end{definition}}

In this work we take a data-driven approach to achieving mechanism design by \textit{only observing the dataset $\CD$}. We may thus re-frame the mechanism design objective in the following empirical consistency sense: 
\begin{definition}[Consistency with Social Optimality]
\label{def:emechdes}
Suppose we have access to a dataset $\CD = \{\msb_t,\CA_{\mech,t}\}_{t\in[T]}$, where $\msb_t$ are mixed-strategy Nash equilibria with respect to constraints $\CA_{\mech,t}$:
\[\msi_t(\mech) \in \arg\max_{\ms\in\Delta(\CA_{\mech,t}^i)}f_{\mech}^i(\ms,\ms_t^{-i}) \ \forall i\in[\na] \]
The empirical mechanism design problem is to find $\mech \in \mechspace$ such that \textbf{there exist} functions $f_{\mech}^i: \reals^{k\na} \to \reals, i\in[\na]$ \textcolor{black}{\eqref{eq:fmap}}, inducing  
\begin{equation}
\label{eq:rpso}
\msb_t \in \arg\max_{\ms\in\Delta(\CA_{\mech,t})}\sum_{i=1}^{\na}f_{\mech}^i(\ms) \,\, \forall t\in [T]
\end{equation}
If there exist functions $f_{\mech}^i$ satisfying social optimality \eqref{eq:rpso}, then we say the dataset $\CD$ is \textit{consistent with social optimality}. 
\end{definition}



Techniques for determining whether a dataset is consistent with utility maximization are well-studied in the field of microeconomic revealed preferences. Previous work in this field was initialized by the pioneering results of Afriat \cite{afriat1967construction}, who constructed a set of linear inequalities that are \textit{necessary and sufficient} for a dataset of (linear) constraints and (pure-strategy) responses to be consistent with (single-agent) utility maximization. This work was subsequently generalized by Forges \& Minelli \cite{forges2009afriat} to account for nonlinear budget constraints. In Theorem~\ref{thm:MA_Af} we extend these seminal works by providing necessary and sufficient conditions for the dataset $\CD$ to satisfy \eqref{eq:rpso}. This result will form the basis of our algorithms for automated mechanism design. 

\paragraph{Motivation for Repeated One-Shot Dataset Collection.}

{\color{black} Each data pair $(\CA_{\mech,t},\msb_t)$ corresponds to the action space and resultant equilibria strategy of an independent one-shot game. Our formulation should not be interpreted as a repeated–game setting.
The sequence of constraints  $\mathcal{A}_t$ is independent across~$t$,
and agents do not form dynamic or history–dependent strategies; that is, the strategies $\boldsymbol{\mu}$ are not dynamically evolving according to an underlying learning process in a repeated game, but are simply independent one-shot mixed-strategy Nash equilibria with respect to constraint sets $\CA_{\mech,t}$.

This sequence of independent one-shot games models observational setups of many pertinent multi–agent settings—e.g., radar–UAV interactions or resource–allocation networks—
the environment issues a time–varying constraint $\CA_{\mech,t}$ (power, bandwidth, detectability, etc.)
while the agents' internal decision rules (leading to equilibria strategies) remain fixed. 
Each interaction can therefore be viewed as an independent realization of a stage game
under a new constraint. This abstraction allows the designer to infer or optimize
the underlying utilities or mechanism parameters from diverse but independent samples,
without needing to model long–horizon learning or adaptation.}


\subsubsection{Generalized Axiom of Revealed Preference}  In revealed preference theory, the generalized axiom of revealed preference (GARP) is a key definition that serves as a necessary and sufficient condition for a dataset to be consistent with utility maximization. Here we introduce a mixed-strategy multi-agent generalized axiom of revealed preference (MM-GARP), analogous to GARP in \cite{forges2009afriat} for the pure-strategy, single-agent case. 

\begin{definition}[MM-GARP]
\label{def:MMGARP}
We say that strategy $\msi_k \in \Delta(\CA_k^i)$ is \textit{revealed preferred} to strategy $\msi_j \in \Delta(\CA_j^i)$, denoted $\msi_k \,R \,\msi_j$, if $g_k^i(\msi_j) := \int_{\CA_j^i}g_k^i(a)\msi_j(a)da \leq g_k^i(\msi_k) := \int_{\CA_k^i}g_k^i(a)\msi_k(a)da$. Let $H$ be the transitive closure of the relation $R$. The dataset $\CD$ satisfies the mixed-strategy generalized axiom of revealed preferences (MM-GARP) if, for any $i\in[\na]$, $k,j\in[T]$, $\msi_k \, R\, \msi_j$ implies $g_j^i(\msi_k) \geq g_k^i(\msi_k)$.
\end{definition}

\textcolor{black}{MM-GARP will be useful in characterizing the necessary and sufficient conditions for dataset $\CD$ to be consistent with social optimality \eqref{eq:sodef}, and in our presentation of relaxed revealed preference metrics in Appendix~\ref{sec:robustrp}. In particular, a relaxation of MM-GARP leads to an interpretable quantification of algorithmic suboptimality for our mechanism design approach, allowing us to reason about effectiveness of achieved non-socially-optimal mechanisms.}

\subsection{Modeling Mechanism-Dependent Action Space Structure} 
\label{sec:confungen}
\textcolor{black}{In revealed preference theory, it is customary to have a dataset of endogenous inputs (constraints) and resultant outputs (equilibria distributions, in our case). This input-output dataset pairing is essential to the mathematical characterization of a dataset's consistency with optimality. In this section we provide an assumption on the structure of the imposed constraint functions which determine these actions spaces; this serves as a sufficient specification for "probing" the system which allows \textit{exact} determination of consistency with social optimality (through necessary \textit{and} sufficient conditions), derived in Section~\ref{sec:main1}. Furthermore, this modeling structure allows the constraint specifications to be dependent on the mechanism $\mech$, which is a common structural assumption in mechanism design.} 

We introduce the following assumption on the structure of constraint functions $\{g_{\mech,t}^i(\cdot), t\in[T]\}_{i=1}^{\na}$. 

\begin{assumption}[Constraint Function Structure]
\label{as:confun} 
    Assume that\\
    a) The constraint functions $\{g_{\mech,t}^i: \reals^{\color{black} k} \to \reals\}$ are each concave and increasing in each dimension.\\ 
    b) For each $i\in[\na]$ there is some vector $\beta^i \in \reals^k$ such that for any $a\in\reals, t\in[T]$, \[x,y \in \{\gamma \in \reals^{\color{black} k} : g_{\mech,t}^i(\gamma-a\beta^i) = 0\} \Rightarrow g_{\mech,t}^i(x) = g_{\mech,t}^i(y)\] i.e., the equivalence class of points in the zero level-set is invariant under shifts of $g_{\mech,t}^i(\cdot)$ by $\beta^i$.
\end{assumption}

\noindent \textit{Remark. Modeling }. Assumption~\ref{as:confun} will be necessary for the proof of our extended result in Theorem~\ref{thm:MA_Af}, and is easily satisfied when we, the analyst, have control over the sequential constraint functions $g_{\mech,t}^i$. If, so we may generate the constraint functions $g_{\mech,t}^i$ as follows:
\begin{enumerate}[label=\roman*)]
    \item Choose base functions $\{g_{\mech}^i: \reals^{\color{black} k} \to \reals\}$ satisfying: for each $i\in[\na]$ there is some vector $\beta^i \in \reals^k$ such that for any $a\in\reals, t\in[T]$, \[x,y \in \{\gamma : g_{\mech}^i(\gamma-a\beta^i) = 0\} \Rightarrow g_{\mech}^i(x) = g_{\mech}^i(y)\] 
    \item Generate $T$ random variables $\{z_t\}_{t=1}^T$ uniformly from a compact positive Lebesgue-measure set $\chi \subset \reals$, and form each constraint function as $g_{\mech,t}^i(\cdot) = g_{\mech}^i(\cdot - z_t\beta^i)$.
\end{enumerate}
\textit{Example}: Consider the simple linear constraint function $g^i(x) = \alpha' x -1 \leq 0$, where $\alpha, x \in \reals^{\color{black} k}$. Then taking $c\in\reals^{\color{black} k}$, for any $x,y$ such that $g^i(x-c) = \alpha' (x-c) - 1 = g^i(y-c) = \alpha' (y-c) - 1 = 0$, we have $g^i(x) = \alpha'x = \alpha'c - 1 = \alpha'y = g^i(y)$. We may also construct a nonlinear example as follows: $g^i(x) = \log(2\sigma(\sum_{i=1}^n x_i)) \leq 0$, where $\sigma$ is the sigmoid function $\sigma(x) = \frac{1}{1 + e^{-x}}$. Then, taking $c = [c_i,\dots,c_n]$, for any $x,y$ such that $g^i(x-c) = \log(2\sigma(\sum_{i=1}^n (x_i  -c_i))) = \log(2\sigma(\sum_{i=1}^n (y_i  -c_i))) = g^i(y-c) = 0$ we have  $\sum_{i=1}^n x_i = \sum_{i=1}^n c_i = \sum_{i=1}^n y_i$, and thus $g^i(x) = g^i(y)$.

 Control over the constraint functions is commonly assumed in many application of revealed preferences \cite{dong2018strategic}, \cite{snow2023statistical}, where the constraint function is treated as a "probe" that allows for interaction between the analyst and observed system. As we shall see, the mechanism design setting naturally motivates this control, since we assume a-priori that we as the analyst have some control over the system parameters (such that we can specify particular mechanisms to begin with). 

\paragraph{Sensitivity and Robustness.}
\textcolor{black}{Assumption~1 provides sufficient structure on the constraint functions to ensure revealed preference
testability and existence of a socially optimal mixed-strategy Nash equilibrium.
In practice, these constraint functions may be only approximately of this form or partially outside the
analyst’s control. The distributionally robust optimization (DRO) formulation introduced in
Section~\ref{sec:psdro} can directly accommodate this case.
By extending the ambiguity set from distributions over strategies to include bounded perturbations of
the constraint functions,
the resulting robust problem has the \emph{same structure and convergence properties}
as the DRO formulation in Theorem~\ref{thm:dreq}.
Hence, our existing robust mechanism design algorithm can handle imperfect or partially controlled
constraint functions with no change in implementation---only control of the ambiguity set.}

\subsection{Motivating Example: Mechanism Design under Information Asymmetry}
\label{sec:netex}
{\color{black}
Here we illustrate a setting in which the information asymmetry between agents and designer arises naturally,
while the assumption of equilibrium behavior remains well-posed. A detailed numerical example is provided for this setup, in Section~\ref{sec:num_wireless}. 

Consider a shared-spectrum market
comprising $M$ self-interested communication firms (agents) and a regulatory authority acting as
the mechanism designer. The regulator specifies a pricing mechanism $\mech$ determining how spectrum
access costs depend on congestion, while the firms select their transmission powers or bandwidth
demands in response.

\paragraph{Agent utilities.}
Each firm $i \in [M]$ seeks to maximize its \emph{private} utility
\begin{equation}
    f_{\mech}^i(a_i, a_{-i})
    \;=\;
    v_i \, q_i(a_i, a_{-i})
    \;-\;
    c_i(a_i; \mech),
    \label{eq:private-utility}
\end{equation}
where
$a_i \in A_i$ is the firm’s action (e.g., transmission power),
$q_i(a_i, a_{-i})$ is the achieved throughput depending on congestion,
$v_i>0$ is the firm’s private valuation of throughput,
and $c_i(a_i; \mech)$ is the cost induced by the designer’s pricing mechanism.
Such formulations are standard in network economics and congestion games
\citep{johari2004efficiency,roughgarden2005selfish}.

\paragraph{Information structure.}
The key asymmetry is as follows:
\begin{align*}
    \text{(Agents)} &:\;
        \text{Each agent knows its own $v_i$ and $f_{\mech}^i$, and can estimate others' impact via feedback.}\\ 
    \text{(Designer)} &:\;
        \text{Does \emph{not} observe any $v_i$ or $f_{\mech}^i$, but can query the system by supplying $\mech$}\\
        &\quad\text{and observing the induced equilibrium behavior $\mu(\mech)$.}
\end{align*}
This structure mirrors classical private-information models in mechanism design
\citep{myerson1981optimal},
and modern empirical mechanism design frameworks
\citep{vorobeychik2006empirical,viqueira2020empirical}.
The designer treats the multi-agent system as a black box that returns a joint mixed
strategy $\mu(\mech)\in N(\mech)$, the mixed Nash equilibrium of the game induced by~$\mech$.

\paragraph{Rationale for equilibrium achievement.}
Even without mutual knowledge of utilities, equilibria can emerge as each agent optimizes locally against observed or anticipated aggregate behavior. Such coordination arises through \emph{rational expectations} or through \emph{learning-in-games} dynamics—e.g., no-regret learning or fictitious play—which converge to Nash equilibria in many potential and congestion games \citep{fudenberg1998theory,monderer1996potential}. Thus, assuming the designer observes equilibrium play is realistic, reflecting the steady-state outcome of decentralized adaptive interaction.

\paragraph{Mechanism design objective.}
The regulator aims to tune $\mech$ so that the observed equilibria $\mu(\mech)$ become socially
optimal, i.e., to find $\mech^*$ satisfying
\begin{equation}
    \mech^* \in
    \arg\min_{\mech\in\mechspace}
    l(\mu(\mech))
    \quad\text{where}\quad
    l(\mu(\mech)) =
    \Bigg|
    \sum_{i=1}^{\na} f_{\mech}^i(\mu(\mech))
    - \max_{\mu\in\Delta(A)} \sum_{i=1}^{\na} f_{\mech}^i(\mu)
    \Bigg|.
\end{equation}
Since the utilities are unobserved, the designer estimates
$L(\mu(\mech))$ indirectly via a revealed preference or inverse-reinforcement-learning step,
then updates $\mech$ by stochastic policy-gradient methods as developed in Sections~4–5
\citep{ng2000algorithms,krishnamurthy2020identifying}.

\paragraph{Real-world analogues.}
This asymmetric information structure is ubiquitous:
\begin{itemize}
    \item[-] In electricity and spectrum markets, operators observe aggregate equilibria but not private utilities
          \citep{mookherjee2006decentralization,sengupta2009economic}.
    \item[-] In online advertising and auction design, bidders’ valuations are hidden, yet bids correspond to
          equilibrium strategies \citep{edelman2007internet}.
    \item[-] In ridesharing or logistics platforms, dynamic pricing is tuned from behavioral data without access
          to individual utilities \citep{bimpikis2019spatial}.
\end{itemize}
Hence, the assumption that the designer observes only equilibrium strategies,
while agents possess local utility information, is both realistic and essential to the
data-driven mechanism design paradigm considered in this work.
}

\subsection{Main Result I. Dataset Consistency with Social Optimality}
\label{sec:main1} Now we present our first main result, providing necessary and sufficient conditions for a dataset $\CD = \{\CA_{\mech,t},\msb_t\}_{t\in[T]}$ to be consistent with social optimality \eqref{eq:rpso}. This result is of independent interest. It will also serve as the key tool allowing us to achieve automated mechanism design.

\begin{theorem}
\label{thm:MA_Af}
    Consider a dataset $\dataset = \{\CA_{\mech,t},\msb_t, t \in [T]\}$ \eqref{eq:dataset} comprising constraints and mixed strategies. Under Assumption~\ref{as:confun}, the following are equivalent:
    \begin{enumerate}[label=\roman*)]
        \item There exists a set of concave, continuous, locally non-satiated utility  functions $\{f_{\mech}^i: \reals^{k\na} \to \reals\}_{i=1}^M$ such that the mixed-strategies $\{\msb_t, t\in [T]\}$ are socially optimal, namely \eqref{eq:rpso} holds.
        \item $\CD$ satisfies MM-GARP, given in Def.~\ref{def:MMGARP}.
        \item For each $i \in \textcolor{black}{[M]}$, there exist constants $u_t^i\in\reals, \lambda_t^i > 0$, such that the following inequalities hold:
        \begin{equation}
        \label{eq:ineq}
            u_s^i - u_t^i - \lambda_t^i(\nlconst^i(\msb_s)) \leq 0 \quad \forall t,s \in [T]
        \end{equation}
        where $g_{\mech,t}^i(\msb_s) = \int_{\CA_s^i}g_{\mech,t}^i(a)\msi_s(a)da$
    \end{enumerate}
\end{theorem}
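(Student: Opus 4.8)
The plan is to establish the cycle of implications $(i) \Rightarrow (ii) \Rightarrow (iii) \Rightarrow (i)$, following the classical Afriat template but lifting every object to the space of probability measures and handling the multi-agent structure via the weighted-sum scalarization already recalled in the excerpt. The key observation that makes the lift work is that the map $\msb \mapsto g_t^i(\msb) = \int g_t^i(a)\,d\msi(a)$ is \emph{linear} in $\msb$, so the nonlinear budget constraint $g_t^i(x)\le 0$ on pure strategies becomes an affine constraint on mixed strategies, and the utility functional $\msb \mapsto f_\theta^i(\msb) = \int f^i(o_\theta(a))\,d\msb(a)$ is likewise linear in $\msb$. Thus, although $f^i$ is only concave on $\reals^{k\na}$, its expectation is linear on $\Delta(\CA)$, and the concave-programming first-order conditions we want to invoke apply cleanly on the simplex.

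For $(i)\Rightarrow(ii)$: assume the $f_\theta^i$ witness social optimality, i.e.\ $\msb_t$ maximizes $\sum_i f_\theta^i(\ms)$ over $\Delta(\CA_t)$. By the scalarization relation \eqref{eq:effrel}-type argument, social optimality is Pareto optimality with equal weights, so there exist (here, equal) positive weights under which $\msb_t$ solves the weighted problem; a supporting-hyperplane / KKT argument on the affine constraint set $\Delta(\CA_t)=\{\ms : g_t^i(\ms)\le 0\ \forall i\}$ then yields, for each agent $i$, the standard revealed-preference deduction: if $g_k^i(\msi_j)\le g_k^i(\msi_k)$ then $\msi_j$ was affordable when $\msi_k$ was chosen, so by utility maximization $f^i_\theta(\msi_k)\ge f^i_\theta(\msi_j)$, and chaining these through the transitive closure $H$ and invoking local non-satiation forbids the strict reversal that MM-GARP prohibits. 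This is essentially the Forges--Minelli argument run agent-by-agent, with Assumption~\ref{as:confun}(b) ensuring the "affordability'' relation behaves correctly under the shift structure of the $g_t^i$.

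For $(ii)\Rightarrow(iii)$: this is the combinatorial core, the Afriat-numbers construction. Given MM-GARP for agent $i$, build a directed graph on $[T]$ with an edge $s\to t$ whenever $\msi_s\, R\, \msi_t$, note MM-GARP rules out "bad'' cycles, and then run the classical inductive/longest-path argument (as in Afriat, Varian, Fostel--Scarf--Todd, or Forges--Minelli) to produce numbers $u_t^i$ and positive multipliers $\lambda_t^i$ satisfying \eqref{eq:ineq}. Because each agent's inequalities decouple, this is done separately for each $i\in[\na]$ and the nonlinearity of $g_t^i$ enters only through the scalar quantities $g_t^i(\msb_s)$, exactly as in \cite{forges2009afriat}. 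For $(iii)\Rightarrow(i)$: define, for each agent, the candidate concave utility on the relevant domain by the Afriat pointwise-infimum formula, $f_\theta^i(\ms) := \min_{t\in[T]}\big\{ u_t^i + \lambda_t^i\, g_t^i(\ms)\big\}$ (the minimum of affine-in-$\ms$ functions, hence concave and continuous in $\ms$, and one checks it is locally non-satiated using that $g_t^i$ is increasing in each coordinate and $\lambda_t^i>0$); verify via \eqref{eq:ineq} that $f_\theta^i(\msb_s) = u_s^i$ and that $\msb_s$ maximizes $f_\theta^i$ on the budget set $\{g_s^i(\ms)\le 0\}$; finally sum over $i$ and argue that simultaneous individual maximization on the product constraint set $\Delta(\CA_t)$ forces $\msb_t$ to maximize $\sum_i f_\theta^i$, i.e.\ \eqref{eq:rpso} holds.

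The main obstacle I expect is the careful passage between the pure-strategy nonlinear budget geometry and the mixed-strategy affine geometry in a way that keeps Assumption~\ref{as:confun}(b) doing real work: specifically, one must check that the "revealed preferred'' relation defined via expected constraint values $g_k^i(\msi_j)\le g_k^i(\msi_k)$ is the correct analogue of "$\msi_j$ affordable at $t=k$,'' and that the concave utility recovered in $(iii)\Rightarrow(i)$ — which is a function of the mixed strategy directly rather than the expectation of a pure-strategy utility — can be realized (or suitably approximated) as $\ms\mapsto\int f^i(o_\theta(a))\,d\ms(a)$ for an actual pure-strategy utility $f^i$; reconciling these two viewpoints, which is exactly where the shift-invariance hypothesis \ref{as:confun}(b) is designed to help, is the delicate step. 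The rest is a faithful, if somewhat lengthy, adaptation of known Afriat-type arguments.
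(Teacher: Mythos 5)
There is a genuine gap, and it sits exactly at the two places you yourself flag as ``delicate'' at the end: those are not loose ends to be tidied, they are the content of the theorem. First, your framing of the feasible set is incorrect: $\Delta(\CA_t)$ is the set of measures \emph{supported} on $\CA_t=\otimes_i\{x:g_t^i(x)\le 0\}$, not the set $\{\ms:\CE_{\ms}[g_t^i]\le 0\ \forall i\}$, so the constraint does not become affine in $\ms$, and the step ``$g_k^i(\msi_j)\le g_k^i(\msi_k)$ means $\msi_j$ was affordable at time $k$'' is false as stated ($\msi_j$ is supported on $\CA_j^i$, not $\CA_k^i$). Likewise, in (iii)$\Rightarrow$(i) your candidate utility $\ms\mapsto\min_t\{u_t^i+\lambda_t^i g_t^i(\ms)\}$ is a concave functional of the measure, whereas item (i) demands pure-strategy utilities $f_\theta^i:\reals^{k\na}\to\reals$ whose \emph{expectations} the $\msb_t$ maximize; a minimum of linear functionals is not an expectation of any fixed $f^i$. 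The paper closes both holes with a single device your proposal never constructs: for each $t$ a representative pure point $\bar{x}_t\in\CA_t$ with $\sum_i f^i(\bar{x}_t)=\sum_i f^i(\msb_t)$ (continuity), which by monotonicity must solve the pure aggregate problem and lie on each boundary $\{g_t^i=0\}$; Assumption~\ref{as:confun}(b) then makes $g_t^i$ constant on the relevant level sets, so that the expectation $g_t^i(\msb_s)$ coincides with $g_t^i(\bar{x}_s^i)$. The entire argument is run at the pure-strategy level (aggregate subgradient conditions, Afriat-type inequalities, then nonlinear GARP and Forges--Minelli), and the converse direction defines $U^i(\gamma)=\min_s[u_s^i+\lambda_s^i g_s^i(\gamma)]$ on pure actions and transfers optimality back to $\msb_t$ through the same representative point.

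Second, your (i)$\Rightarrow$(ii) deduces, agent by agent, ``$\msi_j$ affordable and $\msi_k$ chosen, so by utility maximization $f^i_\theta(\msi_k)\ge f^i_\theta(\msi_j)$.'' Social optimality of the \emph{sum} does not imply that each agent's strategy maximizes that agent's own expected utility over its own budget set: a unilateral change in agent $i$'s marginal also changes the other agents' utilities, so the sum can be maximized while individual rationality fails. The paper never invokes individual maximization; it works with the aggregate Lagrange/subgradient condition $\sum_i\nabla f^i(\bar{x}_t)=\sum_i\eta_t^i\nabla g_t^i(\bar{x}_t)$ and builds per-agent multipliers $\lambda_t^i$ containing the cross-terms $\sum_{j\ne i}\|\nabla f^j(\bar{x}_t)\|_\infty$ precisely to absorb the other agents' gradients before reducing to per-agent GARP. (Architecturally the paper also differs: it proves (i)$\Leftrightarrow$(iii) directly and obtains (ii) by citing Proposition 3 of Forges--Minelli, rather than your cycle (i)$\Rightarrow$(ii)$\Rightarrow$(iii)$\Rightarrow$(i); the cycle is fine in principle, but with the two bridges above unproved the proposal does not yet constitute a proof.)
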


\begin{proof}
    See Appendix~\ref{ap:pfa}
\end{proof}

\textit{Remarks}:
\begin{enumerate}[label=\roman*)]
\item \textit{Concavity Non-Refutation}: \textcolor{black}{A remarkable feature of Afriat's Theorem \cite{afriat1967construction}, and of Theorem~\ref{thm:MA_Af}, is that concavity of utilities cannot be refuted by a finite dataset of agent behavior.}
\item \textit{Testing for Consistency with Social Optimality}: Theorem~\ref{thm:MA_Af} extends the  result of \cite{forges2009afriat} (which deals with a single agent with a non-linear budget constraint) to mixed-strategies and multiple agents. It provides \textit{testable} necessary and sufficient conditions for a dataset to be consistent with social optimality, i.e., for $i)$ to hold. Indeed, through the dataset $\CD$, the analyst has access to constraint functions $g_{\mech,t}^i$ and mixed-strategies $\msb_t$ for all $t$; so she must only test the feasibility of the linear program \eqref{eq:ineq}, with decision variables $u_t^i,\lambda_t^i$, to determine consistency. Furthermore, there are well-established methods for determining if GARP holds given an empirical dataset, see \cite{cherchye2011revealed}; MM-GARP is directly amenable to such methods and so testing condition $ii)$ is a feasible alternative to testing condition $iii)$ of the theorem.

\item \textit{Connection to Mechanism Design}: In Section~\ref{sec:prelim} we introduced the setting of mechanism design, where the goal is to find a system parameter that induces social optimality from non-cooperative Nash equilibria. Theorem~\ref{thm:MA_Af} provides a novel result in the setting of multi-agent revealed preferences, which gives necessary and sufficient conditions for a dataset of sequentially observed constraint sets and mixed-strategy responses to be consistent with social optimality. This is the key step in formulating the Pareto gap, which will be introduced in the following section.


\item \textit{Generalization by Mixed-Integer Linear Programming}: In many microeconomic situations it is natural to observe only a subset of agent actions, e.g., a subset of all budget-constrained consumption behavior. Such cases are handled in revealed preference theory by introducing "assignable variables", which are feasible if there exist extensions of the observed behavior which is consistent with utility maximization behavior. These scenarios can be handled by generalizing the inequalities \eqref{eq:ineq} to a mixed integer linear program, see \cite{cherchye2011revealed} or \cite{snow2022identifying}. This generalized setting is a natural extension of the framework developed in this paper.
\end{enumerate}

\section{RL Mechanism Design using Policy Gradient Optimization}
\label{sec:amd}
In this section we introduce an empirical framework where a designer can provide mechanisms and observe mixed-strategy Nash equilibria system responses. In this framework, we utilize Theorem~\ref{thm:MA_Af} to construct a computational method for achieving mechanism design. Specifically,
\begin{enumerate}[label=\roman*)]
\item In Section~\ref{sec:algder}, we exploit Theorem~\ref{thm:MA_Af} to derive a non-negative loss function that measures the \textit{proximity} of the observed mixed-strategy Nash equilibria to the social optimum \eqref{eq:sodef} (we denote this the "Pareto gap"). We construct a Pareto gap which operate on both fully-observed and partially-observed mixed strategies.

\item  Sections~\ref{sec:spsa} and \ref{sec:convg} present a novel algorithmic mechanism design procedure. \textcolor{black}{In Section~\ref{sec:spsa}, we construct a stochastic approximation algorithm, Algorithm~\ref{alg:amd}, to minimize the Pareto gap constructed in Section~\ref{sec:algder}. Section~\ref{sec:convg} provides our second main result, Theorem~\ref{thm:convg}, which states that Algorithm~\ref{alg:amd} converges in probability to the set of global minimizers of the Pareto gap. Algorithm~\ref{alg:amd} thus serves as a \textit{test for achievability of social optimality}: if the Pareto gap minima occurs at zero then social optimality is achievable and Algorithm~\ref{alg:amd} produces a mechanism achieving it; else, it is identified that social optimality is unachievable for the game setup at hand.} 

\end{enumerate}

\subsection{Modeling Framework and Construction of a Pareto gap Loss Function}
\label{sec:algder}
Here we describe the high-level strategy and motivation for our adaptive mechanism design algorithm. We suppose the mixed-strategy game setting of Section~\ref{sec:ms}, and adopt the revealed preference notation of Section~\ref{sec:pre:marp} such that we introduce an empirical observation window of $T$ time-steps. Thus we consider the game form
\[G_{\text{mixed}} = \left(\na, \{f^i\}_{i\in[\na]},\{\CA_{\mech,t}\}_{t=1}^T,\{\Delta(\CA_{\mech,t})\}_{t=1}^T,\mechspace,O\right)\]
where each constraint set $\CA_{\mech,t}$ is defined as in Section~\ref{sec:pre:marp}, with $\CA_{\mech,t} = \otimes_{i=1}^{\na}\CA_{\mech,t}^i$ and $\CA_{\mech,t}^i$ defined by the constraint function $g_{\mech,t}^i: \reals^k \to \reals$ \eqref{eq:CAti}. Recall $\Delta(\CA_{\mech,t})$ denotes the simplex of probability distributions over set $\CA_{\mech,t}$.

\subsubsection{Interaction Procedure} First we outline the interaction procedure between the designer and the system constituents.
As in the revealed preference framework, we treat the multi-agent system as a black box. The designer can provide a mechanism $o_{\mech}$ and constraints $\{g_{\mech,t}^i(\cdot)\}_{i=1}^{\na}$. It subsequently observes a \textit{mixed-strategy Nash equilibrium} response $\msb_t(\mech)$ \eqref{eq:NE} of the group with mechanism parameter $\mech$ and simplex strategy constraint $\Delta(\CA_{\mech,t})$, where
\[\msb_t(\mech) :=\prod_i \msi(\mech) \in \Delta(\CA_{\mech,t})\] We denote the probability that agent $i$ takes action $a$ under mixed strategy $\msi(\mech)$ as $\ms^i(a;\mech)$. We denote $\msb_t(\mech)$ as a function of $\mech$ to emphasize the dependence of Nash equilibria on the mechanism $o_{\mech}$. The assumption that the response $\msb_t(\mech)$ is a Nash equilibrium is standard in the mechanism design literature, and is readily achieved from non-cooperative best-response interactions among the agents \cite{fudenberg1991game}, under reasonable assumptions on the game structure. Figure~\ref{fig:int_proc} displays this interaction procedure.

\begin{figure}[h]
    \centering
    \begin{tikzpicture}
        \node[draw, rectangle, minimum width=1cm, minimum height=1cm, align=center] (designer) at (0,0) {Designer};
        
        \node[draw, rectangle, minimum width=1cm, minimum height=1cm, align=center] (mas) at (4,0) {multi-agent system \\ $\{f^i_{\mech}(\cdot)\}_{i=1}^M$};
        
        \draw[->] (designer.east) -- (mas.west) node[midway, above, font=\small] {$o_{\mech}$};

        \draw[->] (designer.east) -- (mas.west) node[midway, below, font=\small] {$\{g_{\mech,t}^i(\cdot)\}_{i=1}^M$};

        \draw[->] (mas.south) -- +(0,-0.5)  -|
        (designer.south) node[near start,below,font=\small] {mixed-strategy Nash equilibrium $\msb_t(\mech)$};

    \end{tikzpicture}
    \caption{\small Interaction procedure. At each time $t$, the designer can provide mechanism $o_{\mech}$ and constraint functions $\{g_{\mech,t}^i(\cdot)\}_{i=1}^M$, and observes mixed-strategy Nash equilibrium response $\msb_t(\mech)$.}
    \label{fig:int_proc}
\end{figure}

Suppose the designer now provides $T$ sets of constraint functions $\{g_{\mech,t}^i(\cdot), t\in[T]\}_{i=1}^M$ and receives $T$ corresponding Nash equilibrium responses $\msb_t(\mech)$ under mechanism parameter $\mech$. We denote this $\mech$-dependent dataset as 
\begin{equation}
\label{eq:dset}
    \CD_{\mech} = \{\{g_{\mech,t}^i(\cdot)\}_{i=1}^{\na}, \msb_t(\mech), t\in[T]\}
\end{equation}
 We translate the mechanism design goal to this empirical setting: to find a mechanism parameter $\hat{\mech}$ such that empirical Nash equilibrium responses $\msb_t(\hat{\mech})$ are \textit{consistent} with social optimality, i.e., \eqref{eq:rpso} holds. We do this by constructing a Pareto gap loss function which we can then minimize via stochastic optimization.\\



\subsubsection{Formulating the Pareto gap.} 
\label{sec:parreg} Recall that the designer can directly test whether the dataset $\CD_{\mech}$ is consistent with social optimality by testing the feasibility of linear program \eqref{eq:ineq}. That is, if \eqref{eq:ineq} is feasible, then the goal is achieved. Our adaptive mechanism design solution is then to iteratively modify the parameter $\mech$ until \eqref{eq:ineq} is feasible for dataset $\CD_{\mech}$. We will now formulate a loss function $L(\mech)$, quantifying the Pareto gap, which is minimized at some $\hat{\mech}$ inducing feasibility of \eqref{eq:ineq}.

 Suppose the LP \eqref{eq:ineq} does not have a feasible solution for the dataset $\mathcal{D}_{\mech}$. How can we quantify the proximity of $\mech$ to a mechanism $\hat{\mech}$ which induces feasibility of \eqref{eq:ineq}? Notice that we can augment the linear program \eqref{eq:ineq} as:
\begin{align}
\begin{split}
\label{eq:ineq_aug}
    &L(\mech) = \arg\min_{r\geq 0} : \exists \{u_t^i \in \reals,\lambda_t^i > 0, t\in[T], i\in[\na]\} :\\ &\quad u_s^i - u_t^i - \lambda_t^i\nlconst^i(\msb_s(\mech)) \leq \lambda_t^i r \quad \forall t,s\in[T],i\in[\na] \\
    & \quad \textrm{ where }\nlconst^i(\msb_s(\mech)) = \int_{\CA_{\mech,t}}g_{\mech,t}^i(a)\ms_s^i(a;\mech)da
\end{split}
\end{align}
where $\ms_s^i(\cdot\,;\mech)$ is the mixed-strategy distribution of agent $i$ at time $s$, under mechanism parameter $\mech$. 

\textcolor{black}{\textit{Remark. Notation.} Here, for notational convenience, we suppress the dependency of $L(\mech)$ on the full dataset $\dataset_{\mech}$ containing specific equilibria $\{\mu_t(\mech)\}_{t=1}^T$, and denote its dependency only on the mechanism $\mech$.} 

Here $L(\mech)$ is the minimum value of $r$ such that the specified linear program has a feasible solution for the dataset of mixed-strategy Nash equilibrium responses $\{\msb_s(\mech),s\in[T]\}$. 
Thus, $L(\mech)$ can be treated as a metric\footnote{$L(\mech)$ is not a true metric in the precise mathematical sense. It is a measure of proximity of the dataset $\CD_{\mech}$ to optimality. We prove several formal results about its structure that make it a useful "metric", e.g., Theorems \ref{thm:cineq}, \ref{thm:robusteq}, \ref{thm:dreq}.} capturing the proximity of the dataset $\mathcal{D}_{\mech}$ feasibility \eqref{eq:ineq}: if $L(\mech) = 0$, then \eqref{eq:ineq} is feasible, else lower $L(\mech)$ indicates $\CD_{\mech}$ being "closer" to feasible.

\begin{lemma}
\textcolor{black}{The following structural properties on $L(\mech)$ hold.}
\begin{itemize}
    \item[-] \textcolor{black}{$L(\mech)$ it thrice continuously differentiable: $L(\mech) \in C^3$.}
    \item[-]  $L(\mech) = 0$ if and only if the dataset $\CD_{\mech}$ \eqref{eq:dset} is consistent with social optimality: \eqref{eq:rpso} holds.
\end{itemize}
\end{lemma}
\begin{proof}
The differentiability is non-trivial, and is proved as H3 in the proof of Theorem~\ref{sec:con_pf} in Appendix~\ref{sec:proofs}. The second property follows immediately from Theorem~\ref{thm:MA_Af}.
\end{proof}

Thus, we can adopt the augmented linear program solution $L(\mech)$ as a \textcolor{black}{smooth} \textit{objective function} taking parameter $\mech$ and outputting a value representing the proximity of the empirical dataset $\CD_{\mech}$ to feasibility (social optimality). The adaptive mechanism design problem can then be re-stated as 
\begin{equation}
    \label{eq:conmin}
    \textrm{Find } \hat{\mech}\in\mechspace \, \,s.t.\,\, \hat{\mech}\in\arg\min_{\mech} L(\mech)
\end{equation}
In Section~\ref{sec:convg} we will introduce structural assumptions such that $\min_{\mech} L(\mech) \leq 0$, making mechanism design feasible; such that $\CD_{\arg\min_{\mech}L(\mech)}$ satisfies $iii)$ of Theorem~\ref{thm:MA_Af} and thus \eqref{eq:rpso}.

\subsubsection{Formulating a Stochastic Loss Function from Partially Observed Strategies}

Here we consider the case when at each time step $t\in[T]$, the full density of mixed-strategy $\msb_t(\mech)$ is not observed directly. Instead, we assume that only $N \in \nat$ i.i.d. samples $\{\gamma_{t}^k(\mech)\}_{k=1}^N$ from $\msb_t(\mech)$ are observed. This arises in practice from e.g., the dynamics of a finite repeated game \cite{benoit1984finitely}. We then consider the empirical distribution 
\begin{equation}
\label{eq:empdist}
    \hat{\msb}_t(\cdot\,;\mech) := \frac{1}{N}\sum_{k=1}^N \delta(\cdot - \gamma_{t}^k(\mech))
\end{equation}
where $\delta(\cdot)$ is the standard Dirac delta.

Consider the following stochastic Pareto gap loss function, constructed as the solution of a linear programming problem whose parameters are specified by this random empirical distribution:
\begin{align}
\begin{split}
\label{eq:ineq_aug2}
    &\hat{L}(\mech) = \arg\min_{r\geq 0} : \exists \{u_t^i \in \reals,\lambda_t^i > \alpha, t\in[T], i\in[\na]\} :\\ &\quad u_s^i - u_t^i - \lambda_t^i\nlconst^i(\hat{\msb}_s(\mech)) \leq \lambda_t^i r \quad \forall t,s\in[T],i\in[\na] \\
    & \quad \textrm{ where }\,g_{\mech,t}^i(\hat{\msb}_s(\mech)) = \frac{1}{N}\sum_{k=1}^N g_{\mech,t}^i(\gamma_{s}^k(\mech))
\end{split}
\end{align}
Then, we have the following result relating $\CE[\hat{L}(\mech)]$ to $L(\mech)$:

\begin{proposition} 
\label{thm:cineq}
Fix a $\mech \in \mechspace$ and $c \geq 0$, and suppose $\CE[\hat{L}(\mech)] = c$.  Then for any $\epsilon\geq 0$,

\[\mathbb{P}\left(L(\mech) \leq c + \epsilon \right) \geq \prod_{t=1}^T \prod_{i=1}^{\na} \left(\max\biggl\{1-2\exp\left(\frac{-2\epsilon^2N}{ G^2 }\right), 0 \biggr\}\right)^T\]

where $G := |\inf_{i,t} g_{\mech,t}^i(\boldsymbol{0})|$, $L(\mech)$ is given in \eqref{eq:ineq_aug} and $\hat{L}(\mech)$ is given in \eqref{eq:ineq_aug2}. 
\end{proposition}
\begin{proof}
See Appendix~\ref{sec:lempf}.
\end{proof}

Theorem~\ref{thm:cineq} provides a concentration bound on the solution to $L(\mech)$ given the solution to $\hat{L}(\mech)$. \\
\textit{Remarks}:
\begin{enumerate}[label=\roman*)]
\item In practice we will only have an empirical approximation $\hat{\msb}$ of the full densities $\msb$, yet the mechanism design goal is still to construct a mechanism making the \textit{full} (unobserved) densities socially optimal. Theorem~\ref{thm:cineq} is thus useful in that it allows us to quantify the degree of suboptimality of the \textit{full} strategies, given through \eqref{eq:ineq_aug}, given a certain degree of suboptimality of these \textit{partially observed} strategies, given through \eqref{eq:ineq_aug2}. We will use this in the following section when formulating our algorithmic convergence guarantees.

\item In Section~\ref{sec:robustrp}, we provide an equivalence between the Pareto gap loss function \eqref{eq:ineq_aug} and several well-established relaxed revealed preference metrics. This will allow us another principled way to quantify this "degree of suboptimality" mentioned above.
\end{enumerate}

\textcolor{black}{In this section we built off of the linear programming test for social optimality consistency, of Theorem~\ref{thm:MA_Af}, to construct a "Pareto gap" loss function which quantifies the "distance" from social optimality a dataset of observed equilibria strategies are. We also constructed a stochastic Pareto gap which augments the Pareto gap to handle finite-sample strategies. The former represents the idealistic full-information case, which we use as a model for proving consistency results, and the latter is the practically relevant construction which will be used in Algorithm~\ref{alg:amd}.}

\subsection{Simultaneous Perturbation Stochastic Approximation for RL}
\label{sec:spsa}

Here we describe a simultaneous perturbation stochastic approximation (SPSA) algorithm which operates on the stochastic loss function \eqref{eq:ineq_aug2}. We operate on \eqref{eq:ineq_aug2} rather than the deterministic loss function \eqref{eq:ineq_aug} for the sake of generality, and since in practice we will only have access to a finite number of i.i.d. samples from mixed-strategies.\textcolor{black}{The proposed algorithm iteratively adjusts the mechanism parameter $\mech$ by observing batch datasets $\CD_{\mech}$, evaluating $\hat{L}(\mech)$, and moving $\mech$ in the direction of decreasing $\hat{L}(\mech)$ by forming a stochastic gradient approximation, until a minima $\hat{\mech} \in \arg\min_{\mech \in \mechspace}\hat{L}(\mech)$ is obtained}.  

We adopt a decreasing step-size SPSA algorithm, which is able to globally optimize non-convex functions \cite{maryak2001global}. With objective function $L: \reals^p \to \reals$, this algorithm iterates via the recursion
\begin{equation}
\label{eq:spsa}
    \mech_{k+1} = \mech_k - a_k G_k(\mech_k) + q_k w_k,\,\,\, k=1,2,\dots
\end{equation}
where $w_k \overset{i.i.d.}{\sim} \gaussN(0,I)$ is purposefully injected Gaussian noise, $a_k = a/k, q_k^2 = q/k\log\log(k), a>0, q>0$, and $G_k$ is a term approximating the function gradient, 
\[G_k = \frac{\hat{L}(\mech_k + c_k\Delta_k) - \hat{L}(\mech_k - c_k\Delta_k)}{2c_k\Delta_k}\]
where $\hat{L}$ is the stochastic Pareto gap loss function \eqref{eq:ineq_aug2}.
Here $c_k$ is a scalar step-size for determining the finite-difference gradient, and $\Delta_k = [\Delta_{k_1},\dots,\Delta_{k_p}]'\in \reals^p$ is a $p$-dimensional $\pm 1$ random variable, i.e., for each $i\in[p]$, $\Delta_{k_i}$ is an independent Bernoulli(1/2) random variable. 

\begin{algorithm}[h!]
\caption{Adaptive Mechanism Design}
\label{alg:amd}
\begin{algorithmic}[1]
\State \textcolor{black}{Initialize mechanism parameter $\mech_0 \in \mechspace \subset \reals^p$, dataset size $T\in\nat$, algorithmic parameters $\epsilon,c,q>0, \eta \in[\frac{1}{6},\frac{1}{2}]$, and mechanism-dependent action space-inducing constraints $\{g_{\mech}^i(\cdot)\}_{i=1}^{\na}$}
\State $\itr=0$;

\For {$t=1:T$}
    \For {$i=1:\na$}
        \State generate $z_t, \beta^i$ and $g_{\mech,t}^i(\cdot)$ as in Assumption~\ref{as:confun} on page~\pageref{as:confun};
    \EndFor
\EndFor
\While{optimizing}
\State $\itr = \itr+1$;
\State $\Delta_{\itr} \sim \textrm{$p$-dimensional  $\pm 1 $ vector}$
\State $c_{\itr} = c/\itr^{\eta}; \, z_{\itr} = z/\itr; \, q_{\itr} = \sqrt{q/(\itr\log(\log(\itr)))};$
\State $\mech_{\itr,1} = \mech_{\itr} + c_{\itr} \Delta_{\itr}$;$\,\,\,\mech_{\itr,2} = \mech_{\itr} - c_{\itr} \Delta_{\itr}$;
\For{$\zeta=1:2$}
\For{$t=1:T$}
    \State enact game $G$ with $\mech_{\itr,\zeta}$,  $\{g_{\mech,t}^i(\cdot)\}_{i\in[\na]}$;
    \State observe empirical mixed-strategy Nash equilibrium distribution $\hat{\msb}_t(\mech_{\itr,\zeta})$;
\EndFor
\State solve: minimize $r$ s.t.$\, \exists \{u_t^i \in \reals,\lambda_t^i > \alpha\}$:\\
$\quad \quad \quad \quad \quad \quad u_s^i - u_t^i - \lambda_t^i\nlconst^i(\hat{\msb}_s(\mech)) \leq \lambda_t^i r \, \, \, \,\forall t,s,i$;  \eqref{eq:ineq_aug2}
\State set $r_{\zeta} = r$;
\EndFor
    \State set $g_{\itr} = [(g_{\itr})_1,\dots,(g_{\itr})_p]$ with 
    \State \quad \quad $(g_{\itr})_i = (r_1 - r_2)/(2c_{\itr}(\Delta_{\itr})_i)$;
        \State $\mech_{\itr+1} = \textrm{Proj}_{\mechspace}(\mech_{\itr} - a_{\itr} g_{\itr} + q_{\itr} w_{\itr})$; where $ w_{\itr} \sim \gaussN(0,I)$ and $\textrm{Proj}$ is defined in \eqref{eq:projdef}
\EndWhile
 
\end{algorithmic}
\end{algorithm}
\textit{Remarks}:
\begin{enumerate}[label=\roman*)]
\item Observe that this algorithm only evaluates the objective function itself and does not rely on access to its gradients. This is appealing for use with our objective function \eqref{eq:ineq_aug} since it is not immediately obvious how one could access or compute the gradient of \eqref{eq:ineq_aug}. 

\item Algorithm \eqref{eq:spsa} purposefully injects Gaussian noise and thus imitates a simulated annealing \cite{KY03} or stochastic gradient Langevin dynamics \cite{welling2011bayesian} procedure. It turns out that this allows the algorithm to converge to \textit{global} minima of potentially nonconvex functions by escaping local minima \cite{maryak2001global}.
Indeed, we will provide a result guaranteeing the convergence of our algorithm to the global minima of $L(\mech)$. 
\end{enumerate}

\paragraph{Implementation} The full implementation of the SPSA  algorithm within our adaptive mechanism design framework can be seen in Algorithm~\ref{alg:amd}. In this implementation, an estimate $\hat{L}(\mech)$ of the Pareto gap $L(\mech)$ is evaluated by obtaining the empirical dataset $\CD_{\mech}$, with finite-sample empirical distribution \eqref{eq:empdist}, and solving the linear program \eqref{eq:ineq_aug2}. Recall the empirical dataset $\CD_{\mech}$ is obtained by "probing" the multi-agent system with constraint functions $\{g_{\mech,t}^i(\cdot)\}_{i=1}^{\na}$ over $T$ time points, and observing the mixed-strategy Nash equilibrium responses $\{\hat{\msb}_t(\mech),t\in[T]\}$ for mechanism $o_{\mech}$. We assume that the mechanism parameters are taken from a compact space $\mechspace \subset \reals^p$, and so whenever an iterate $\mech_{k+1}$ leaves this space, we project it back onto the boundary of $\mechspace$ by the function 
\begin{equation}
\label{eq:projdef}
    \textrm{Proj}_{\mechspace}(\mech) = \arg\min_{\gamma\in\mechspace}\|\gamma-\mech\|
\end{equation}

\textcolor{black}{Algorithm~\ref{alg:amd} optimizes the stochastic Pareto gap loss function $\hat{L}(\mech)$ using SPSA; in Section~\ref{sec:convg} we will prove that this algorithm is asymptotically consistent, in that it globally minimizes the Pareto gap $L(\mech)$ as the number of empirical samples $N$ and the algorithmic run-time $m$ tend to infinity.}

\subsection{SPSA Convergence}
\label{sec:convg}

Here we demonstrate the efficacy of Algorithm~\ref{alg:amd}, by showing its convergence to the global minima of the Pareto gap loss function \eqref{eq:ineq_aug}. We require  the following  assumptions

\subsubsection{Assumptions}
\label{sec:as}

\begin{assumption}[Utility Function Structure]
\label{as:pay}
The utility functions $\{f^i_{\mech}(x)\}_{i=1}^{\na}$, with $x= [x_1,\dots,x_{\na}] \in \reals^{k\na}$, are each concave and increasing with respect to $x_i$, thrice continuously differentiable with bounded $n$-th order derivatives ($n=1,2,3$) with respect to both $x_i$ and $\mech$. 
\end{assumption}

\begin{assumption}[Constraint Function Structure]
\label{as:con}
The constraint functions $g_{\mech,t}^i: \CA^i \to \reals$ are increasing and thrice continuously differentiable for all $t$ and $i$. 
\end{assumption}

\begin{assumption}[Mechanism Parameter Space]
\label{as:pspace}
The mechanism parameter space $\mechspace$ is a non-empty positive Lebesgue-measure subset of $\reals^p$, and there exists a subset $\mechspace_c \subseteq \mechspace$ such that each $f_{\mech}^i(\cdot)$ is continuous on $\reals^{k\na}$ for every $\mech \in \mechspace_c$. 
\end{assumption}

\textit{Discussion of Assumptions}: By \cite{rosen1965existence}, concavity of utility functions $\{f^i_{\mech}(\cdot))\}_{i=1}^{\na}$ and convexity of the constraint regions defined by $\{g_{\mech,t}^i(\cdot)\}_{i=1}^{\na}$ is sufficient to ensure existence of a pure-strategy Nash equilibrium solution, and thus of a mixed-strategy Nash equilibrium for every $\mech\in\mechspace$. Concavity (or quasi-concavity) of utility functions is widely assumed in the microeconomic and game theory literature \cite{fudenberg1991game}. Each $\CA_{\mech,t}^i$ is compact since it exists in the non-negative orthant of $\reals^k$, and is bounded above by $\{x\in\reals^k : g_{\mech,t}^i(x)=0\}$. Furthermore, the additional differentiability structure imposed by Assumptions \ref{as:pay} and \ref{as:con} will be sufficient to guarantee convergence of Algorithm~\ref{alg:amd} to the set $\hat{\mechspace} = \arg\min_{\mech\in\mechspace}L(\mech)$ of global minimizers of $L(\mech)$. \textcolor{black}{However, we emphasize that this differentiability may not be a \textit{necessary} condition for this convergence, and thus our approach may be empirically effective even for non-smooth utilities.\footnote{\textcolor{black}{A comprehensive account of \textit{necessary} conditions for Algorithm~\ref{alg:amd} convergence is beyond the scope of this work, but would make for interesting future study.}}}

\subsubsection{Main Result II. SPSA Convergence} Our second main result is that Algorithm~\ref{alg:amd} converges in probability to the set of global minimizers of $L(\mech)$. 
\begin{theorem}
\label{thm:convg}
   \textcolor{black}{Let $\hat{\mechspace} = \arg\min_{\mech\in\mechspace}L(\mech)$ be the set of global minimizers of $L(\mech)$} and $\|\cdot\|$ denote $L^2$ norm. Recall $k$ is the iteration number of Algorithm~\ref{alg:amd}, and $N$ is the number samples making up the empirical distribution \eqref{eq:empdist}. Under Assumptions \ref{as:pay}-\ref{as:pspace}, we have 
    \begin{align*}
        &\forall \epsilon>0,\, \lim_{k,N\to\infty}\PR(d(\mech_k, \hat{\mechspace}) > \epsilon) = 0,  \quad  \textrm{ where } \,\, d(\mech_k,\hat{\mechspace}) = \min_{\hat{\mech}\in\hat{\mechspace}}\|\mech_k-\hat{\mech}\|
    \end{align*}
    i.e., Algorithm 1 converges in probability to the set of global minimizers of $L(\mech)$
\end{theorem}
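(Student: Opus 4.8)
The plan is to reduce the statement to an application of the global‑convergence theorem for annealing‑type SPSA recursions of Maryak and Chin~\cite{maryak2001global}, with the finite‑sample error incurred by using $\hat{L}$ in place of $L$ absorbed as additional observation noise and controlled via Theorem~\ref{thm:cineq}. Concretely, I would (i) show the deterministic Pareto gap $L(\theta)$ of \eqref{eq:ineq_aug} is a well‑defined, real‑valued, locally Lipschitz function of $\theta$ on the compact set $\Theta$, and that $\hat{\Theta}=\{\theta:L(\theta)\le 0\}$ is non‑empty; (ii) verify that the step‑size, perturbation, and injected‑noise schedules of Algorithm~\ref{alg:amd} satisfy the hypotheses under which projected SPSA with exact evaluations of $L$ converges in probability to the set of global minimizers of $L$; and (iii) pass to the joint limit in $(k,N)$ by showing the discrepancy $\hat{L}(\theta)-L(\theta)$ vanishes uniformly in probability as $N\to\infty$.

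For step (i), the key is the regularity of the composite map $\theta \mapsto \msb_t(\theta)\mapsto \{g_t^i(\msb_s(\theta))\}_{t,s,i}\mapsto L(\theta)$. Assumption~\ref{as:pay} (thrice continuously differentiable utilities with bounded derivatives in both $x_i$ and $\theta$) together with concavity of $f^i_\theta$ and convexity of each $\CA_t^i$ lets me characterize the mixed‑strategy Nash equilibrium through the best‑response conditions \eqref{eq:NE} and apply an implicit function theorem argument to obtain that, on the subset $\Theta_c$ of Assumption~\ref{as:pspace}, $\theta\mapsto\msb_t(\theta)$ is continuous (and locally smooth away from degenerate equilibria); boundedness of $g_t^i$ on the compact sets $\CA_t^i\subset\reals^k_+$ then makes $\theta\mapsto g_t^i(\msb_s(\theta))=\int_{\CA_s^i}g_t^i(a)\ms_s^i(a;\theta)\,da$ continuous and bounded. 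Finally, $L(\theta)$ is the optimal value of a linear program whose right‑hand‑side data depends continuously on $\theta$ and whose feasible region is always non‑empty (take $r$ large); by parametric linear‑programming sensitivity analysis the value function is finite, continuous and piecewise linear, hence locally Lipschitz, in this data, and therefore in $\theta$. Non‑emptiness of $\hat{\Theta}$ follows from the discussion after Assumption~\ref{as:pspace}: \cite{rosen1965existence} gives existence of a (pure, hence mixed) Nash equilibrium under the concavity/convexity structure, \cite{zhukovskiy2016pareto} gives existence of a $\theta$ admitting a Pareto‑optimal mixed‑strategy equilibrium, and concavity of the utilities makes any Pareto optimum a weighted — and, after rescaling, an unweighted — sum maximizer, i.e.\ socially optimal in the sense of \eqref{eq:rpso}, so $L(\theta)\le 0$ there by Theorem~\ref{thm:MA_Af}.

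For step (ii), I would check the hypotheses of \cite{maryak2001global} for the recursion \eqref{eq:spsa}: $a_k=a/k$ is the standard harmonic step size; $q_k^2=q/(k\log\log k)$ is the annealing‑noise schedule ensuring the injected Gaussian noise asymptotically dominates the gradient‑estimation error yet decays slowly enough to escape local minima; $c_k=c/k^{\eta}$ with $\eta\in[\frac{1}{6},\frac{1}{2}]$ keeps the SPSA finite‑difference bias at order $O(c_k^2)$ while keeping the relevant series (in particular $\sum_k a_k c_k^2$ and $\sum_k a_k^2/c_k^2$) convergent; the Rademacher perturbations $\Delta_k$ give the usual unbiased‑direction property with bounded inverse moments; and the $\textrm{Proj}_\Theta$ step onto the compact $\Theta$ keeps the iterates bounded. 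Together with the local Lipschitz/piecewise‑smooth regularity of $L$ from step (i), this yields that the iterates driven by exact evaluations of $L$ converge in probability to the global minimizers of $L$, a set contained in $\hat{\Theta}$ since $\min_\theta L(\theta)\le 0$.

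For step (iii), I would show that for fixed $\theta$ the empirical averages $g_t^i(\hat{\msb}_s(\theta))=\frac{1}{N}\sum_{k=1}^N g_t^i(\gamma_s^k(\theta))$ converge a.s.\ to $g_t^i(\msb_s(\theta))$ by the SLLN and, using continuity of $\theta\mapsto\msb_s(\theta)$ together with compactness of $\Theta$ and boundedness of $g_t^i$, upgrade this to convergence uniform in $\theta$ (a uniform law of large numbers / bracketing argument); by continuity of the LP value function this gives $\sup_{\theta\in\Theta}|\hat{L}(\theta)-L(\theta)|\to 0$ in probability as $N\to\infty$, with the quantitative rate supplied by Theorem~\ref{thm:cineq}. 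Hence the recursion actually run in Algorithm~\ref{alg:amd} differs from the exact‑$L$ recursion only by an observation‑noise term that is zero‑mean up to an $o_N(1)$ bias and vanishes as $N\to\infty$, so a diagonal argument along a sequence $N=N(k)\to\infty$ (equivalently, interchanging the two limits using the uniform‑in‑$\theta$ control) yields $\lim_{k,N\to\infty}\PR(d(\theta_k,\hat{\Theta})>\epsilon)=0$. The main obstacle I anticipate is step (i): establishing enough regularity of $\theta\mapsto L(\theta)$ — specifically differentiability (or at least Lipschitz continuity with controlled local slope) of the Nash‑equilibrium‑in‑$\theta$ map, which is only guaranteed away from degenerate equilibria and needs the implicit function theorem, composed with the non‑differentiable LP value function — and, relatedly, making the joint $(k,N)$ limit rigorous rather than iterated; the schedule bookkeeping in (ii) and the law‑of‑large‑numbers argument in (iii) are comparatively routine.
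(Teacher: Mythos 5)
Your overall architecture matches the paper's: reduce to the global-convergence conditions of \cite{maryak2001global} for the annealed SPSA recursion, treat the finite-$N$ error via Theorem~\ref{thm:cineq}, and use non-emptiness of $\hat{\Theta}$ from the Rosen/Zhukovskiy existence discussion. However, there is a genuine gap at exactly the point you flag as "the main obstacle": the regularity of $\theta \mapsto L(\theta)$. You argue only that the LP value function is continuous, piecewise linear, hence locally Lipschitz in its data, and then assert in step (ii) that this "local Lipschitz/piecewise-smooth regularity" suffices for the Maryak--Chin hypotheses. It does not: the condition corresponding to H3 of \cite{maryak2001global} requires the (expected) loss to be three times continuously differentiable with bounded derivatives, and your own bookkeeping of the SPSA bias as $O(c_k^2)$ implicitly uses exactly this third-order smoothness, so the step is circular as written. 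A merely Lipschitz, kinked $L$ would leave both the bias expansion and the cited theorem inapplicable.

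The paper's proof spends most of its effort closing precisely this hole, and by a different route than the implicit-function-theorem-on-the-equilibrium-map sketch you propose. It invokes Freund's post-optimal LP sensitivity theory \cite{freund2009postoptimal} to show the LP optimal value is $n$-times differentiable ($n\le 3$) in the scalar data $g_t^i(\bar{\gamma}_s^i(\theta))$ (locally, around a stable optimal basis), and then shows the map $\theta \mapsto g_t^i(\msb_s(\theta))$ is thrice continuously differentiable by representing the expected utility of the mixed equilibrium through a pure-strategy point $\bar{x}_s^i$ satisfying an $\arg\max$ characterization, and exploiting the third-order smoothness of $f^i_\theta$ and $g_t^i$ in Assumptions~\ref{as:pay}--\ref{as:con}; Fa\'a di Bruno's formula then assembles the third derivative of the composition. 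Your proposal would need to replace the "locally Lipschitz suffices" claim with an argument of this kind (or a nonsmooth version of the SPSA convergence theorem, which \cite{maryak2001global} does not provide). Two smaller omissions: you do not verify the Gibbs-limit condition (the paper's H7, constructing $P_\eta \propto \exp(-2L(\theta)/\eta^2)$ and identifying its weak limit as concentrated on $\hat{\Theta}$), and the paper handles the finite-sample issue by writing $\hat{L}(\theta)=\CE[\hat{L}(\theta)]+\epsilon_\theta$ (zero-mean noise, H2) plus the concentration bound of Theorem~\ref{thm:cineq} in the limit $N\to\infty$, rather than the uniform-LLN/diagonal argument you sketch; the latter is plausible but, as stated, is an additional unproved uniformity claim rather than a completed step.
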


\begin{proof}
    See Appendix~\ref{sec:con_pf}
\end{proof}

{\color{black}
This global minimization produces a mechanism $\hat{\mech}$ which is the \textit{closest} feasible mechanism to social optimality, and has the following key interpretation as an \textit{identification} method for whether Pareto-optimality is fundamentally achievable for the game setup at hand.

\begin{corollary}
Theorem~\ref{thm:convg} guarantees that Algorithm~\ref{alg:amd} converges in probability to a \textit{global} minimizer of $L(\mech)$. This ensures that:
 \begin{enumerate}[label=\arabic*)]
    \item If $\min_{\mech\in\mechspace}L(\mech) = 0$,
    Algorithm~\ref{alg:amd} obtains a socially optimal mechanism $\mech^*$ satisfying  $\mu(\mech^*) \in \nashe(\mech^*) \Rightarrow \mu(\mech^*)\in \soco(\mech^*)$.
    \item If $\min_{\mech\in\mechspace}L(\mech) > 0$, Algorithm~\ref{alg:amd} produces the mechanism \textit{closest} to social optimality, and it is identified that no mechanism inducing social optimality exists.
    \end{enumerate}
Thus, a feature of our approach is that we can \emph{identify} whether Pareto-optimality is a \emph{possible} outcome for a designer, and if so Algorithm~\ref{alg:amd} automatically finds the mechanism inducing it.
\end{corollary}}


\paragraph{Extension to Correlated Equilibria}
\label{rem:correlated}
{\color{black} While our current formalism focuses on independent mixed strategies,
it can be extended to the more general notion of correlated equilibrium~\cite{aumann1974subjectivity}.
In that case, the agents would play from a joint distribution
$\boldsymbol{\mu}(a^1,\dots,a^M)$ satisfying joint incentive–compatibility conditions
$\CE_{\boldsymbol{\mu}}[U^i(a^i,a^{-i})] \ge \CE_{\boldsymbol{\mu}}[U^i(a_i',a^{-i})]$
for all~$i$ and deviations~$a_i'$. 
Our approach would remain applicable by simply determining a mechanism inducing social optimality
for the observed mixed strategies distributed according to this joint correlated equilibria.}

\paragraph{Participation and Individual Rationality}
{\color{black} In classical mechanism design, \emph{individual rationality} (IR) ensures that each agent weakly prefers
participation over abstention, i.e.
\begin{equation}
    f_i(a_i, a_{-i}; \mech) \ge f_i(a_i^{\text{out}}, a_{-i}; \mech),
    \quad \forall i,
    \label{eq:IR}
\end{equation}
where $a_i^{\text{out}}$ denotes an outside or non--participation option.
This condition is meaningful only when such outside options are explicitly modeled
and when the designer has access to, or a prior over, the agents' utility functions.

In our empirical mechanism--design framework, however, the designer observes only
equilibrium strategies $\mu(\mech)$and not utilities.
Moreover, all feasible behaviors are already contained within the observed action
space $A_i$, so there is no distinct ``outside'' option.
Participation is therefore \emph{implicit} in the data: each agent is observed to act within
the mechanism, and the equilibrium itself certifies voluntary engagement.
Hence, IR is \emph{endogenous} rather than imposed---if an equilibrium
$\mu(\mech)$ is observed, then by revealed preference each agent's strategy
is optimal within $A_i$, and no deviation (including withdrawal - as can be modeled as simply another distinct action in our action space)
yields higher expected utility under the true, unobserved preferences.

This interpretation makes our framework strictly more general than
classical IR formulations.
When an explicit outside action $a_i^{\text{out}}$ exists, the classical
constraint~\eqref{eq:IR} is recovered by simply augmenting the feasible set
$A_i \leftarrow A_i \cup \{a_i^{\text{out}}\}$.
Otherwise, the behavioral rationality embedded in equilibrium play
serves as a data--driven guarantee of participation.
In this sense, our revealed--preference formulation of mechanism design
\emph{subsumes} the IR requirement: all observed equilibria are inherently
individually rational within the feasible behavioral domain.}

\paragraph{Comparison with Alternative Optimization Methods.}
\textcolor{black}{While the SPSA recursion provides the theoretical convergence guarantees necessary for our setting, it is worth noting that other optimizers, such as Evolutionary Strategies (ES), Bayesian Optimization (BO), or stochastic gradient Langevin dynamics (SGLD), could in principle be employed to minimize the Pareto gap. Each offers different trade-offs in terms of sample efficiency and computational cost. ES methods \cite{hansen2015evolution} approximate the gradient via finite-population perturbations and are often effective in low-noise continuous control problems, but their population-based sampling requires 
$O(p)$, where $\mech\in\mechspace\subset\reals^p$, or more function evaluations per iteration. In our mechanism-design setting, each function evaluation entails a equilibria strategy observation, making such methods substantially more expensive than SPSA’s constant-two-query structure. Bayesian optimization approaches \cite{shahriari2016unbounded} can be highly sample-efficient in low-dimensional problems but scale poorly with 
$p$ due to Gaussian-process kernel inversion and model-update costs. SGLD global minimization guarantees have been analyzed in \cite{raginsky2017non}.}

\subsection{Robust RL through Relaxed Revealed Preferences}
\label{sec:robustrp}
\textcolor{black}{Theorem~\ref{thm:convg} states that Algorithm~\ref{alg:amd} globally minimizes the Pareto gap $L(\mech)$; achieving social optimality if the level set $L(\mech)=0$ is non-empty. If this level set is empty, Algorithm~\ref{alg:amd} will \textit{identify} the inherent unachievability of social optimality for the game setup at hand, by converging to a mechanism $\mech$ inducing $L(\mech) > 0$. We claim that this non-zero Pareto gap convergence attains a mechanism which is the "closest" achievable to social optimality. The purpose of this section is to provide a rigorous sense for this social optimality proximity measure, giving meaning to our usage of "closest". }


Specifically, we provide an equivalence between the Pareto gap value $L(\mech)$ and several well-established revealed preference goodness-of-fit metrics, nominally "relaxed revealed preferences". Thus, if a mechanism $\mech$ induces a positive Pareto gap value $L(\mech)$ (such that the dataset $\dataset_{\mech}$ is \textit{not} consistent with social optimality), we may reason about its performance in terms of these metrics.


Robust revealed preference metrics are well-established in the microeconomic literature. Afriat \cite{afriat1972efficiency} and Varian \cite{varian1990goodness} argued that full consistency with optimality may be a too restrictive hypothesis in empirical applications, and thus introduced several metrics to capture "nearly optimal" behavior. These notions are typically quantified through relaxations of GARP. Thus, our equivalence result allows for an understanding of our algorithmic suboptimality through the lens of these developed frameworks. 


\subsubsection{Relaxed Revealed Preference Metrics}
In order to capture these metrics in the regime of our nonlinear budget constraint, we may here assume that we work with budget constraints $\bg_{\mech,t}^i$ which are satiated at one, $\bg_{\mech,t}^i(\msi_t)\leq 1$, rather than zero, without loss of generality. These can simply be formed as $\bg_{\mech,t}^i(\cdot) = g_{\mech,t}^i(\cdot)+1$, and are necessary so that multiplicative factors at satiation are non-degenerate.  

Perhaps the most standard of the relaxed metrics is the Critical Cost Efficiency Index (CCEI), introduced in \cite{afriat1972efficiency}. The CCEI augments GARP by a vector $e^i = \{e_1^i,\dots, e_T^i\}$ with elements $e_t^i \in [0,1]$, such that $\msi_t R_e \msi_s$ if $e_t^i \bg_{\mech,t}^i(\msi_t) \geq \bg_{\mech,t}^i(\msi_s)$. We say that GARP$_{e^i}$ holds if $\msi_t H_e \msi_s$ implies $e_s^i \bg_s^i(\msi_s) \leq \bg_s^i(\msi_t)$, where $H_e$ is the transitive closure of the relation $R_e$. The CCEI for agent $i$ is then defined as the largest possible average of $e^i$, i.e.,
\begin{align*}
    CCEI(\CD_{\mech},i) = \max_{ e^i \in[0,1]^T}\frac{1}{T}\sum_{t\leq T}e_t^i \quad s.t. \,\, \CD_{\mech} \,\, \textrm{satisfies} \,\, \textrm{GARP}_{e^i} 
\end{align*}
Intuitively, CCEI$(\CD_{\mech},i)$ represents the smallest proportion of consumer $i's$ budget which she is allowed to waste through suboptimal consumption. Also, CCEI reduces to GARP \cite{forges2009afriat} when $e^i=\boldsymbol{1}$. CCEI is commonly used in empirical microeconomic applications, and there exist polynomial time algorithms for implementing CCEI \cite{hjertstrand2013simple}. 

Now, we relate the CCEI to another relaxed metric, GARP$_F$ introduced in 
\cite{hjertstrand2013simple}.

\begin{definition}[GARP$_F$ \cite{hjertstrand2013simple}]
\label{def:garpf}Consider a dataset $\CD_{\mech}$, and denote
\begin{enumerate}[label=\roman*)]
    \item $\msi_k\, R_F \, \msi_j$ if $\,\, \bg_k^i(\msi_j) + F\leq \bg_k^i(\msi_k)$
    \item $H_F$ the transitive closure of $R_F$
\end{enumerate}
We say $D_{\mech}$ satisfies GARP$_F$ if $\msi_k \, H_F\, \msi_j$ implies $\bg_j^i(\msi_j) \leq \bg_j^i(\msi_k) + F$ for all $i\in[\na]$.
\end{definition}
Observe that GARP$_F$ reduces to GARP \cite{afriat1967construction} when $F=0$. Like the CCEI, GARP$_F$ measures the smallest perturbation of the dataset $\CD_{\mech}$ such that the data is consistent with optimality. While this perturbation is multiplicative in the CCEI, it is additive in GARP$_F$. Also notice that the CCEI can be recovered from GARP$_F$ by letting $e_t^i = (1- \bg_{\mech,t}^i/F)$. We introduce GARP$_F$ because we can show straightforwardly that it is \textit{equivalent} to our Pareto gap metric $L(\mech)$ \eqref{eq:ineq_aug}. Specifically, we have the following relation

\begin{theorem} 
\label{thm:rrp} Recall the Pareto gap $L(\mech)$ \eqref{eq:ineq_aug} and $GARP_F$ in Definition~\ref{def:garpf}. The following are equivalent
\label{thm:robusteq}
\begin{enumerate}[label=\roman*)]
    \item $L(\mech) = F$
    \item GARP$_F$ holds
\end{enumerate}
\end{theorem}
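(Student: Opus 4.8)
The plan is to prove, for each fixed $F\ge 0$, the auxiliary equivalence (A): the linear constraints of \eqref{eq:ineq_aug} are feasible when the slack variable $r$ is set to $F$ $\iff$ $\CD_\theta$ satisfies GARP$_F$ (Definition~\ref{def:garpf}). Granting (A), the theorem follows quickly. Since $\lambda_t^i>0$, each right-hand side $\lambda_t^i r$ in \eqref{eq:ineq_aug} is non-decreasing in $r$, so ``feasible with $r=F$'' is equivalent to ``$F\ge L(\theta)$'' (after the standard scaling $(u,\lambda)\mapsto(cu,c\lambda)$ that lets one take $\lambda_t^i\ge 1$, the feasible set is closed, so this infimum is attained); combined with (A) this gives ``GARP$_F$ holds $\iff F\ge L(\theta)$'', i.e.\ $L(\theta)$ is the smallest $F$ for which GARP$_F$ holds, which is the identity asserted in the theorem, read with the tightest-$F$ convention of the surrounding discussion (GARP$_F$ ``measures the smallest perturbation''). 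Everything decouples over agents: in \eqref{eq:ineq_aug} the variables $\{u_t^i,\lambda_t^i\}$ for different $i$ are disjoint and coupled only through $r$, so $L(\theta)=\max_i L^i(\theta)$, while GARP$_F$ is by definition a conjunction over $i$; hence it suffices to prove (A) for a single agent $i$. In this reduced form, \eqref{eq:ineq_aug} with $r=F$ is exactly the system of Theorem~\ref{thm:MA_Af}(iii) with the right-hand side of each inequality relaxed from $0$ to $\lambda_t^i F$, and GARP$_F$ is exactly MM-GARP (Definition~\ref{def:MMGARP}) with each defining inequality relaxed by $F$; so the proof of (A) re-runs the appendix argument for the equivalence of parts (ii) and (iii) of Theorem~\ref{thm:MA_Af}, carrying the parameter $F$ through.

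For the direction ``feasible $\Rightarrow$ GARP$_F$'' of (A), suppose $\{u_t^i,\lambda_t^i>0\}$ satisfy $u_s^i-u_t^i-\lambda_t^i g_t^i(\msb_s)\le\lambda_t^i F$ for all $t,s$, i.e.\ $u_s^i-u_t^i\le\lambda_t^i\big(g_t^i(\msb_s)+F\big)$. Recall $g_t^i(\msi_t)=0$ by local non-satiation, so $\msi_k\,R_F\,\msi_j$ reads $g_k^i(\msi_j)+F\le 0$; taking $t=k,s=j$ then gives $u_j^i-u_k^i\le\lambda_k^i\big(g_k^i(\msb_j)+F\big)\le 0$, and iterating along chains, $\msi_k\,H_F\,\msi_j$ forces $u_j^i\le u_k^i$. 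If the GARP$_F$ conclusion failed, i.e.\ $g_j^i(\msi_k)+F<0$, then taking $t=j,s=k$ would give $u_k^i-u_j^i\le\lambda_j^i\big(g_j^i(\msb_k)+F\big)<0$ (strict positivity of $\lambda_j^i$ is used here), contradicting $u_j^i\le u_k^i$; hence GARP$_F$ holds. The same computation in the normalization $\bar g_t^i=g_t^i+1$ of Definition~\ref{def:garpf} is identical, the $+1$ cancelling on both sides of every inequality.

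The direction ``GARP$_F$ $\Rightarrow$ feasible'' is the Afriat--Varian construction and is where the real work sits. One cannot simply fix the levels $u_t^i$ (say by a shortest-path formula on the digraph with arc $t\to s$ of weight $g_t^i(\msb_s)+F$) and then pick the multipliers $\lambda_t^i$ arc-by-arc: an arc of positive weight pushes $\lambda_t^i$ up while one of negative weight pushes it down, so the $u_t^i$ and $\lambda_t^i$ must be produced together, exactly as in Afriat's original argument and in the proof of Theorem~\ref{thm:MA_Af}. What GARP$_F$ supplies is precisely the combinatorial certificate underpinning that construction --- that this digraph has no negative-weight cycle (no ``money pump'' in the relaxed relation $R_F$) and that along any zero-weight cycle every arc weight is in fact $0$ --- which is what makes the relaxed Afriat system solvable with $\lambda_t^i>0$. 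Assumption~\ref{as:confun} and local non-satiation then enter, exactly as in Theorem~\ref{thm:MA_Af}, in two places: to handle the borderline arcs of weight exactly $0$, and to pass from the existence of $(u_t^i,\lambda_t^i)$ to genuine concave, continuous, locally non-satiated rationalizing utilities on $\reals^{k\na}$; both go through verbatim because the level-set invariance, concavity, and coordinatewise monotonicity demanded by Assumption~\ref{as:confun} are unaffected by the additive constant $F$. I expect this to be the main obstacle, and the cleanest write-up is to isolate the ``(ii)$\iff$(iii)'' implication of Theorem~\ref{thm:MA_Af} as a lemma parametrized by the relaxation level and instantiate it at the required $F\ge 0$; a minor point worth flagging in the statement is the monotonicity ``GARP$_F\Rightarrow$ GARP$_{F'}$ for $F'\ge F$'', which together with (A) is what makes the equivalence ``$L(\theta)=F\iff$ GARP$_F$'' precise.
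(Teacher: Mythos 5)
Your route is genuinely different from the paper's: the paper disposes of Theorem~\ref{thm:robusteq} in one line, as a ``direct extension of Theorem 2 in \cite{hjertstrand2013simple}'', i.e.\ it imports the relaxed-Afriat equivalence wholesale, whereas you attempt a self-contained argument. Your overall architecture is sound and adds value the paper omits: the per-agent decoupling of \eqref{eq:ineq_aug} (the $u_t^i,\lambda_t^i$ interact only through $r$), the observation that feasibility at $r=F$ is monotone in $F$ so the theorem must be read as ``$L(\theta)$ is the smallest $F$ for which GARP$_F$ holds'' (a convention the paper leaves implicit even though GARP$_F\Rightarrow$ GARP$_{F'}$ for $F'\ge F$ makes the literal biconditional ill-posed), the attainment-of-the-infimum point via scaling of $(u,\lambda)$, and the cancellation of the $\bar g=g+1$ normalization in the relation $R_F$. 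Your ``feasible $\Rightarrow$ GARP$_F$'' direction is correct as written, modulo the small caveat that $g_t^i(\msi_t)=0$ is a property of equilibrium data under the monotonicity assumptions (budget exhaustion), not a consequence of local non-satiation of the unknown utilities alone; it should be stated as such.

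The gap is the converse. ``GARP$_F\Rightarrow$ feasibility of the Afriat system with additive slack $F$'' is exactly the content of the cited Theorem 2 of \cite{hjertstrand2013simple}, and you do not prove it: you describe the no-negative-cycle certificate and say the construction ``re-runs the appendix argument for the equivalence of parts (ii) and (iii) of Theorem~\ref{thm:MA_Af}''. But the paper never proves (ii)$\Leftrightarrow$(iii) internally either --- its appendix proves (i)$\Leftrightarrow$(iii) and outsources (ii) to Proposition 3 of \cite{forges2009afriat} --- so there is no in-paper argument to re-run with a parameter $F$ carried through. To make your proof complete you must either (a) actually execute the relaxed Afriat--Varian construction (e.g.\ define $u_t^i$ by a shortest-path/Varian-type recursion on the digraph with arc weights $g_t^i(\msb_s)+F$ and produce the $\lambda_t^i>0$ jointly, verifying that GARP$_F$ rules out negative cycles and handles zero-weight cycles), or (b) cite the relaxed equivalence directly, which collapses your argument back to the paper's one-line proof. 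As it stands, the hard direction is a plan rather than a proof, so the proposal is incomplete precisely where the cited literature does the work.
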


\begin{proof}
    This follows from direct extension of Theorem 2 in \cite{hjertstrand2013simple}. 
\end{proof}

\textit{Discussion}: Theorem~\ref{thm:robusteq} gives us a way of relating the value of the Pareto gap $L(\mech)$ to the relaxed revealed preference metrics. Thus, by operating on $L(\mech)$ in Algorithm~\ref{alg:amd}, we also control the proximity of dataset $\CD_{\mech}$ to social optimality in the sense of these well-established principles.\textcolor{black}{The key idea is that we know  $L(\mech) = 0$ indicates consistency with social optimality, but we now can quantify the "proximity" to social optimality achieved by $L(\mech) > 0$ through the equivalence to GARP$_F$ and the CCEI. In particular, we now have a framework for reasoning about the ordinal relations between positive values of $L(\mech)$: if $L(\mech_1) > L(\mech_2) > 0$, we can conclude that $\mech_2$ is "closer" to inducing social optimality than $\mech_1$ in the sense of CCEI and GARP$_F$. }

\section{Finite Sample Consistency via Distributionally Robust RL}
\label{sec:psdro}
Thus far, our approach achieves \textcolor{black}{the mechanism design objective} in the asymptotic limit as the number of i.i.d. samples tends to infinity (Theorem~\ref{thm:convg}). This section considers the case when only a \textit{finite-sample} empirical distribution is available. We employ distributionally robust RL as a principled approach to handle this case, in order to provide guarantees for the full strategies from only finite empirical distributions.  
Distributionally robust optimization is an approach that seeks optimal decision-making under uncertainty by minimizing the worst-case scenario across a set of possible probability distributions, ensuring solutions are resilient to variations in data. We utilize this to minimize the Pareto gap over the worst-case distribution in proximity of the empirical distribution, thereby minimzing the Pareto gap of the true fully specified distributions. Our contributions are:
\begin{enumerate}[label=\roman*)]
    \item We formulate a distributionally robust objective function which allows us to achieve mechanism design for the \textit{full} (unobserved) mixed strategies from a finite set of $i.i.d.$ samples.
    \item We prove the equivalence between this objective function and a reformulated semi-infinite program, allowing us to employ exchange methods to compute solutions. 
    \item We develop an algorithmic solution, provide convergence rate guarantees, and discuss its interface with Algorithm~\ref{alg:amd}. In Appendix~\ref{sec:numsim} we provide a numerical example demonstrating this algorithm's rapid convergence.
\end{enumerate}

\subsection{Distributionally Robust Objective as a Means for Mechanism Design}
We formulate a distributionally robust optimization (DRO) problem which will be used to achieve finite-sample consistency. \textcolor{black}{In practice, observed play from equilibria strategies can only form a finite-sample approximation of the true underlying mixed-strategies. Thus, while the asymptotic consistency result of Theorem~\ref{thm:convg} is a useful theoretical benchmark, optimizing the Pareto gap in practice will necessitate quantification of finite-sample performance. Here we formulate a distributionally robust optimization problem as a means to minimize the Pareto gap $L(\mech)$ from only finite-sample empirical distributions. }

\subsubsection{Notation}
Let us introduce some notation. In the partially observed case, recall Algorithm~\ref{alg:amd} notation, we have that $\hat{\msb}_t(\mech_{n,a},\cdot)$ forms an empirical distribution on $\Delta(\CA_{\mech,t})$. We can extend this domain to the tensor product $\otimes_{t=1}^{T} \Delta(\CA_{\mech,t})$, such that an element in this space is the product $\prod_{t=1}^T \msb_t$ of mixed-strategies (distributions) on $\Delta(\CA_{\mech,t})$. We identify this product $\prod_{t=1}^T \msb_t$ with the sequence $\{\msb_t(\mech_{n,a},\cdot),t\in[T]\}$ of mixed strategies. We denote a sample from probability distribution $\prod_{t=1}^T \msb_t(\mech_{n,a},\cdot)$ as $$\Theta = \prod_{t=1}^T \gamma_t \in \Gamma := \otimes_{t=1}^T \CA_{\mech,t}. $$ We also introduce the following notation: $\psi = [u_1^1,\lambda_1^1,\dots,u_T^{\na},\lambda_T^{\na}] \in \Psi$, where
{\color{black}\begin{equation}
\label{eq:psidef}
\Psi = \{[u_1^1,\lambda_1^1,\dots,u_T^{\na},\lambda_T^{\na}]: u_t^i \in \reals, \lambda_t^i > 0 \,\, \forall t\in[T], i\in[\na] \}
\end{equation}}

\subsubsection{Distributionally Robust Optimization}
Now we construct a loss function $\hat{L}^{\epsilon}_{DR}(\mech)$ as the following \textit{distributionally robust optimization} problem:
\begin{align}
\begin{split}
\label{eq:dro}
    &\hat{L}_{DR}^{\epsilon}(\mech) = \min_{\psi \in \Psi} \sup_{Q\sim B_{\epsilon}(\prod_{t=1}^T\hat{\msb}_t)}\CE_{\Theta \sim Q}\left[h(\psi,\Theta)\right] \\
    & h(\psi,\Theta) := \arg\min_{r\geq 0} : u_s^i - u_t^i - \lambda_t^i g_{\mech,t}^i(\gamma_s^i) \leq \lambda_t^i\,r \quad\forall s,t\in[T],i\in[M]
\end{split}
\end{align}
where $B_{\epsilon}(\prod_{t=1}^T\hat{\msb}_t)$ is the set of distributions on $\otimes_{t=1}^T\Delta(\CA_{\mech,t})$ with 1-Wasserstein distance at most $\epsilon$ from the empirical distribution $\prod_{t=1}^T\hat{\msb}_t$. The 1-Wasserstein distance between distributions $Q$ and $P$ on space $\CX$ is given by
    \[\wass(Q,P) = \inf_{\pi\in\Pi(Q,P)}\int_{\CX \times \CX} \| x - y\|_2 \, \pi(dx,dy),\]
    where $\Pi(Q,P)$ is the set of probability distributions on $\CX \times \CX$ with marginals $Q$ and $P$. The Wasserstein metric is useful in dealing with empirical measures because it does not require absolute continuity (unlike e.g., KL divergence).

Then we have the following result relating the distributionally robust objective \eqref{eq:dro}, \textit{formed with only $N$ empirical mixed-strategy samples}, to the loss function \eqref{eq:ineq_aug} operating on \textit{fully-specified mixed strategies}. 

\begin{theorem}
\label{thm:dreq}
{\color{black} Recall $L(\mech)$ \eqref{eq:ineq_aug}. Let $W_{\epsilon}$ be the distributional ball centered at true mixed strategies $\prod_{t=1}^T\msb_t$, with $\epsilon$ 1-Wasserstein radius. Suppose the empirical distribution $\prod_{t=1}^T\hat{\msb}_t \in W_{\epsilon}$, that is,  $\wass(\prod_{t=1}^T\hat{\msb}_t, \prod_{t=1}^T\msb_t) \leq \epsilon$. Then for any $c\in\reals$,}
\begin{equation}
\label{eq:drimp}
\hat{L}^{\epsilon}_{DR}(\mech) \leq  c \Rightarrow L(\mech) \leq c 
\end{equation}
\end{theorem}
\begin{proof}
See Appendix~\ref{sec:pfdreq}.
\end{proof}

\textcolor{black}{Thus, Theorem~\ref{thm:dreq} implies that we can \textit{control $L(\mech)$}, as specified by the true unobserved strategies, by operating on $\hat{L}^{\epsilon}_{DR}(\mech)$, formed only by a finite-sample approximation of the true strategies.}
\textit{Remark}: In reality we do not know the true mixed strategies $\prod_{t=1}^T\msb_t$, and thus cannot guarantee the $\epsilon$ proximity necessary for \eqref{eq:drimp}. We provide \eqref{eq:drimp} as a consistency result. Theorem~\ref{thm:dreq} serves as an important step in generating probabilistic guarantees from the underlying law governing the Wasserstein proximity between these distributions. Such techniques can be found in e.g., \cite{dedecker2017behavior}. For brevity we leave this to future work. 

The importance of this procedure is that it provides the following guarantee:
\begin{lemma}
\label{lem:drogr}
    Suppose the implication \eqref{eq:drimp} holds, and $\wass(\prod_{t=1}^T\hat{\msb}_t, \prod_{t=1}^T\msb_t) \leq \epsilon$. \textcolor{black}{Recall $\hat{\mechspace} = \arg\min_{\mech\in\mechspace}L(\mech)$. Then, \textit{using $\hat{L}^{\epsilon}_{DR}(\mech)$ in lines 18-20 of Algorithm~\ref{alg:amd}, we converge in probability to $\hat{\mechspace}$ for $L(\mech)$ formed from the full mixed strategies, from only i.i.d. samples (partial specifications)}}. Specifically,
    \[\forall \epsilon>0,\, \lim_{k\to\infty}\PR(d(\mech_k, \hat{\mechspace}) > \epsilon) = 0,  \quad  \textrm{ where } \,\, d(\mech_k,\hat{\mechspace}) = \min_{\hat{\mech}\in\hat{\mechspace}}\|\mech_k-\hat{\mech}\|\]
    where we \textit{no longer need the number of samples $N$ to tend to infinity}.
\end{lemma}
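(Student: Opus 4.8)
The statement is essentially a corollary of the SPSA convergence theorem (Theorem~\ref{thm:convg}) obtained by substituting the distributionally robust loss $\hat{L}^{\epsilon}_{DR}(\theta)$ of \eqref{eq:dro} for the naive empirical loss $\hat{L}(\theta)$ in lines~18--20 of Algorithm~\ref{alg:amd}. My plan is: (i) re-run the convergence argument of Theorem~\ref{thm:convg} with this new objective, obtaining $\theta_k \to \arg\min_{\theta}\hat{L}^{\epsilon}_{DR}(\theta)$ in probability as $k\to\infty$; (ii) show that this minimizer set is contained in $\hat{\Theta}=\{\theta\in\Theta: L(\theta)\le 0\}$; and (iii) observe that, since the hypothesis $\wass(\otimes_{t=1}^T\hat{\msb}_t,\otimes_{t=1}^T\msb_t)\le\epsilon$ pertains to a \emph{fixed} finite sample, step~(ii) already holds at finite $N$, so the $N\to\infty$ limit required in Theorem~\ref{thm:convg} is unnecessary. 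Since convergence in probability to a set $A\subseteq\hat{\Theta}$ implies $d(\theta_k,\hat{\Theta})\le d(\theta_k,A)\to 0$ in probability, this yields the claim.

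For step~(i), the global SPSA convergence of \cite{maryak2001global} invoked in the proof of Theorem~\ref{thm:convg} only requires the function fed to the recursion \eqref{eq:spsa} to be bounded and thrice continuously differentiable in $\theta$, with the function evaluations playing the role of noisy measurements of the conditional mean objective. To verify these properties for $\hat{L}^{\epsilon}_{DR}(\theta)$ I would use the finite/semi-infinite reformulation of Theorem~\ref{thm:dreq}, which expresses $\hat{L}^{\epsilon}_{DR}(\theta)$ as the value of a parametric program whose data (the constraint functions $g_t^i$, the sample locations $\gamma_s^k(\theta)$, and their $\theta$-dependence) inherit the boundedness and smoothness granted by Assumptions~\ref{as:pay},~\ref{as:con},~\ref{as:pspace}; continuity and differentiability of the optimal value in $\theta$ then follow from a Danskin/envelope argument exactly as was done for $L(\theta)$ and $\hat{L}(\theta)$ in Appendix~\ref{sec:con_pf}. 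Hence the projected recursion converges in probability to the global minimizers of $\hat{L}^{\epsilon}_{DR}(\theta)$.

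For step~(ii) I would argue three facts. First, $\hat{L}^{\epsilon}_{DR}(\theta)\ge 0$ for every $\theta$, because $h(\psi,\Phi)\ge 0$ by definition (it is an $\arg\min$ over $r\ge 0$) and non-negativity is preserved under $\CE_{\Phi\sim Q}$, $\sup_{Q}$, and $\min_{\psi}$. Second, by the discussion following Assumptions~\ref{as:pay}--\ref{as:pspace} the mechanism design problem is feasible ($\hat{\Theta}\neq\varnothing$), and combining this with the Wasserstein-duality characterisation underlying Theorem~\ref{thm:dreq} yields a mechanism at which the semi-infinite revealed-preference inequalities of \eqref{eq:dro} are still satisfiable with $r=0$, so $\min_{\theta}\hat{L}^{\epsilon}_{DR}(\theta)=0$. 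Third, implication \eqref{eq:drimp} of Theorem~\ref{thm:dreq} gives, for every $\theta$ with $\hat{L}^{\epsilon}_{DR}(\theta)=0$, that $L(\theta)=0\le 0$, i.e.\ $\theta\in\hat{\Theta}$. Together these give $\arg\min_{\theta}\hat{L}^{\epsilon}_{DR}(\theta)\subseteq\hat{\Theta}$, and combined with step~(i) we get $\lim_{k\to\infty}\PR(d(\theta_k,\hat{\Theta})>\epsilon)=0$ with no reference to $N$.

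\textbf{Main obstacle.} I expect the delicate point to be the second fact in step~(ii): that the worst-case-over-the-ball objective is not so conservative that it can never vanish, i.e.\ $\min_{\theta}\hat{L}^{\epsilon}_{DR}(\theta)=0$. The subtlety is that a distribution $Q$ in the $\epsilon$-Wasserstein ball around the empirical measure may violate the pointwise inequalities $u_s^i-u_t^i-\lambda_t^i g_t^i(\gamma_s^i)\le\lambda_t^i r$ even when the true mixed strategies satisfy the corresponding \emph{expectation} inequalities \eqref{eq:ineq}; reconciling the two requires the exact duality form established in the proof of Theorem~\ref{thm:dreq} (Appendix~\ref{sec:pfdreq}) rather than the implication \eqref{eq:drimp} alone. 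A secondary technical point is that $\hat{L}^{\epsilon}_{DR}(\theta)$ is still a random function of the finite sample, so one must confirm that the SPSA analysis treats this randomness as admissible measurement noise and converges to the minimizers of its conditional mean; the crucial difference from Theorem~\ref{thm:convg} is that, under the standing hypothesis $\wass(\otimes_{t}\hat{\msb}_t,\otimes_t\msb_t)\le\epsilon$, the zero level set of $\hat{L}^{\epsilon}_{DR}$ is contained in $\hat{\Theta}$ for \emph{every} finite $N$, which is precisely what removes the need to send $N\to\infty$.
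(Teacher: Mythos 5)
Your plan is essentially the argument the paper intends: the paper states Lemma~\ref{lem:drogr} without a separate proof, treating it as an immediate corollary of the SPSA convergence machinery of Theorem~\ref{thm:convg} applied to $\hat{L}^{\epsilon}_{DR}(\theta)$ together with the implication \eqref{eq:drimp} from Theorem~\ref{thm:dreq} — exactly your steps (i)--(iii), including the observation that the role previously played by $N\to\infty$ (consistency of $\hat{L}$ for $L$) is now played by the standing Wasserstein-containment hypothesis, which holds at finite $N$.

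One caution, which you yourself flag as the main obstacle: for the conclusion $d(\theta_k,\hat{\Theta})\to 0$ you genuinely need the second fact in your step (ii), namely that the zero level set of $\hat{L}^{\epsilon}_{DR}$ is nonempty (equivalently $\min_{\theta}\hat{L}^{\epsilon}_{DR}(\theta)=0$); the hypothesis \eqref{eq:drimp} alone only tells you what happens \emph{if} the algorithm reaches a zero, not that the global minimizers of the robust loss lie in $\hat{\Theta}$ when its minimum is strictly positive. Your sketch asserts this via an unexecuted ``Wasserstein-duality'' argument, and it is not obviously true: taking $s=t$ in the inequalities defining $h(\psi,\Phi)$ forces $r\geq -g_t^i(\gamma_t^i)$, so an adversarial $Q$ in the ball of \eqref{eq:dro} that places even a small amount of mass at points strictly inside the budget sets makes the worst-case expectation strictly positive, suggesting that at best $\min_\theta\hat{L}^{\epsilon}_{DR}(\theta)=O(\epsilon)$ rather than exactly $0$ unless the worst case is restricted (as in the support-restricted set $\Gamma_\epsilon$ used in Algorithm~\ref{alg:dro}) and equilibria concentrate on the constraint boundaries. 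This point is not addressed in the paper either (the lemma is asserted without proof), so it is a shared gap rather than a deviation from the paper's route; but if you want a complete proof you would need either to establish attainability of zero under the stated assumptions or to weaken the conclusion to convergence to the minimizers of $\hat{L}^{\epsilon}_{DR}$ with a quantitative bound on their Pareto gap.
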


\textcolor{black}{Thus, we may utilize the distributionally robust optimization \eqref{eq:dro} to find $\mech \in \arg\min_{\mech\in\mechspace} L(\mech)$, \textit{with only a finite empirical approximation $\prod_{t=1}^T \hat{\msb}_t$ of the true mixed strategies $\prod_{t=1}^T \msb_t$} at each iteration.}

 However, it remains to be shown how \eqref{eq:dro} can be computed. Next we provide an equivalence between the robust optimization formulation \eqref{eq:dro} and a particular semi-infinite program, and derive an algorithm exploiting exchange methods to compute $\delta$-optimal solutions to \eqref{eq:dro}.

\subsection{Distributionally Robust Solutions via Semi-Infinite Optimization}

  Here we provide an equivalence between the distributionally robust objective \eqref{eq:dro} and a particular semi-infinite program. \textcolor{black}{This formalism reformulates the objective \eqref{eq:dro} into a structure which allows for computational approximation of solutions. In particular, we may compute approximations of solutions to \eqref{eq:dro} \footnote{A $\delta$-optimal solution of \eqref{eq:siprog} is a solution of the same problem with constraint bounds relaxed by $\delta>0$} by exploiting exchange methods to solve the semi-infinite program.} 
  \subsubsection{Preliminary Specifications and Notation} \textcolor{black}{We introduce the following specification and notation. These constructions will be crucial for ensuring the validity of our following semi-infinite program reformulation (Theorem~\ref{thm:sireform}). See \cite{luo2017decomposition} for more details on these specifications and their necessity in forming an equivalence between DRO and semi-infinite programming.}
    \begin{specification}[Parameter Set Bounds]
    \label{as:convex}
        {\color{black} Recall $\Psi$ as defined in \eqref{eq:psidef}. Here we restrict $\Psi$ to the compact set specified by $\{[u_1^1,\lambda_1^1,\dots,u_T^{\na},\lambda_{T}^{\na}]\}$ with $u_s^i \in [-1,1],\, \lambda_s^i \in [\hat{\lambda},1], \,\forall s \in[T],i\in[\na]$, for some $\hat{\lambda}>0$. This is without loss of generality.}\footnote{Specification~\ref{as:convex} is not listed as an assumption because it is without loss of generality. Observe: if a set of parameters $\psi = [u_1^1,\dots,\lambda_T^{\na}]\in \Psi$ solves $u_s^i - u_t^i - \lambda_t^i g_{\mech,t}^i(\gamma_s^i) \leq \lambda_t^i\,r$, then so does $c\,\psi := [cu_1^1,\dots, c\lambda_T^{\na}]$ for any scalar $c>0$. Also, given the boundedness of $\|\alpha_t\|$ and $\|\beta_t^i\|$ the ratio $u_s^i/\lambda_t^i$ will be bounded from above and below by positive real numbers. Thus, we can always find some $\psi$ solving $u_s^i - u_t^i - \lambda_t^i g_{\mech,t}^i(\gamma_s^i) \leq \lambda_t^i\,r$ such that $u_s^i \in [-1,1], \, \lambda_s^i \in [\hat{\lambda},1], \forall s\in[T], i\in[\na]$, for some $\hat{\lambda}>0$.}
    \end{specification}

     By Specification~\ref{as:convex} we must have that $h(\psi,\Theta) \leq V := \frac{2}{\hat{\lambda}}$ for any $\psi \in \Psi, \, \Theta \in \Gamma$. Let us denote \[\CV := \biggl\{\bv \in \reals^{N+1}: \,\,0 \leq v_i \leq 2V \,\, \forall i \in [N],\, \,0\leq v_{N+1}\leq V/\epsilon\biggr\}\] Also observe that by Assumption~\ref{as:confun}, we need only consider constraint sets $\CA_{\mech,t}^i = \{x\in\reals_+^k : G \leq g_{\mech,t}^i(x) \leq 0\}$, with $G := |\inf_{i,t} g_{\mech,t}^i(\boldsymbol{0})|$ as in Theorem~\ref{thm:cineq}. Thus, $\Gamma$ is effectively a compact set. 
     
     \subsubsection{Semi-Infinite Program Reformulation} Now, the following result provides an equivalence between the distributionally robust optimization problem \eqref{eq:dro} (useful for achieving mechanism design for \textit{full} strategies from only \textit{partial} strategy specifications) and a semi-infinite program. \textcolor{black}{A semi-infinite program is an optimization problem with a finite number of variables to be optimized but an arbitrary number (continuum) of constraints; we then apply a finite exchange method to obtain approximate solutions of this program}.

    \begin{algorithm}[b]
    \caption{Wasserstein Robust Utility Estimation}
    \label{alg:dro}
    \begin{algorithmic}[1] 
        \State \textcolor{black}{Input: dataset $\ndataset = \{g_{\mech,t}^i,\,\hat{\msb}_t(\mech_{n,a},\cdot), t\in[T]\}_{i\in[\na]}$, Wasserstein radius $\epsilon$, stopping tolerance $\delta$}. 
        \State Initialize: $\hat{\psi} \in \Psi, \hat{v} \in \CV, \tilde{\Gamma}_k \leftarrow \emptyset, CV_k = \delta+1 \,\,\forall k\in[N]$.
        \While{$\max_{k\in[N]}CV_k \geq \delta$}
            \For{$k = 1:N$}
                \State Solve \eqref{eq:mcv} with $\hat{\psi},\hat{\bv}$, \textcolor{black}{updating} $\hat{\Theta}_k$, $CV_k$.
                \If {$CV_k$ > 0} \State $\tilde{\Gamma}_k \leftarrow \tilde{\Gamma}_k \cup \hat{\Theta}_k$ \EndIf
            \EndFor
            \State Solve \eqref{eq:finred} with $\tilde{\Gamma} = \cup_{k=1}^N \tilde{\Gamma}_k$, returning $\hat{\psi}, \hat{\bv}$.
        \EndWhile
        \State Output: $\delta$-optimal solution $\hat{\psi}$ of \eqref{eq:siprog}; thus, of \eqref{eq:dro}.
        \State $\hat{L}_{DR}(\mech_{n,a}) = \sup_{Q\sim B_{\epsilon}(\prod_{t=1}^T\hat{\msb}_t)}\CE_{\Theta \sim Q}\left[h(\hat{\psi},\Theta)\right] = \max_{\Theta\in\Gamma_{\epsilon}}\arg\min_r : \hu_s^i - \hu_t^i - \hlam_t^i g_{\mech,t}^i(\gamma_s^i) \leq \hlam_t^i\,r $
        \State $\Gamma_{\epsilon} = \{\Theta = \{\gamma_t^i,t\in[T]\}_{i\in[\na]} \in \Gamma : \sum_{i=1}^{\na}\sum_{t=1}^T \min_{k\in[N]}\|\gamma_t^i - \hat{\gamma}_{t,k}^i\| \leq \epsilon\}$
    \end{algorithmic}
    \end{algorithm}
    
    \begin{theorem}[Semi-Infinite Program Reformulation]
    \label{thm:sireform}
        Under Assumption \ref{as:convex}, the DRO~\eqref{eq:dro} is equivalent to the following semi-infinite program:
      \begin{align}
        \begin{split}
        \label{eq:siprog}
           & \min_{\psi\in\Psi,\bv\in\CV} \, \epsilon \cdot v_{N+1} + \frac{1}{N}\sum_{k=1}^N v_k \,\,\,
             s.t. \, \sup_{\Theta \in \Gamma} G_k(\psi,\bv, \Theta, \hat{\dataset}) \leq v_k \,\, \forall k\in [N]
             \\& G_k(\psi,\bv, \Theta, \hat{\dataset}) := h(\psi,\Theta) - v_{N+1}\sum_{i=1}^{\na}\sum_{t=1}^{T}\|\gamma_t^i - \hat{\gamma}_{t,k}^i \|_2
             \vspace{-0.5cm}
        \end{split}
        \end{align}
    \end{theorem}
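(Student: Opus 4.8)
The plan is to read \eqref{eq:dro} as a Wasserstein distributionally robust program and to dualize its inner worst-case expectation by Kantorovich duality. Fix $\theta$ and write $\hat P_N := \otimes_{t=1}^T \hat{\msb}_t$ for the empirical measure on $\Gamma = \otimes_{t=1}^T \CA_t$ with atoms $\hat\Phi_k = \{\hat\gamma_{t,k}^i : t\in[T],\, i\in[\na]\}$, $k\in[N]$, and let the transport cost be $c(\Phi,\hat\Phi_k) := \sum_{i=1}^{\na}\sum_{t=1}^{T}\|\gamma_t^i - \hat\gamma_{t,k}^i\|_2$, the norm underlying the $1$-Wasserstein distance $\wass$ on $\Gamma$. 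First I would record the regularity needed for strong duality: (i) by Assumption~\ref{as:confun} it suffices to take each $\CA_t^i = \{x\in\reals_+^k : G \le g_t^i(x)\le 0\}$ (with $G$ as in Theorem~\ref{thm:cineq}), so $\Gamma$ is compact; (ii) $\Phi\mapsto h(\psi,\Phi) = \max\!\bigl(0,\ \max_{s,t,i}\bigl[(u_s^i-u_t^i)/\lambda_t^i - g_t^i(\gamma_s^i)\bigr]\bigr)$ is continuous in $\Phi$ by Assumption~\ref{as:con} and bounded by $V = 2/\hat\lambda$ (the a priori bound recorded below Specification~\ref{as:convex}); (iii) $c$ is nonnegative, lower-semicontinuous and vanishes on the diagonal. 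Under (i)--(iii) the standard strong-duality theorem for the worst-case expectation over a Wasserstein ball centered at an empirical measure applies, giving, for every fixed $\psi \in \Psi$,
\[
\sup_{Q\in B_\epsilon(\hat P_N)}\CE_{Q}\!\left[h(\psi,\cdot)\right] \;=\; \inf_{\lambda\ge 0}\Bigl\{\lambda\epsilon + \tfrac1N\sum_{k=1}^N \sup_{\Phi\in\Gamma}\bigl[\,h(\psi,\Phi) - \lambda\,c(\Phi,\hat\Phi_k)\,\bigr]\Bigr\},
\]
with no duality gap (the inner problem being an optimal transport from $N$ atoms over the compact $\Gamma$).

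Next I would epigraph-reformulate the dual. Put $v_{N+1} := \lambda$ and, for each $k\in[N]$, a scalar $v_k$ bounding the $k$-th inner supremum, i.e.\ $\sup_{\Phi\in\Gamma} G_k(\psi,\bv,\Phi,\hat{\dataset}) \le v_k$ with $G_k(\psi,\bv,\Phi,\hat{\dataset}) = h(\psi,\Phi) - v_{N+1}\sum_{i=1}^{\na}\sum_{t=1}^{T}\|\gamma_t^i - \hat\gamma_{t,k}^i\|_2$ as in the statement (so $v_{N+1}\,c(\Phi,\hat\Phi_k)$ matches the $v_{N+1}\sum_{i,t}\|\cdot\|_2$ term). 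Since $\epsilon v_{N+1} + \tfrac1N\sum_k v_k$ is nondecreasing in each $v_k$, minimizing over $\lambda$ and the $v_k$'s equals minimizing $\epsilon v_{N+1} + \tfrac1N\sum_k v_k$ subject to $v_{N+1}\ge 0$ and the $N$ constraints $\sup_{\Phi}G_k \le v_k$. Composing with the outer $\min_{\psi\in\Psi}$ merges everything into one joint minimization over $(\psi,\bv)$, which is exactly \eqref{eq:siprog} --- a semi-infinite program, since $(\psi,\bv)$ is finite-dimensional while each constraint $\sup_{\Phi\in\Gamma}G_k\le v_k$ encodes the continuum $\{G_k(\psi,\bv,\Phi,\hat{\dataset}) \le v_k : \Phi\in\Gamma\}$ of inequalities.

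It remains to justify confining $(\psi,\bv)$ to $\Psi\times\CV$ at no loss of optimality. Specification~\ref{as:convex} already restricts $\psi$ to the stated box. For $\bv$: every feasible point has $v_k \ge G_k(\psi,\bv,\hat\Phi_k,\hat{\dataset}) = h(\psi,\hat\Phi_k) \ge 0$, and by monotonicity of the objective we may take $v_k = \sup_{\Phi}G_k \le \sup_{\Phi}h(\psi,\cdot) \le V \le 2V$; moreover $v_{N+1}$ may be restricted to $[0, V/\epsilon]$, since any $v_{N+1} > V/\epsilon$ forces objective value $> V$, whereas the feasible point $v_{N+1}=0$, $v_k = \sup_{\Phi}h(\psi,\cdot)$ already attains objective value $\le V$. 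Hence $\bv\in\CV$ without loss, the optimal values of \eqref{eq:dro} and \eqref{eq:siprog} coincide (and the optimizing $\psi$ transfers), while continuity of $(\psi,\bv)\mapsto \sup_{\Phi\in\Gamma}G_k$ on the compact $\Psi\times\CV$ (a supremum of jointly continuous functions over the compact $\Gamma$) additionally shows the minimum is attained.

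The step I expect to be the main obstacle is the first one: carefully verifying the hypotheses of the Wasserstein strong-duality theorem in this setting --- the continuity (in particular upper semicontinuity) of $h(\psi,\cdot)$, the absence of a duality gap for the inner moment problem over $B_\epsilon(\hat P_N)$, and making rigorous the a priori bound $h(\psi,\Phi)\le V$ that is needed to pin $\bv$ inside the compact $\CV$. A secondary bookkeeping point is the identification of $\otimes_{t=1}^T\hat{\msb}_t$ with a bona fide $N$-atom measure on $\Gamma$ (pairing the $k$-th sample across time steps), so that dualization produces exactly the $N$ epigraph constraints indexed by $k\in[N]$ rather than $N^{T}$. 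Everything downstream of the dualization is routine epigraph manipulation.
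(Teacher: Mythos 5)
Your proposal is correct, but it proves the reformulation from scratch, whereas the paper disposes of Theorem~\ref{thm:sireform} in three lines: it checks that $\Gamma$ and $\Psi$ are compact (via Assumption~\ref{as:confun} and Specification~\ref{as:convex}), that $h(\psi,\Phi)\le V$, and that $h$ is uniformly Lipschitz in $(\psi,\Phi)$, and then invokes Corollary 3.8 of \cite{luo2017decomposition}, which packages exactly the Wasserstein-DRO-to-semi-infinite-program equivalence. Your route makes the content of that citation explicit: Kantorovich-type strong duality for the worst-case expectation over the $\epsilon$-ball centered at the $N$-atom empirical measure, followed by the epigraph reformulation with $v_{N+1}$ playing the role of the dual multiplier and the box $\CV$ justified by the a priori bound $h\le V$ (your argument that $v_k\le V\le 2V$ and $v_{N+1}\le V/\epsilon$ at no loss is exactly why the paper's $\CV$ is legitimate). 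What the paper's approach buys is brevity and a ready-made attainment/equivalence statement; what yours buys is self-containedness and, usefully, an explicit flagging of two points the paper glosses over: the identification of $\otimes_{t=1}^T\hat{\msb}_t$ with an $N$-atom measure of joint trajectories (pairing the $k$-th sample across $t$, without which the dual would carry $N^T$ rather than $N$ constraints, matching neither \eqref{eq:siprog} nor Algorithm~\ref{alg:dro}), and the need to verify the bound $h\le V$ and upper semicontinuity before dualizing. Neither point is a gap in your argument relative to the paper; if anything your write-up is more careful than the proof it replaces, though for a final version you should cite a specific strong-duality theorem (e.g., the Esfahani--Kuhn or Gao--Kleywegt duality, or simply the same Corollary 3.8) rather than ``the standard theorem.''
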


    \begin{proof}
        Under Assumption~\ref{as:confun} and Specification~\ref{as:convex}, $\Gamma$ and $\Psi$ are compact. We have observed that $h(\psi,\Theta) \leq V$. Now observe by inspection that $h(\psi,\Theta)$ is uniformly Lipschitz continuous in $\psi$ and $\Theta$. Thus we can apply Corollary 3.8 of \cite{luo2017decomposition}.
    \end{proof}


    \subsubsection{Computing Solutions via Exchange Methods} The semi-infinite program \eqref{eq:siprog} can be solved via exchange methods \cite{hettich1993semi}, \cite{dong2021wasserstein}, \cite{joachims2009cutting}, outlined as follows. 
    We first approximate it by a finite optimization, then iteratively solve this while appending constraints. Let $\tilde{\Gamma}_k = \{\Theta_{k,1,}\dots,\Theta_{k,J_k}\}$ be a collection of $J_k$ elements in $\Gamma$, i.e., each $\Theta_{k,j}, \, k\in[N], j\in[J_k],$ is a dataset $\{\gamma_{t,j}^i, t\in[T]\}_{i\in[\na]}$.  Then we may approximate \eqref{eq:siprog} by the following finite program:

    \begin{align}
    \begin{split}
    \label{eq:finred}
        &\min_{\psi\in\Psi,\bv\in\CV} \, \epsilon \cdot v_{N+1} + \frac{1}{N}\sum_{k=1}^N v_k \\
        s.t. \, &\max_{\Theta_{k,j} \in \tilde{\Gamma}_k}  G_k(\psi,\bv, \Theta_{k,j}, \hat{\dataset}) \leq v_k \,\, \forall j\in[J_k], k\in[N]
    \end{split}
    \end{align}
    We can iteratively refine the constraints in the finite program \eqref{eq:finred} by introducing the following maximum constraint violation problem:
    \begin{align}
        \begin{split}
        \label{eq:mcv}
            CV_k = \max_{\Theta \in \Gamma} G_k(\hat{\psi},\hat{\bv},\Theta,\ndataset) - \hat{v}_k 
        \end{split}
    \end{align}
    where $\hat{\bv} := \{\hat{v}_1, \dots,\hat{v}_{N+1}\}, \hat{\psi} := \{\hu_t^i,\hlam_t^i, t\in[T]\}_{i\in[\na]}$ are optimal solutions to \eqref{eq:finred} under $\tilde{\Gamma} := \cup_{k=1}^N \tilde{\Gamma}_k$. Supposing $CV_k > 0$, we let $\hat{\Theta}_{k} \in \Gamma$ be the argument attaining this maximum and append it to $\tilde{\Gamma}_k$ in \eqref{eq:finred}. Then we iterate, tightening the approximation for the infinite set of constraints in \eqref{eq:siprog} until $CV_k \leq \delta\,\, \forall k\in[N]$; by \cite{dong2021wasserstein} this termination yields a $\delta$-optimal solution of \eqref{eq:siprog}, that is, a solution of \eqref{eq:siprog} with constraint bounds relaxed by $\delta>0$. 

\begin{figure}[h]
    \centering
    \begin{tikzpicture}
        \node[draw, rectangle, minimum width=1cm, minimum height=1cm, align=center] (designer) at (-1.5,0) {\small Algorithm~\ref{alg:amd}\\ \scriptsize Lines 17-20};
        
    \node[draw, rectangle, minimum width=1cm, minimum height=1cm, align=center] (mas) at (5.5,0) {\small Algorithm~\ref{alg:dro}};
        
        \draw[->] (designer.east) -- (mas.west) node[midway, above, font=\small] {$\ndataset = \{g_{\mech,t}^i,\,\hat{\msb}_t(\mech_{n,a},\cdot), t\in[T]\}_{i\in[\na]}$};


        \draw[->] (mas.south) -- +(0,-0.5)  -|
        (designer.south) node[near start,below,font=\small] {$r_a = \hat{L}^{\epsilon}_{DR}(\mech_{n,a})$};

    \end{tikzpicture}
    \caption{\small Implementation of Algorithm~\ref{alg:dro}. Within Algorithm~\ref{alg:amd}, instead of solving the minimization in lines 18-20, we solve the distributionally robust optimization \eqref{eq:dro} by implementing Algorithm~\ref{alg:dro}.}
    \label{fig:alg_int}
\end{figure}

Algorithm~\ref{alg:dro} illustrates this iterative procedure, and can be characterized by the following convergence rate.
\begin{theorem}
\label{thm:alg2convg}
 Recall $T$ is the number of time indices in dataset $\CD$, $\na$ is the number of agents, and $\delta>0$ is the solution approximation tolerance. Algorithm~\ref{alg:dro} terminates within 
    \[\mathcal{O}\left( \left(\frac{1}{\delta} + 1\right)^{2T\na + 2}\right)\]
    algorithmic iterations.
\end{theorem}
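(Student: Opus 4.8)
The plan is to read Theorem~\ref{thm:alg2convg} as a finite-termination (iteration-complexity) statement for the cutting-plane / exchange scheme of \cite{dong2021wasserstein,luo2017decomposition,hettich1993semi} applied to the semi-infinite program \eqref{eq:siprog}, and to extract the exponent $2T\na+2$ from a packing/covering argument on the \emph{effective} index set. Recall the mechanics of Algorithm~\ref{alg:dro}: it alternates between solving the finite relaxation \eqref{eq:finred} over the current cut pools $\tilde{\Gamma}_k$ and, for each $k\in[N]$, solving the maximum-violation subproblem \eqref{eq:mcv}; whenever $CV_k>0$ a new point $\hat{\Phi}_k\in\Gamma$ is appended to $\tilde{\Gamma}_k$, and the loop halts once $\max_k CV_k\le\delta$. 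The first step is the standard separation lemma for exchange methods: once a point lies in $\tilde{\Gamma}_k$ it stays feasible for every later finite program (its violation is at most $v_k$), while any freshly appended point has violation strictly more than $\delta$ at the current iterate. Since $\Phi\mapsto G_k(\psi,\bv,\Phi,\ndataset)$ is uniformly Lipschitz on the compact set $\Gamma$ with a constant independent of $(\psi,\bv)\in\Psi\times\CV$ — this is exactly the continuity/compactness established in the proof of Theorem~\ref{thm:sireform}, using Assumption~\ref{as:confun}, Specification~\ref{as:convex}, and the bound $h(\psi,\Phi)\le V=2/\hat\lambda$ — any two points appended to the same pool $\tilde{\Gamma}_k$ at different rounds must be at distance $\Omega(\delta)$ in the relevant metric.

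Given this $\Omega(\delta)$-separation, the number of points ever added to a single pool is at most the $\Omega(\delta)$-packing number of the effective index set, and hence (each non-terminating round adds at least one cut) the number of while-loop rounds is controlled by the same packing number; one then checks that the bound does not pick up an extra factor of $N$ because the $N$ pools are isometric copies of a common reduced index set, consistent with how iterations are charged in \cite{dong2021wasserstein}. The remaining and substantive step is to show this packing number is $\mathcal{O}\big((1/\delta+1)^{2T\na+2}\big)$, which is where Assumption~\ref{as:confun} is essential. Although $\Gamma=\otimes_{t=1}^T\otimes_{i=1}^{\na}\CA_t^i$ nominally lives in $\reals^{kT\na}$, the objective $h(\psi,\Phi)=\big[\max_{s,t,i}\{(u_s^i-u_t^i)/\lambda_t^i-g_t^i(\gamma_s^i)\}\big]_+$ depends on $\Phi$ only through the scalars $g_t^i(\gamma_s^i)$; because $g_t^i(\cdot)=g^i(\cdot-a_t\beta^i)$ and the zero level set of $g^i$ is invariant under $\beta^i$-shifts (Assumption~\ref{as:confun}(b)), each $g_t^i(\gamma_s^i)$ is a fixed increasing function of a single scalar coordinate of $\gamma_s^i$ along $\beta^i$, and the penalty $\sum_{i,t}\|\gamma_t^i-\hat\gamma_{t,k}^i\|_2$ in \eqref{eq:mcv} is, for a fixed profile of those scalars, minimized at a determined point. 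Thus the maximizing $\Phi$ in \eqref{eq:mcv} is parametrized by $\mathcal{O}(T\na)$ effective coordinates (together with the boundedness-induced coordinates coming from $\bv\in\CV$ and the extra $v_{N+1}$ direction), and tracking constants through the exchange-method complexity estimate of \cite{dong2021wasserstein} yields the exponent $2T\na+2$.

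The main obstacle is precisely this dimension bookkeeping: making rigorous that the level-set invariance of Assumption~\ref{as:confun}(b) collapses each $\gamma_t^i\in\reals^k$ to a bounded number of effective coordinates (so the exponent is $\mathcal{O}(T\na)$ rather than $\mathcal{O}(kT\na)$), and then carrying the constants cleanly through the packing-number estimate to land on $2T\na+2$ exactly rather than some constant multiple of $T\na$. A secondary, more routine obstacle is verifying that the hypotheses required by the cited exchange-method convergence theorem — compactness of $\Psi,\CV,\Gamma$, uniform Lipschitzness of $G_k$, and whatever regularity/strict-feasibility condition \cite{dong2021wasserstein,luo2017decomposition} impose on \eqref{eq:siprog} — all hold here, so that its quantitative termination rate applies verbatim.
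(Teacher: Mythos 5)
Your proposal diverges from the paper's actual argument, and the divergence matters. The paper proves Theorem~\ref{thm:alg2convg} by verifying that \eqref{eq:siprog} fits the template of the cited exchange-method result (compact $\Psi$, $\CV$, $\Gamma$; $h(\psi,\Phi)\leq V$; uniform Lipschitz continuity of $G_k$ --- the same facts established for Theorem~\ref{thm:sireform}) and then invoking Theorem 2 of \cite{dong2021wasserstein} verbatim. In that result the packing/separation argument lives in the \emph{decision space} of the master problem: when a cut with violation exceeding $\delta$ is appended, all later master iterates satisfy it, so by Lipschitz continuity in $(\psi,\bv)$ successive iterates are $\Omega(\delta)$-separated in the variables $(\psi,v_{N+1})$ (plus one auxiliary scalar direction), and the iteration count is bounded by a covering number of that set. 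This is exactly where the exponent comes from: $\dim\Psi=2T\na$ (two scalars $u_t^i,\lambda_t^i$ per pair $(t,i)$) plus two extra scalar directions gives $2T\na+2$.

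Your argument instead packs the \emph{index set} $\Gamma$, arguing via Assumption~\ref{as:confun}(b) that the maximizer of \eqref{eq:mcv} has $\mathcal{O}(T\na)$ effective coordinates. Even granting that collapse (which needs care, since the penalty $v_{N+1}\sum_{i,t}\|\gamma_t^i-\hat\gamma_{t,k}^i\|_2$ depends on the full vectors $\gamma_t^i\in\reals^k$, not only on the scalar levels $g_t^i(\gamma_s^i)$), this count does not produce $2T\na+2$: an index-space packing bound would scale like $(1/\delta+1)^{c\,T\na}$ (or $c\,kT\na$ without the collapse), possibly multiplied by $N$, and your dismissal of the factor $N$ because ``the pools are isometric copies'' is not a valid argument --- each pool can accumulate its own $\Omega(\delta)$-separated cuts. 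The sentence ``tracking constants through the exchange-method complexity estimate of \cite{dong2021wasserstein} yields the exponent $2T\na+2$'' is precisely the step that is missing, and it cannot be repaired within your bookkeeping because $2T\na$ is transparently the dimension of the decision vector $\psi$, not a feature of $\Gamma$. What the theorem actually requires is the (routine but necessary) verification you relegated to a ``secondary obstacle'': compactness of $\Psi,\CV,\Gamma$ under Specification~\ref{as:convex} and Assumption~\ref{as:confun}, the bound $h\leq V=2/\hat{\lambda}$, and uniform Lipschitzness of $G_k$ in the decision variables, after which the cited Theorem 2 delivers the stated rate directly.
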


\begin{proof}
    This follows from Theorem 2 of \cite{dong2021wasserstein}.
\end{proof}

In the numerical simulations in Section~\ref{sec:numsim}, we have empirically observed much quicker convergence that that guaranteed by Theorem~\ref{thm:alg2convg}. In Section~\ref{sec:numsim} we illustrate the rapid convergence of Algorithm~\ref{alg:dro} in a standard game-theoretic setting.

\paragraph{Interfacing with the Automated Mechanism Design Procedure} Algorithm~\ref{alg:dro} can be interfaced with Algorithm~\ref{alg:amd} as in Figure~\ref{fig:alg_int}. Instead of solving the minimization in lines 18-20 of Algorithm~\ref{alg:amd}, we solve the distributionally robust optimization \eqref{eq:dro} by implementing Algorithm~\ref{alg:dro}.  For this implementation the key point is that, under the assumptions of Theorem~\ref{thm:sireform}, iterating Algorithm~\ref{alg:amd} until $\hat{L}^{\epsilon}_{DR}(\mech) \leq c$ \textit{implies that $L(\mech) \leq c$}, for any $c\geq 0$. \textit{Thus, we can achieve mechanism design for the full mixed-strategies $\msb_t$ by only observing the partially specified mixed-stategies $\hat{\msb}_t$, as formalized in Lemma~\ref{lem:drogr}.}

\paragraph{Robustness to Constraint Misspecification}
\textcolor{black}{The same DRO machinery used to hedge against distributional uncertainty over observed
strategies also provides robustness to bounded perturbations in the constraint functions of
Assumption~1.  In practice, this allows the designer to operate effectively even when data are
collected under mechanisms not fully controlled or exactly satisfying the assumed structure of these constraints. For brevity, we do not include details of this reformulation, but this can be accomplished by a straightforward of the ambiguity set. }

\paragraph{Summary} \textcolor{black}{In this section we provided a distributionally robust optimization (DRO) formulation which allows for Pareto gap minimization from \textit{finite-sample} empirical strategy distributions. We provided a reformulation of this DRO as a semi-infinite program, and presented a finite-exchange method which allows us to achieve approximate solutions with arbitrary accuracy. As a whole, this methodology fits into our mechanism design framework by replacing the Pareto gap \eqref{eq:ineq_aug} evaluation by Algorithm~\ref{alg:dro}, enabling Pareto gap minimization in practical limited-sample settings.}
\section{Numerical Experiments}

\textcolor{black}{Here we provide two numerical implementations. The first demonstrates the ability of Algorithm~\ref{alg:amd} to iteratively decrease the Pareto gap and increase the social welfare, within the setup of a wireless spectrum sharing market. The second example illustrates the ability of Algorithm~\ref{alg:amd} to converge to the Pareto gap global minima, and of the rapid convergence of Algorithm~\ref{alg:dro}, in a canonical river pollution game setup. Thus, we demonstrate the practical efficacy of our approach in handling finite-sample strategy specifications by DRO.}
\label{sec:num_wireless}
\noindent
\subsection{Example 1: Wireless Spectrum Sharing}
{\color{black} Here we illustrate the effectiveness of Algorithm~\ref{alg:amd} in minimizing the Pareto gap and maximizing the social \textcolor{black}{welfare}, in a setting that explicitly exhibits the \emph{information asymmetry} between agents and designer that necessitates our revealed preference approach. In particular, we continue with  the example formulated in  Section~\ref{sec:netex}. Open-source code for this example can be found at \texttt{https://github.com/LukeSnow0/Mechanism-Design}. 

We consider a wireless spectrum--sharing market consisting of $M$ communication firms (agents), each transmitting at a power level / bandwidth $a_i \in [0,A_{\max}] \subset \reals_+$. 
Each firm has a private valuation parameter $v_i>0$ (unknown to the designer) and experiences interference from others through a random channel--gain matrix $\alpha_{ji} \ge 0$, $i,j\in[\na]$. 
The regulator (designer) enforces a pricing mechanism parameterized by $\mech=(\mech^0,\mech^1) \in \reals^2$ which determines the congestion--dependent cost faced by each agent.

\paragraph{Agent model.}
Given a mechanism $\mech$, each agent $i$ selects a mixed strategy $\mu_i \in \Delta([0,A_{\max}])$ over feasible power levels to maximize its expected payoff
\begin{equation}
\label{eq:ag_utilities}
    f_i^{\mech}(a_i,a_{-i}) 
    = v_i \log(1+\tfrac{a_i}{\sigma^2 + \sum_{j\neq i}\alpha_{ji}a_j}) 
      - (\mech^0+\mech^1\!\sum_{j}a_j)\, a_i,
\end{equation}
where $\sigma^2$ is the noise floor. 
Agents possess full knowledge of their own utility parameters and can estimate the aggregate impact of others’ actions (e.g., interference) through local observation or repeated interaction. This local information suffices for decentralized adaptation toward equilibrium: each agent updates its mixed strategy according to a \emph{logit quantal response equilibrium} (logit–QRE) dynamic \citep{mckelvey1995quantal}, whereby strategies with higher expected payoffs are chosen with higher probability. Over time, such stochastic best-response dynamics converge to fixed points representing equilibrium behavior in many classes of potential and congestion games \citep{hofbauer2002global,blume1993statistical}. The designer, however, observes only these equilibrium distributions ${\mu_i(\mech)}$, not the underlying utilities that generate them.

\begin{figure}[h!]
\begin{center}
    \includegraphics[width=0.6\linewidth]{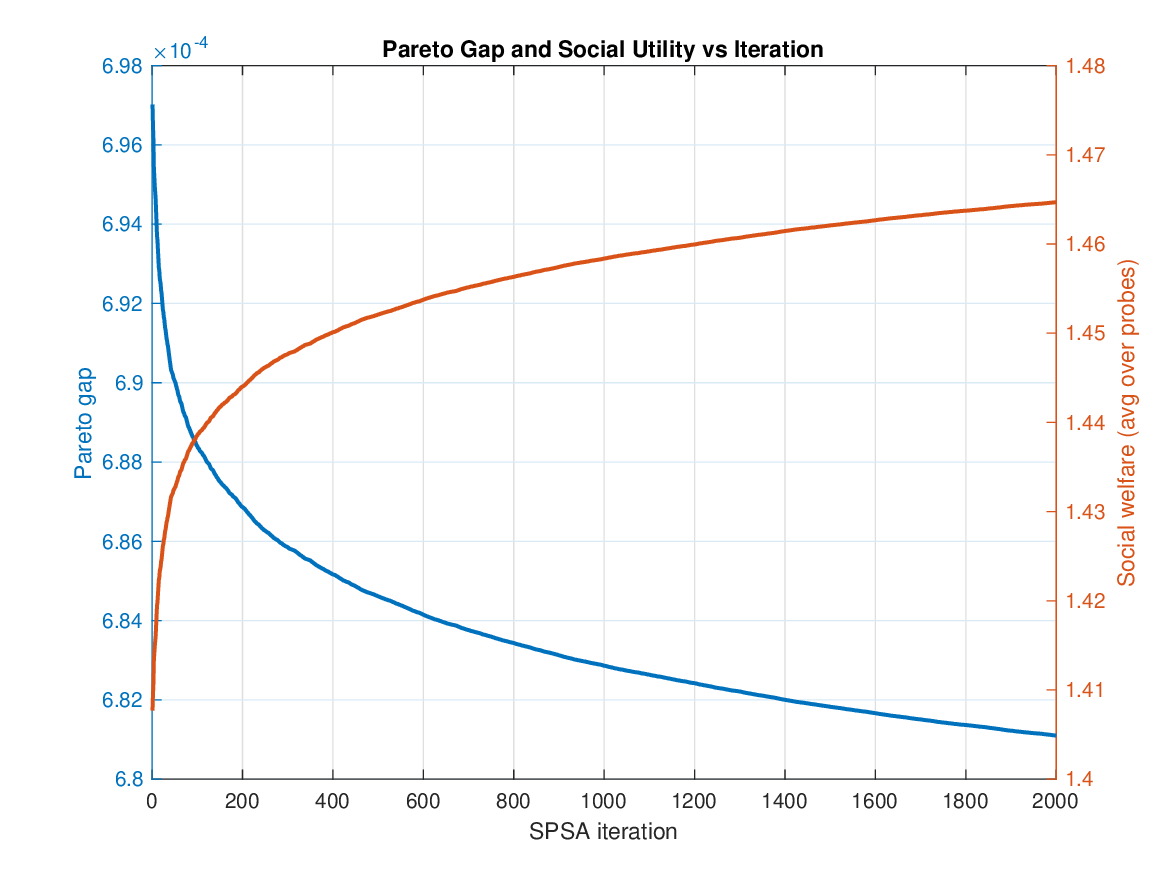}
\caption{\small \textcolor{black}{Evolution of the Pareto gap $L(\mech)$ \eqref{eq:ineq_aug} (left axis) and social \textcolor{black}{welfare} (right axis) with SPSA iterations. We observe monotonic decrease of the Pareto gap over SPSA iterations, and corresponding monotonic increase in social welfare \eqref{eq:soc_util}. Thus, we demonstrate the ability of the designer to adaptively improve social welfare by just observing sequential equilibrium strategies $\{\mu_t(\mech)\}_{t=1}^T$ and evaluating the Pareto gap \eqref{eq:ineq_aug}, \textit{without observation of utilities} \eqref{eq:ag_utilities}. This experiment is consistent with Theorem~\ref{thm:convg}, which guarantees asymptotic convergence in probability to $L(\mech)=0$; however, here we display only the finite-iteration welfare improvement, demonstrating practical usefulness of Algorithm~\ref{alg:amd}.}}\label{fig:netspsa}
\end{center}
\end{figure}

\paragraph{Designer model.}
The regulator aims to adaptively tune $\mech$ so that the induced mixed--strategy Nash equilibria are \emph{socially optimal}. 
At each iteration the designer:
\begin{enumerate}
    \item generates a set of probing constraint functions $g_{i,t}(x)=\sqrt{\max(x-a_t\beta_i,0)}-1$ satisfying Assumption~\ref{as:confun};
    \item observes the equilibrium strategies $\{\mu_t(\mech)\}_{t=1}^T$ corresponding to these probes;
    \item evaluates the \emph{Pareto gap} \eqref{eq:ineq_aug} via the linear program
    \[
        \min_{r,\{u_{i,t},\lambda_{i,t}\}}\;\; r 
        \quad\text{s.t.}\quad 
        u_{i,s}-u_{i,t}-\lambda_{i,t}\, g_{i,t}(\mu_s)\le  \lambda_{i,t} \ r,\;
        \lambda_{i,t}\ge \underline{\lambda},
    \] 
    \item updates the mechanism parameter via a Simultaneous Perturbation Stochastic Approximation (SPSA) (Algorithm~\ref{alg:amd}) step
    \[
        \mech_{k+1} = \mech_k - a_k\, \widehat{\nabla}_\mech L(\mech_k), 
        \qquad
        \widehat{\nabla}_\mech L(\mech_k)
        = \frac{L(\mech_k^+)-L(\mech_k^-)}{2c_k}\,\Delta_k.
    \]
\end{enumerate}

\paragraph{Results.}
The simulation was implemented in MATLAB for $M=4$ agents, $T=10$ probing constraint sets, and $N_a=25$ discrete action levels.  
Agents' valuations $v_i$ and interference coefficients $\alpha_{ji}$ were drawn i.i.d.\ from $\mathcal{N}^+(0.8,0.4^2)$ and $\mathcal{U}(0,0.3)$, respectively.  
The logit temperature $\tau=6$ produced smooth mixed strategies.  
Over $1000$ SPSA iterations, the Pareto gap $L(\mech)$ decreased monotonically, demonstrating convergence to a mechanism inducing socially optimal equilibria.
Figure~\ref{fig:netspsa} shows the evolution of both the Pareto gap and the corresponding social \textcolor{black}{welfare}
\[
    \textstyle \sum_i \mathbb{E}_{a\sim\mu_i}[f_i^{\mech}(a,a_{-i})],
\]
averaged across probes. 
As the gap diminishes, the social \textcolor{black}{welfare} increases, confirming the alignment between the revealed--preference loss and utilitarian welfare.

\noindent
This experiment demonstrates that the proposed RL--based mechanism design procedure successfully improves pricing rules using \textit{only equilibrium strategy observations}, validating its applicability to real-world markets with hidden utilities.}

\subsection{Example 2: River Pollution Game}
\label{sec:numsim}

We implement the classic 'river pollution' game, see \cite{krawczyk2006nira}.  Three agents $i=1,2,3$ are located along a river. Each agent $i$ is engaged in an economic activity that produces pollution as a level $x_i$, and the players must meet environmental conditions set by a local authority. Pollutants are expelled into a river, where they disperse. There are two monitoring stations $l=1,2$ along the river, at which the local authority has set the maximum pollutant concentration levels. The revenue for agent $i$ is 
\[R_i(x) = [d_1 x_i - d_2 \sqrt{x_1+x_2+x_3}]\] where $d_1,d_2$ are parameters that determine an inverse demand law \cite{krawczyk2006nira}. Agent $i$ has expenditure 
\[F_i(x) = (c_{1i}\sqrt{x_{i}} + c_{2i} x_i)\]
and thus has total profit 
\begin{equation}
\label{eq:payoff}
    f^i(x) = [d_1 x_i - d_2\sqrt{x_1 + x_2 + x_3} - c_{1i}\sqrt{x_i} - c_{2i} x_i]
\end{equation}

The local authority imposes the constraint on total polution levels at each monitoring site as 
\begin{equation}
\label{eq:nlcon}
    q_l(x_1,x_2,x_3) = \sum_{i=1}^3\delta_{il}e_ix_i \leq 100, \, \, l=1,2
\end{equation}
where parameters $\delta_{il}$ are transportation-decay coefficients from player $i$ to location $l$, and $e_i$ is the emission coefficient of player $i$. 

\paragraph{Mechanism Design Implementation} We take the perspective of the designer (local authority), who \textcolor{black}{\textit{does not know} or observe} the agent utility functions \eqref{eq:payoff}, but can enforce the constraints \eqref{eq:nlcon} and can modify the parameters $d_2,c_{1i},c_{2i}$. Observe that the parameter vector $[d_2, c_{11},c_{12},c_{13},c_{21},c_{22},c_{23}] \in \reals^7$ can be treated as a mechanism, since it modifies the agent utility functions and can be controlled by the designer. Also, the parameter vector $[e_1,e_2,e_3] \in \reals^3$ can be used to modify the constraint functions, and "probe" the system, as discussed in Section~\ref{sec:algder}. Furthermore, observe that the utility functions \eqref{eq:payoff} and constraints \eqref{eq:nlcon} satisfy Assumptions \ref{as:pay}, \ref{as:con} and \ref{as:pspace}. 

We implement Algorithm~\ref{alg:amd} in Matlab. The non-cooperative game-enactment in steps 14-15 is achieved by the NIRA-3 Matlab package \cite{krawczyk2006nira}, which employs the relaxation algorithm and Nikaido-Isoda function to compute Nash equilibrium solutions. We then utilize the standard Matlab linear program solver to compute the optimization in lines 17-18. We set the lower bound on $L(\mech)$ to 0, so that the Nash equilibria are consistent with social optimality if and only if $L(\mech)=0$. We take the constraint functions \eqref{eq:nlcon} as generated by random parameters $\gamma_t = [e_1,e_2,e_3] \sim U[0,1]^3$ for each $t\in[T]$, i.e., each $e_i$ is uniformly distributed on $[0,1]$. The remainder of the algorithm is implemented with ease, taking $p=7,c=a=0.5,\eta=1/4, q=0.001, T=10$.  

As mentioned, we take mechanism parameter $\mech = [d_2, c_{11},c_{12},c_{1,3},c_{21},c_{22},c_{23}]$ and $\mechspace = [0,1]^7$, that is, each element of $\mech$ is restricted to the unit interval $[0,1]$. We initialize $\mech_1$ as a $7$-dimensional uniform random variable in $[0,0.5]$, and iterate Algorithm~\ref{alg:amd} until $L(\mech)=0$. 

\begin{figure}[t]
    \begin{subfigure}{\textwidth}
      \includegraphics[width=\linewidth]{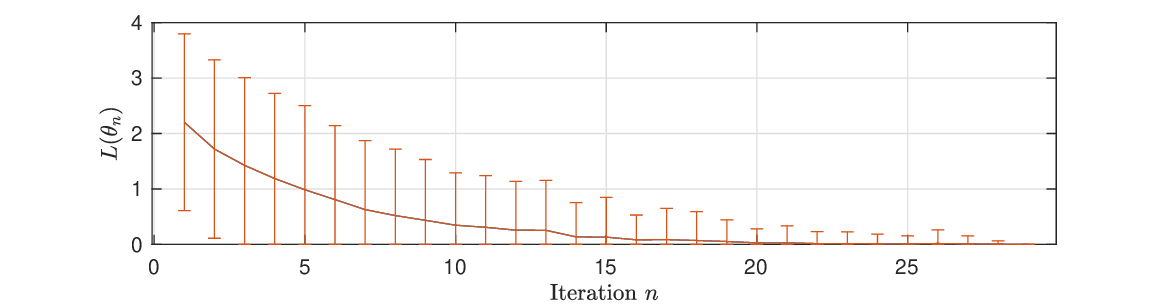}
    \end{subfigure}
    

    \caption{\small The Pareto gap $L(\mech_n)$, defined in \eqref{eq:ineq_aug}, quantifies the distance between empirical Nash equilibria and the social optima. Thus, we illustrate the effectiveness of the SPSA  algorithm, Algorithm~\ref{alg:amd}, in rapidly steering empirical Nash equilibria to the social optimum in this standard river pollution game. $L(\mech_n)$ quantifies the distance between empirical Nash equilibria and the social optima. Thus, we illustrate the effectiveness of the SPSA  algorithm, Algorithm~\ref{alg:amd}, in rapidly steering empirical Nash equilibria to the social optimum in this standard river pollution game.}
    
    \label{fig:SPSAcon}
\end{figure}

\paragraph{Results and Interpretation} Figure~\ref{fig:SPSAcon} illustrates the iterative value of $L(\mech_n)$ as a function of $n$, averaged over 200 Monte-Carlo sample paths. The error bars represent one standard deviation in the 200 simulations. Recall that a mechanism $\mech^*$ such that $L(\mech^*) = 0$ induces social optimality \eqref{eq:md}, and also rank optimality \eqref{eq:rankopt} by Lemma~\ref{lem:rankopt}. It can be seen that in this example the algorithm converges rapidly, and achieves $L(\mech_n) = 0$ for $n \leq 30$ in every case. Thus, while Theorem~\ref{thm:convg} gives asymptotic convergence guarantees, this implementation suggests that there are much tighter finite-sample convergence bounds which should apply. It should be an interesting point for future research to investigate the non-asymptotic performance of Algorithm~\ref{alg:amd}, see e.g., \cite{raginsky2017non}. 

By Theorem~\ref{thm:MA_Af} and Lemma~\ref{lem:rankopt}, achieving $L(\mech_n) = 0$ induces both social optimality and rank optimality, and thus achieves the mechanism design goal \eqref{eq:md}. However, supposing we are only able to obtain some $L(\mech_n) = c>0$, we may quantify its \textit{proximity} to social optimality through the relaxed revealed preference metrics of Section~\ref{sec:robustrp}.


 \begin{figure}[h!]
 \center
  \begin{subfigure}{\textwidth}
    \begin{center}
      \includegraphics[scale=0.5]{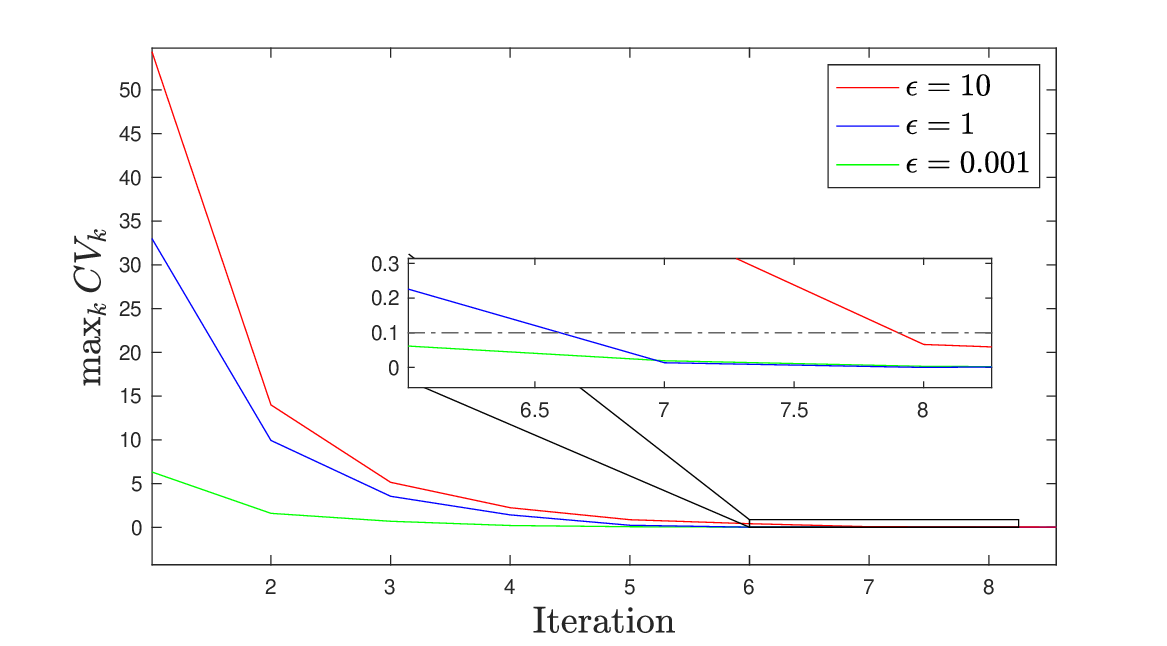}
    \end{center}
    \end{subfigure}
    
\vspace{-0.2cm}
    \caption{\small Convergence of Algorithm~\ref{alg:dro} for varying Wasserstein radii $\epsilon$, averaged over 200 Monte-Carlo simulations with randomly initialized parameters \eqref{eq:alg2int}. It can be observed that in each case the algorithm converges rapidly, with convergence rate inversely proportional to the Wasserstein proximity $\epsilon$.}
    \label{fig:Algcon}
\end{figure}

\subsubsection*{Distributionally Robust RL for Mechanism Design} Here we implement Algorithm~\ref{alg:dro} and show that it converges rapidly to a $\delta$-optimal solution of \eqref{eq:dro}. Thus, despite the apparent complexity of the semi-infinite program reformulation \eqref{eq:dro}, the finite program \eqref{eq:finred} can provide a very accurate approximation of its solution after a small number of iterations. 

Recall that the input to Algorithm~\ref{alg:dro} is the dataset $\ndataset = \{g_{\mech,t}^i,\,\hat{\msb}_t(\mech_{n,a},\cdot), t\in[T]\}_{i\in[\na]}$. For the purposes of simulation we consider the following simplified linearly constrained Nash equilibria generation: 
    \begin{align}
    \begin{split}
    \label{eq:alg2int}
    &\alpha_t^i \sim \mathcal{U}(0.1,1.1)^2 \in \reals^2,\, \gamma_t^i \in \reals^2, \, t\in \{1,\dots,5\},\, i\in [\na] \\& \msb_t^i \in \arg\max_{\msb^i}\CE_{\gamma^i \sim \msb^i}[f^i(\gamma^i, \msb_t^{-i})]\,\, s.t.\, \,\langle\alpha_t^{i},\gamma^i \rangle\leq 1 \\
    &\gamma_{t,k}^i \sim \msb_t^i \,\, \forall k\in[N] 
    \end{split}
    \end{align}
    where  $\boldsymbol{1} =[1, 1]^\prime$, $\max$ operates elementwise, and the utilities of the 3 agents are $f^1(\beta) = \beta(1) + \beta(2), \,f^2(\beta) = \beta(1) + \beta(2)^{1/4},\, f^3(\beta) = \beta(1)^{1/4} + \beta(2)$. Without loss of generality we let $\lambda_t^i > 1 \, \forall t\in[T],i\in[\na]$. We initialize $\delta = 0.1$ in Algorithm~\ref{alg:dro} and run until $\max_k CV_k \leq \delta$ for varying Wasserstein radii $\epsilon = 0.001, 1, 10$, illustrating the convergence in Figure~\ref{fig:Algcon}. Each curve is the average of 200 Monte-Carlo simulations \textcolor{black}{using randomly initialized parameters \eqref{eq:alg2int}}, and in each case Algorithm~\ref{alg:dro} produces a $\delta$-optimal solution on average within 8 iterations for $\delta = 0.1$.

\textcolor{black}{This experiment demonstrates the efficacy of Algorithm~\ref{alg:amd}, in quickly finding the global minima of the Pareto gap $L(\mech)$, in this case corresponding to social optimality ($L(\mech)=0$), and in handling practical finite-sample empirical strategy distributions through the DRO procedure of Algorithm~\ref{alg:dro}.}
    
\section{Conclusion}
\label{sec:conclusion}
We have provided a novel reinforcement learning (RL) methodology for achieving the mechanism design objective when the agent utility functions are unknown \textcolor{black}{and unreported} to the designer. Our contributions comprise four main developments. We first generalized the result of \cite{cherchye2011revealed} to incorporate nonlinear budget constraints and mixed strategies \textcolor{black}{(Theorem~\ref{thm:MA_Af})}. This gives necessary and sufficient conditions for observed mixed-strategy Nash equilibria behavior to be consistent with social (Pareto) optimality. Second, we utilized this to construct a loss function which quantifies the Pareto gap of a set of mixed strategies, i.e., their distance from social optimality. We proved that an RL SPSA algorithm will converge in probability to the set of global minima of this loss function. \textcolor{black}{This enables a principled test for achievability of social optimality: if social optimality is achievable for the game setup and observational equilibria, Algorithm~\ref{alg:amd} will achieve it; else, we can \textit{identify} this unachievability, while still converging to a mechanism which is closest to social optimality}. The key feature distinguishing feature of this RL approach is that we achieve mechanism design \textit{without observations or reports of agent utilities}; this is enabled by our IRL Pareto gap evaluation step. Third, we provide an equivalence between this Pareto gap loss function and several robust revealed preference metrics, allowing for algorithmic sub-optimality guarantees in the case when \textcolor{black}{social optimality is unachievable}. Fourth, we provided a distributionally robust RL procedure which achieves \textcolor{black}{the mechanism design objective} for the full mixed-strategies from only a finite number of i.i.d. samples. In the Appendix we demonstrate the efficacy of each algorithm numerically in standard non-cooperative game settings.  \textcolor{black}{It would be an interesting future endeavor to modify this algorithmic approach to not only induce Pareto optimality in the empirical dataset, but to steer the dataset towards specific regions in the Pareto frontier.}


\bibliography{Bibliography}
\bibliographystyle{abbrv} 

\appendix

\section*{Appendix}
In this appendix we first provide numerical simulations for both Algorithm~\ref{alg:amd} and Algorithm~\ref{alg:dro} in Section~\ref{sec:numsim}. We then provide details of rank optimality, and a result showing that social optimality implies rank optimality in the ordinal preference setting, in Section~\ref{sec:rankopt}. Finally we provide the proofs of all theorems in the main texts in Appendix~\ref{sec:proofs}.

\section{Design Objective. Rank Optimality}
\label{sec:rankopt}
Here we discuss the conceptual motivation for extending the notion of optimality from \textit{cardinal} social optimality, specified by \eqref{eq:rpso} through utility functions $f_{\mech}^i$, to \textit{ordinal} optimality, as studied in \cite{chakrabarty2014welfare}. We will show that this objective captures a powerful new quantification of optimality, and under reasonable conditions is \textit{subsumed under the condition of social optimality \eqref{eq:sodef}}. 


Each utility function $f_{\mech}^i$ corresponds to a partial ordering ("preference profile") $\succ_i$ over outcomes $o_{\mech}(\cdot) \in O$: for any two joint-actions $\ba_1,\ba_2 \in \CA$, outcome $o_{\mech}(\ba_1)$ is \textit{preferred} to outcome $o_{\mech}(\ba_2)$ by agent $i$, denoted $o_{\mech}(\ba_1
) \succ o_{\mech}(\ba_2)$, if and only if $f_{\mech}^i(\ba_1) > f_{\mech}^i(\ba_2)$. Now, using these preference orderings, we may construct the following notion of social welfare known as \textit{rank optimality}. Let $\succ = (\succ_1,\dots,\succ_{\na})$ be the collection of preference orderings specified by utility functions $f_{\mech}^i, \, i\in[\na]$. Then, for $k\in[D]$, we denote the \textit{$k$-rank} of an outcome $o\in O$ in $\succ$, $\rnk_k(o;\succ)$, as the number of agents having $o \in O$ in their top $k$ choices:
\[\rnk_k(o;\succ) = |\{j : \pos(\succ_j,o)\leq k\}|\]
where $\pos(\succ_j,o)$ denotes the rank of outcome $o$ in preference profile $\succ_j$. Then we may define the \textit{max-rank} metric:
\[\maxrnk_k(\succ) := \max_{o\in O} \rnk_k(o;\succ)\]
as the maximum possible $k$-$\rnk$ over all outcomes for the preference profile $\succ$. Thus, we may define a \textit{rank-optimal} strategy as follows.
\begin{definition}[Rank Optimal Strategy]
For a preference profile $\succ = (\succ_1,\dots,\succ_{\na})$ and mechanism $\mech$, a joint-action $\ba$ is rank-optimal if 
\[\rnk_k(o_{\mech}(\ba);\succ) = \max\rnk_k(\succ)\,\,\, \forall \, k\in[D]\]
Similarly, a mixed-strategy $\msb$ is rank-optimal if 
\[\CE_{\ba\sim\msb}[\rnk_k(o_{\mech}(\ba);\succ)]= \max\rnk_k(\succ)\,\,\, \forall \, k\in[D]\]
\end{definition}
This notion of rank-optimality is well-motivated as a social welfare design principle for mechanism design \cite{chakrabarty2014welfare}, and may replace the social optimality criterion \eqref{eq:sodef} in certain settings. One such setting is when the agents have \textit{ordinal} preferences, i.e., can only compare finitely many outcomes against one another, without quantifying their utility explicitly. In this case, the utility functions $\{f_{\mech}^i,i\in[\na]\}$ are constructed synthetically from the preference profile $\succ$, and are regarded as \textit{homogeneous}:
\begin{definition}[Homogeneous Utility Functions]
\label{def:homuf}
Let $|\CA| = D < \infty$, and denote $f_{\mech}^i(j), j\in[D]$ the utility of agent $i$'s $j$'th preferred outcome, under preference profile $\succ$. Utility functions $\{f_{\mech}^i\}_{i\in[\na]}$ are said to be homogeneous if for every $\mech \in \mechspace, i,i'\in[\na]$, we have $f_{\mech}^i(j) = f_{\mech}^{i'}(j)$.
\end{definition}
Utility functions are homogeneous if each agent assigns the same value to their $j$'th preference, for every $j\in[D]$. This arises out of straightforward construction of utility functions $f_{\mech}^i(\cdot)$ from a finite preference profile $\succ$. 

The setting of ordinal preference relations and homogeneous utility functions is an important practical setting, and can easily be recovered as a special case of the general game form \eqref{eq:gmixed}. In this special case, the mechanism design goal should be to find a mechanism inducing \textit{max-rank} strategies. We now show that this is an immediate consequence of our general social optimality objective \eqref{eq:sodef}.

\begin{lemma}[Social Optimality Induces Rank Optimality]
\label{lem:rankopt}
Assume a finite joint-action space $\CA$: $|\CA| = D< \infty$ and homogeneous utility functions $\{f_{\mech}^i(\cdot)\}_{i\in[\na]}$ inducing preference profile $\succ = (\succ_1,\dots,\succ_{\na})$. Assume the relation \eqref{eq:sodef} holds. Then, we have 
\begin{equation}
\label{eq:rankopt}
    \CE_{\ba\sim\msb}[\rnk_i(\ba;\succ)] = \max\rnk_i(\succ) \,\, \forall i \in[D]
\end{equation}
\end{lemma}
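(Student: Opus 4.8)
The plan is to translate social optimality under homogeneous utilities into a statement about a positively-weighted combination of the rank counts $\rnk_k$, and then to argue that maximizing this combination forces every $\rnk_k$ to its maximum. Fix a mechanism $\theta$ and, using Definition~\ref{def:homuf}, write $v(j):=f_\theta^i(j)$ for the common value every agent assigns to its $j$-th most preferred outcome; since each $\succ_i$ is a strict total order (as the notation $\pos$ presupposes), $v(1)>v(2)>\cdots>v(D)$, and put $w_k:=v(k)-v(k+1)>0$ for $k\in[D-1]$. For a pure joint-action $\ba$ let $p_i:=\pos(\succ_i,o_\theta(\ba))$. The telescoping identity $v(p)=v(D)+\sum_{k=p}^{D-1}\big(v(k)-v(k+1)\big)=v(D)+\sum_{k=1}^{D-1}\mathbf{1}\{p\le k\}\,w_k$, summed over agents with the order of summation interchanged, yields
\[
\sum_{i=1}^{\na} f_\theta^i(\ba)\;=\;\na\, v(D)\;+\;\sum_{k=1}^{D-1} w_k\,\rnk_k\big(o_\theta(\ba);\succ\big),
\]
since $\sum_{i}\mathbf{1}\{\pos(\succ_i,o)\le k\}=\rnk_k(o;\succ)$ by definition. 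Both sides depend on a strategy only through its induced law on $O$, so taking expectations extends this to all $\ms\in\Delta(\CA)$: the social objective equals the constant $\na v(D)$ plus the strictly positively weighted sum $\sum_{k=1}^{D-1} w_k\,\CE_{\ba\sim\ms}[\rnk_k(o_\theta(\ba);\succ)]$.

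Consequently \eqref{eq:sodef} implies that $\msb$ maximizes $\sum_{k=1}^{D-1} w_k\,\CE_{\ba\sim\ms}[\rnk_k(o_\theta(\ba);\succ)]$ over $\Delta(\CA)$. In \eqref{eq:rankopt} the case $k=D$ is immediate: every outcome lies in every agent's top $D$, so $\rnk_D(\cdot;\succ)\equiv\na=\maxrnk_D(\succ)$. For $k\le D-1$ we always have $\CE_{\ba\sim\msb}[\rnk_k]\le\maxrnk_k(\succ)$, so it remains to rule out a strict deficit in some coordinate $k_0$. The step I would carry out here is to invoke an outcome $o^*$ that is \emph{simultaneously} rank-optimal, $\rnk_k(o^*;\succ)=\maxrnk_k(\succ)$ for every $k$: a pure joint-action realizing $o^*$ then weakly dominates $\msb$ in each coordinate of the vector $\big(\CE_{\ba\sim\msb}[\rnk_k]\big)_{k\le D-1}$ and strictly in $k_0$, and because all $w_k>0$ this makes it strictly better in the weighted objective, contradicting optimality of $\msb$. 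Hence $\msb$ must match $o^*$ in every coordinate, which is precisely \eqref{eq:rankopt}.

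The step I expect to be the real obstacle is exactly the existence of such a simultaneously rank-optimal outcome, i.e.\ $\bigcap_{k=1}^{D}\arg\max_{o\in O}\rnk_k(o;\succ)\ne\varnothing$. This does not follow from homogeneity alone, since the preference \emph{orders} $\succ_i$ are left unconstrained: already for $D\ge 3$ one can arrange the profiles so that the outcome maximizing $\rnk_1$ differs from the one maximizing $\rnk_2$, while the social optimum (which maximizes the positive combination $\sum_k w_k\rnk_k$) agrees with neither, so that the stated conclusion can fail. I would therefore either (i) add to the hypotheses that a simultaneously rank-optimal outcome exists --- automatic when $D=2$, and for suitably restricted preference domains --- or (ii) weaken the conclusion to ``$\msb$ maximizes $\sum_k w_k\,\CE[\rnk_k]$ and is Pareto-efficient among the $\rnk_k$ values''. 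Under either modification the partial-summation reduction of the first paragraph completes the proof; absent one of them, the statement as written appears to require the additional hypothesis in (i).
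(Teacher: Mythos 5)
Your reduction is exactly the route the paper takes: its proof of Lemma~\ref{lem:rankopt} also uses Abel (partial) summation, with $U(D+1)=0$, to write $\sum_{i=1}^{\na}f_{\theta}^i(\msb)=\sum_{j=1}^{D}\CE_{\ba\sim\msb}\left[\rnk_j(o_{\theta}(\ba);\succ)\right](U(j)-U(j+1))$, then equates the maximal social utility with $\sum_{j=1}^{D}\maxrnk_j(\succ)(U(j)-U(j+1))$ and concludes coordinatewise equality from $\rnk_j\leq\maxrnk_j$. The obstacle you single out is precisely the step the paper asserts without justification: the equality $\max_{\ba\in\CA}\sum_{j}\rnk_j(o_{\theta}(\ba);\succ)(U(j)-U(j+1))=\sum_{j}\maxrnk_j(\succ)(U(j)-U(j+1))$ is equivalent to the existence of a single outcome attaining $\maxrnk_j(\succ)$ for every $j$ simultaneously, and this does not follow from homogeneity of the utilities. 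Your concern is substantiated by a concrete profile: with three agents and orders $a\succ b\succ c\succ d$, $a\succ c\succ d\succ b$, $b\succ c\succ d\succ a$, the unique maximizer of $\rnk_1$ is $a$ while the unique maximizer of $\rnk_3$ is $c$, so no distribution (in particular no socially optimal $\msb$, which always exists since the objective is linear on the simplex) can satisfy \eqref{eq:rankopt}. So your proposal is faithful to the paper's argument, and the extra hypothesis you propose (a simultaneously rank-optimal outcome, or a suitably restricted preference domain) is genuinely needed for the lemma as stated; the paper's own proof silently passes over it. Your handling of $k=D$ separately is also the right fix for the fact that the last weight $U(D)-U(D+1)=U(D)$ need not be strictly positive, and strict positivity of the remaining weights (strict orders) is indeed required for the coordinatewise conclusion.
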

Thus, the goal of rank-optimality for the setting of finite action space and ordinal preferences is subsumed under our general mechanism design goal of social optimality \eqref{eq:md}.

\section{Proofs}
\label{sec:proofs}
\subsection{Proof of Lemma~\ref{lem:rankopt}}
By \eqref{eq:md} we have $\msb \in \arg\max_{\ms\in\Delta(\CA)}\sum_{i=1}^{\na}f_{\mech}^i(\ms) := \arg\max_{\ms\in\Delta(\CA)}\sum_{i=1}^{\na}\CE_{\ba\sim\ms}[f_{\mech}^i(\ba)]$. $\{f_{\mech}^i(\cdot)\}$ are homogeneous, so denote $U(j) = f_{\mech}^i(j)$, using the notation in Def.~\ref{def:homuf}. Also let $U(D+1)=0$ and $\rnk_0(o;\succ) = 0\,\forall o\in O$ (for notational convenience in what follows).
Now, we can make the following expansion
\begin{align*}
    \sum_{i=1}^{\na}f_{\mech}^i(\msb) &= \sum_{\ba\in\CA}\msb(\ba)\sum_{j=1}^D (\rnk_j(o_{\mech}(\ba); \succ) - \rnk_{j-1}(o_{\mech}(\ba); \succ))U(j)\\
    &= \sum_{\ba\in\CA} \sum_{j=1}^D \rnk_j(o_{\mech}(\ba); \succ)(U(j) - U(j+1)) \\
    & = \sum_{j=1}^D \CE_{\ba \sim \msb} (\rnk_{j}(o_{\mech}(\ba); \succ))(U(j) - U(j+1)) \\
    & = \max_{\ms\in\Delta(\CA)}\sum_{i=1}^{\na}f_{\mech}^i(\ms) = \max_{\ba\in\CA}\sum_{j=1}^D(\rnk_j(o_{\mech}(\ba);\succ)(U(i) - U(i+1)) \\
    & = \sum_{i=1}^D \max\rnk_j(\succ)( U(j) - U(j+1))
\end{align*}
Thus we can conclude that $\CE_{\ba \sim \msb} (\rnk_{j}(o_{\mech}(\ba); \succ)) = \max\rnk_j(\succ)\, \forall j\in[D]$, since 
\[\rnk_j(o_{\mech}(\ba);\succ) \leq \max\rnk_j(\succ) \,\forall \ba\in\CA\]

\subsection{{Proof of Theorem~\ref{thm:MA_Af}}}
\label{ap:pfa}
We show the equivalence between condition $i$ and $iii$. The equivalence to $ii$ then follows from \cite{forges2009afriat} Proposition 3.

$(i \Rightarrow iii)$: Assume \eqref{eq:rpso} holds. We have that $\msb_t$ solves:
\begin{equation}
    \label{eq:max}
        \msb_t \in \arg\max_{\ms \in \Delta(\CA_{\mech,t})} \sum_{i=1}^{\na} f^i(\ms) \, \, \forall t\in[T]
\end{equation}
where $f^i(\ms) = \int_{\reals^{\color{black} k}}f^i(x)d\ms^i(x)$ and $\CA_{\mech,t} = \{x = [x^1,\dots,x^{\na}]\in\reals^{k\na} : g_{\mech,t}^i(x^i) \leq 0\,\,\forall i\in[\na]\}$. Since $f^i(\cdot)$ is continuous for each $i$, there exists a point $\bar{x}_t = [\bar{x}_t^1,\dots,\bar{x}_t^{\na}]'\in \CA_{\mech,t}$ such that $\sum_{i=1}^{\na} f^i(\ms_t) = \sum_{i=1}^{\na} \int_{\reals^{\color{black} k}}f^i(x)d\ms(x) = \sum_{i=1}^{\na}f^i(\bar{x}_t)$. Now observe that the point $\bxt$ must solve the pure strategy optimization
\begin{equation}
    \label{eq:maxaug}
        \bxt \in \arg\max_{x = [x^1,\dots,x^{\na}]'} \sum_{i=1}^{\na}f^i(x) \, \, s.t. \,\,  g_{\mech,t}^i(x) \leq 0 \,\, \forall i \in [\na]
\end{equation}
since otherwise there would exist some $x'\in\CA_{\mech,t}$ such that $\sum_{i=1}^{\na}f^i(x') > \sum_{i=1}^{\na} f^i(\bxt) = \sum_{i=1}^{\na} f^i(\msb_t)$, which would contradict \eqref{eq:max} since $\sum_{i=1}^{\na} \int_{\CA_{\mech,t}}f^i(x)\delta(x-x')dx > \sum_{i=1}^{\na} \int_{\CA_{\mech,t}}f^i(x)\msb_t(x)$. Furthermore observe that since $g_{\mech,t}^i$ and $f^i$ are increasing we have $g_{\mech,t}^i(\bxti) = 0$.

Now by concavity, the functions $f^i(\cdot)$ are subdifferentiable, and thus the sum $\sum_{i=1}^{\na}f(\cdot)$ is subdifferentiable. Then, letting $\et^i$ be the Lagrange multiplier associated with budget constraint $g_{\mech,t}^i(\cdot)$, an optimal solution $\bar{x}_t$ to the maximization problem \eqref{eq:maxaug} must satisfy
    $\sum_{i=1}^{\na} \nabla f^i(\bar{x}_t) = \sum_{i=1}^{\na}\et^i \nabla g_{\mech,t}^i(\bar{x}_t)$
    for subgradient operator $\nabla$ and where $\nabla g_{\mech,t}^i(x)  =[\boldsymbol{0},\dots,\nabla g_{\mech,t}^i(x^i)',\dots,\boldsymbol{0}]'$ since $g_{\mech,t}^i$ only depends on those elements corresponding to $x^i$.
    Now let 
    \begin{align*}
        \begin{split}
            \lambda_t^i &:= \frac{\left(\sum_{i=1}^{\na}\et^i\|\nabla g_{\mech,t}^i(\bar{x}_t)\|_{\infty} + \sum_{j\neq i} \|\nabla f^j(\bar{x}_t)\|_{\infty} \right)}{\min_{k\in[n]}|(\nabla g_{\mech,t}^i(\bar{x}_t))_k|}\\
            \gam_t^i &:= \nabla g_{\mech,t}^i(\bar{x}_t), \quad \gam_t = \sum_{i=1}^{\na} \gam_t^i= [\nabla g_{\mech,t}^1(x^1)',\dots,\nabla g_{\mech,t}^{\na}(x^{\na})']'
        \end{split}
    \end{align*}

    and observe that for each $i \in [\na]$, 
    \begin{equation}
    \label{eq:ineq1}
        \sgcf \leq \lambda_t^i \gam_t
    \end{equation}
    Next, concavity of the functions $\cf^i$ implies for each $i$
    \begin{equation}
    \label{eq:ineq2}
        f^i(\bxs) - f^i(\bxt) \leq \sgcf'(\bxs - \bxt)
    \end{equation}
    Substituting inequality \eqref{eq:ineq1} into \eqref{eq:ineq2}, and setting $v_s^i = f^i(\bxs), \,v_t^i = f^i(\bxt)$, we obtain $ v_s^i - v_t^i \leq \lambda_t^i(\gam_t)' (\bxs - \bxt)$
    Thus the set $\{\gam_t, \bxt, t \in[T] \}$ satisfies the linear 'Generalized Axiom of Revealed Preference' (GARP) \cite{cherchye2011revealed} for each $i \in \na$. We will show that this implies the dataset $\{g_{\mech,t}^i,\resp, t \in [T]\}$ satisfies \textit{nonlinear} GARP \cite{forges2009afriat}. First, assume $g_{\mech,t}^i(\bxsi) < 0$. Since $g_{\mech,t}^i(\bxti) = 0$ we have $g_{\mech,t}^i(\bxsi) < g_{\mech,t}^i(\bxti)$, and thus $\sgcn'(\bxsi - \bxti) \leq 0$ by monotonicity of $g_{\mech,t}^i$. So $(\gam_t^i)' \bxsi < (\gam_t^i)' \bxti$. Then by inverting this implication, we obtain $(\gam_t^i)' \bxsi \geq (\gam_t^i)' \bxti \Rightarrow g_{\mech,t}^i(\bxsi) \geq 0$.
    
    Since the linear GARP is satisfied for all $i\in[\na]$ we have the relation
    \begin{equation*}
        \bxs R \bxt \Rightarrow \gam_t' \bxt= \sum_{i=1}^{\na}(\gam_t^i)' \bxti \leq \sum_{i=1}^{\na}(\gam_t^i)'\bxsi = \gam_t' \bxs \Rightarrow g_{\mech,t}^i(\bxs) \geq 0  
    \end{equation*}
    Now, by our choice of probing constraint functions in Assumption~\ref{as:confun} we have that for each $t\neq s$:
    \[\exists i\in[\na] \,\,s.t. \,\,\gam_t^i\bxti > \gam_t^i \bxsi \Rightarrow \gam_t^k\bxt^k > \gam_t^k \bxs^k \,\,\forall k \neq i\]
    and thus $\bxs R \bxt \Rightarrow g_{\mech,t}^i(\bxs) \geq 0$, so $\{g_{\mech,t}^i, \resp, t \in [T]\}$
    satisfies nonlinear GARP for all $i\in[\numconsts]$. So, by \cite{forges2009afriat} we have that for each $i \in [M]$, there exist $u_t^i \in \reals, \lambda_t^i > 0$ such that 
    \[ u_s^i - u_t^i - \lambda_t^i(\nlconst^i(\bar{x}_s^i)) \leq 0 \quad \forall t,s \in [T]\]

    Now since \eqref{eq:max} holds, it must be the case that $\msb \in \Delta(\otimes_{i=1}^{\na}\del g_s^i)$, where $\del g_s^i = \{x \in \reals^{\color{black} k} : g_s^i(x)=0\}$, i.e., $\msb$ has measurable support on the boundary of $g_s^i(\cdot)$; this holds since $f^i(\cdot)$ and $g_s^i$ are both monotone element-wise increasing for every $i\in[\na]$. Thus, $\bar{x}_s^i \in \del g_s^i(\cdot) \,\, \forall i\in[\na],s\in[T]$. Now, by the structure of $g_s^i$ imposed by Assumption~\ref{as:confun}, we have that $g_{\mech,t}^i(x) = g_{\mech,t}^i(y) \, \forall x,y \in \del g_s^i(\cdot)$. Then, taking $\bar{\gamma}_t = [\bar{\gamma}_t^1,\dots,\bar{\gamma}_t^{\na}]$, where $\bar{\gamma}_s^i \in \del g_s^i = g_s^{i -1}(g_s^i(\msb))$ gives $g_{\mech,t}^i(\bar{\gamma}_s^i) = g_{\mech,t}^i(\bar{x}_s^i) \,\, \forall t,s\in[T]$, implying that \eqref{eq:ineq} holds.

    (iii $\Rightarrow$ i): Assume \eqref{eq:ineq} holds, take $\gamma$ such that $g_{\mech,t}^i(\gamma) \leq 0 \ \forall i \in [M]$ and define
    \begin{equation}
    \label{eq:Usumdef}
        \sum_{i=1}^{\na} U^i(\gamma) = \sum_{i=1}^{\na} \min_{s \in [T]} \left[u_s^i + \lambda_s^i[g_s^i(\gamma)] \right] 
    \end{equation}
    Now observe by the inequalities \eqref{eq:ineq}, and since $g_{\mech,t}^i(\bxsi) = g_{\mech,t}^i(\bar{\gamma}_s^i)\,\forall i\in[\na]$, we have $\sum_{i=1}^{\na} U^i(\bxt) = \sum_{i=1}^{\na} u_t^i $. So, considering $\gamma$ s.t. $g_{\mech,t}(\gamma) \leq 0 \, \, \forall i\in[\na]$, we have that
    \begin{equation*}
        \sum_{i=1}^{\na} U^i(\gamma) \leq \sum_{i=1}^{\na} \left[u_t^i + \lambda_t^i g_{\mech,t}^i(\gamma) \right] \leq \sum_{i=1}^{\na}U^i(\bxt).
    \end{equation*}
    Thus $\bxt$ is a solution to \eqref{eq:maxaug} with aggregate utility \eqref{eq:Usumdef}, for all $t\in[T]$. This implies that $\ms_t$ satisfies \eqref{eq:max} for all $t\in[T]$, with aggregate utility $\sum_{i=1}^{\na} f^i(\cdot) = \sum_{i=1}^{\na} U^i(\cdot)$ \eqref{eq:Usumdef}, since $\sum_{i=1}^{\na}f^i(\ms_t) = \sum_{i=1}^{\na}f^i(\bxt)$.

\subsection{Proof of Theorem~\ref{thm:cineq}}
\label{sec:lempf}

Suppose we have $\CE[\hat{L}(\mech)] =c$, where $\hat{L}(\mech)$ is defined in \eqref{eq:ineq_aug2} and $\CE[\hat{L}(\mech)]$ is given by 
\[\CE[\hat{L}(\mech)] = \int_{\otimes_{t=1}^T\Delta(\CA_{\mech,t})}\left[\arg\min_{r\geq 0}: u_s^i - u_t^i - \lambda_t^i g_{\mech,t}^i(\hat{\msb}_s) \leq \lambda_t r\right]d\PR(\otimes_{t=1}^T\hat{\msb}_t)\]
Then obviously there exists some set $\{u_t^i \in \reals,\lambda_t^i > \alpha\}$ and empirical strategy $\hat{\msb}_t(\mech)$ such that 
\[u_s^i - u_t^i - \lambda_t^i g_{\mech,t}^i(\hat{\gamma}_s^i) \leq \lambda_t^i c \, \, \forall t,s \in [T],i \in [\na]\]
Also, with 
\begin{align*}
&\hat{\gamma}_t^i \in g_{\mech,t}^{i^{-1}}(g_{\mech,t}^i(\hat{\msb}_t(\mech))) = g_{\mech,t}^{i^{-1}}\left(\frac{1}{N}\sum_{k=1}^N g_{\mech,t}^i(\gamma_{t,n,a}^k(\mech))\right), \\ &\bar{\gamma}_t^i \in g_{\mech,t}^{i^{-1}}(g_{\mech,t}^i(\msb_t(\mech))) = g_{\mech,t}^{i^{-1}}\left(\int_{\CA_{\mech,t}}g_{\mech,t}^i(x)\msb_t(\mech,x)dx\right)
\end{align*}
 by Hoeffding's inequality we can produce the concentration bound 
\[\mathbb{P}(|g_{\mech,t}^i(\hat{\gamma}_s^i) - g_{\mech,t}^i(\bar{\gamma}_s^i)| >\epsilon) \leq 2\exp\left(\frac{-2\epsilon^2N}{G^2} \right)\]
where $G := |\inf_{i,t} g_{\mech,t}^i(\boldsymbol{0})|$ is the maximum range over all $\{g_{\mech,t}^i(\cdot)\}$ since $\CA_{\mech,t}^i \subset \reals^k_+$ (non-negative orthant) and $x\in\CA_{\mech,t}^i \Rightarrow g_{\mech,t}^i(x)\leq 0$. Now, we have the relation 
\begin{align*}
    &\biggl\{|g_{\mech,t}^i(\hat{\gamma}_s^i) -  g_{\mech,t}^i(\bar{\gamma}_s^i)| < \epsilon \, \,\forall t,s\in[T],i\in[\na]\biggr\} \\& \subseteq \biggl\{\exists \{u_s^i \in \reals,\lambda_s^i > \alpha\} : u_s^i - u_t^i - \lambda_t^ig_{\mech,t}^i(\bar{\gamma}_s^i) \leq \lambda_t^i(c + \epsilon) \, \, \forall t,s\in[T],i\in[\na]\biggr\}
\end{align*}
and thus 
\begin{align*}
&\mathbb{P}\left(L(\mech) \leq c + \epsilon  \right) = \mathbb{P}\left(\exists \{u_s^i \in \reals,\lambda_s^i > \alpha\} : u_s^i - u_t^i - \lambda_t^i g_{\mech,t}^i(\bar{\gamma}_s^i) \leq \lambda_t^i(c + \epsilon) \, \, \forall t,s\in[T],i\in[\na] \right) \\
&\geq \mathbb{P}\left(|g_{\mech,t}^i(\hat{\gamma}_s^i) - g_{\mech,t}^i(\bar{\gamma}_s^i) | < \epsilon \,\, \forall t,s\in[T], i\in [\na] \right) \geq \prod_{t=1}^{T}\prod_{i=1}^{\na}\left(\min\biggl\{1-2\exp\left(\frac{-2\epsilon^2N}{G^2}\right),0 \biggr\} \right)^T 
\end{align*}

\subsection{Proof of Theorem~\ref{thm:convg}}
\label{sec:con_pf}
We consider the stochastic loss function
\begin{align}
\begin{split}
\label{eq:lp}
    &\hat{L}(\mech) = \arg\min_{r\in\reals} : \exists \{u_t^i \in \reals,\lambda_t^i > 
    \alpha, t\in[T], i\in[\na]\} :\\ &\quad u_s^i - u_t^i - \lambda_t^i\nlconst^i(\hat{\msb}_s(\mech)) \leq \lambda_t^i r \quad \forall t,s\in[T],i\in[\na] 
\end{split}
\end{align}
where $\msb_t(\mech)$ is a mixed-strategy Nash equilibrium solution for the game with mechanism $o_{\mech}$ and feasible set $\CA_{\mech,t}^i = \{a: g_{\mech,t}^i(a) \leq 0\}$. 

We show that $\hat{L}(\mech)$ and Algorithm~\ref{alg:amd} iterates satisfy assumptions H1-H8 of \cite{maryak2001global}. Then, we apply Theorem~\ref{thm:cineq} to obtain the result.

\textbf{H1}: $\Delta_n$ is distributed according to the $p$-dimensional Rademacher distribution, and thus the conditions are satisfied. 

\textbf{H2}: We have 
\begin{equation}
\label{eq:elh}
    \CE[\hat{L}(\mech)] = \int_{\otimes_{t=1}^T\Delta(\CA_{\mech,t})}\left[\arg\min_{r\geq 0}: u_s^i - u_t^i - \lambda_t^i g_{\mech,t}^i(\hat{\msb}_s) \leq \lambda_t r\right]d\PR(\otimes_{t=1}^T\hat{\msb}_t)
\end{equation}
Letting $\epsilon_{\mech}$ be a random variable such that $\hat{L}(\mech) = \CE[\hat{L}(\mech)] + \epsilon_{\mech}$ (this always exists as long as $\CE[\hat{L}(\mech)]<\infty)$). Then $\CE[\epsilon_{\mech}] = 0$ by construction. So, this condition is satisfied.

\textbf{H3}: 
To prove this condition we consider differentiability with respect to $L
(\mech)$; this then immediately transfers to $\CE[\hat{L}(\mech)]$ \eqref{eq:elh} by linearity.
Let $\mech = [\mech_1,\dots,\mech_p]$. By Fa\'a de Bruni's formula, the third derivative of $L(\mech)$ with respect to $\mech_k, k\in[p]$ can be decomposed as the combinatorial form:
\begin{align}
\begin{split}
\label{eq:thdriv}
    &\frac{\del^3 L(\mech)}{\del\mech_k^3} =\sum_{t,i}\sum_{\pi\in\Pi}\frac{\del^{|\pi|}L(g_{\mech,t}^i(\bar{\gamma}_s^i(\mech)))}{\del g_{\mech,t}^i(\bar{\gamma}_s^i(\mech)) ^{|\pi|}}\cdot \prod_{B\in\pi}\frac{\del^{|B|}g_{\mech,t}^i(\bar{\gamma}_s^i(\mech),))}{\del \mech_k^{|B|}}
\end{split}
\end{align}
where $\pi$ iterates through the set of partitions $\Pi$ of the set $\{1,2,3\}$, $B\in\pi$ is an element of partition $\pi$, and $|\pi|, |B|$ indicate the cardinality of said partitions and elements, respectively. We need to show that \eqref{eq:thdriv} exists and is continuous for all $k\in [p]$. 

Now, for notational simplicity re-write $g_{\mech,t}^i(\bar{\gamma}_s(\mech)))$ as $g_{t,s,i}$ and consider the term
$\frac{\del^{n}L(g_{t,s,i})}{\del g_{t,s,i}^{n}}$ for some $n\in[3]$. This expresses the $n$-th order derivative of the solution to the linear program \eqref{eq:lp} with respect to real number $g_{t,s,i}$. Fortunately we can obtain a useful bound on this value by the methodology in \cite{freund2009postoptimal}. Specifically, we can re-write the linear program \eqref{eq:lp} as 
\begin{align}
\begin{split}
\label{eq:LP}
    &\textrm{maximize } L(g_{t,s,i}) = c\cdot x \, \,s.t.\,\, Ax = 0, \, \, x \geq 0
\end{split}
\end{align}
where $x = [c,u_1^1,\dots,u_T^M,\lambda_1^1,\dots,\lambda_T^M]^T \in \reals^{2TM+1}, c = [-1,0,\dots,0] \in \reals^{2TM+1}$ and $A$ is a matrix of coefficients that produces the equalities
$u_s^i - u_t^i - \lambda_t^i g_{t,s,i} - r = 0 \, \, \,\,\forall s,t,i.$

Then, letting $\beta \subset [2TM+1]$, denote $A_{\beta}$ as the submatrix of $A$ with columns corresponding to elements of $\beta$, such that $A_{\beta}$ is nonsingular. Form $\hat{x}_{\beta}$ be the basic primal solution to \eqref{eq:LP} with the $i'th$ element of $\hat{x}_{\beta}$ equal to zero for $i\in[2TM+1]\backslash\beta$. Also let $G_{\beta}$ be the matrix $A_{\beta}$ with all elements apart from $g_{t,s,i}$ replaced by zero. Then by Theorem 1 of \cite{freund2009postoptimal}, we have that for $n\in[3]$, $L(g_{t,s,i})$ is $n$-times differentiable, with derivatives given by
\[\frac{\del^n L(g_{t,s,i})}{\del g_{t,s,i}^n} = (n!)c(-A_{\beta}^{-1}G_{\beta})^n\hat{x}_{\beta}\]
and for $g\in\reals$ sufficiently close to $g_{t,s,i}$,
\[\frac{\del^n L(g)}{\del g^n} = \sum_{i=n}^{\infty}\frac{i!}{(i-n)!}c(g-g_{t.s,i})^{(i-n)}(-B^{-1}G_{\beta})^i\hat{x}_{\beta}\]
This reveals that the $L(g_{t,s,i})$ is $n$-times continuously differentiable for $n\in[3]$. 

Now consider the term 
$ \frac{\del^{n}g_{\mech,t}^i(\bar{\gamma}_s^i(\mech))}{\del \mech_k^{n}}$ for $n\in[3]$. This term can again be decomposed using Fa\'a di Bruno's formula (generalization of the chain rule for higher order derivatives) as  \[\frac{\del^{n}g_{\mech,t}^i(\bar{\gamma}_s^i(\mech)))}{\del \mech_k^{n}}=\sum_{\pi\in\Pi}\sum_{j}\frac{\del^{|\pi|}g_{\mech,t}^i([\bar{\gamma}_s^i(\mech)]_j)}{\del [\bar{\gamma}_s^i(\mech)]_j^{|\pi|}}\cdot\prod_{B\in\pi}\frac{\del^{|B|}[\bar{\gamma}_s^i(\mech)]_j}{\del \mech_k^{|B|}}\] 
with $\Pi$ being the partitions of $[n]$ here and $[\bar{\gamma}_s^i(\mech)]_j$ the $j$'th element of vector $\bar{\gamma}_s^i(\mech)$. The constraint function derivatives are bounded by assumption~\ref{as:con}. The term 
$\frac{\del^{|B|}[\bar{\gamma}_s^i(\mech)]_j}{\del \mech_k^{|B|}} $ can be expressed as $\frac{\del^{|B|}[\langle g_s^{i^{-1}}(g_s^i(\ms_s^i(\mech)))\rangle_{c}]_j}{\del \mech_k^{|B|}} $ where $\langle g_s^{i^{-1}}(g_s^i(\msb_s(\mech)))\rangle_{c}$ is the unique value $\langle\bar{\gamma}_s^i(\mech)\rangle_{c} := \int_{g_s^{i^{-1}}(g_s^i(\msb_s(\mech)))}xdx$ representing the 'center' point of inverse set $g_s^{i^{-1}}(g_s^i(\ms_s^i(\mech))) \subset \reals^k$ so that the map $\langle g_s^{i^{-1}}(\cdot)\rangle_{c}$ is invertible.  Then, since the map $g_s^i$ is continuous with bounded derivatives by Assumption~\ref{as:con}, we can expand $\frac{\del^{|B|}[\langle g_s^{i^{-1}}(g_s^i(\ms_s^i(\mech)))\rangle_{c}]_j}{\del \mech_k^{|B|}}$ by the chain rule and consider only the term $\frac{\del^{|B|}g_s^i(\ms_s^i(\mech))}{\del \mech_k^{|B|}} = \frac{\del^{|B|}\int_{\CA_s^i}g_s^i(x)\ms_s^i(\mech,x)dx}{\del \mech_k^{|B|}}$. Now $\ms_s^i(\mech) \in \arg\max_{\mu\in\Delta(\CA_s^i)} f_{\mech}^i(\mu | \ms_s^{-i}(\mech)), \forall i$ by definition, and for each $i\in[\na]$, the utility obtained under parameter $\mech$ and joint mixed-strategy $\msb_s(\mech) = \otimes_{i=1}^{\na}\ms_s^i(\mech)$ is \[f_{\mech}^i(\msb_s) = \int_{\CA_s}f_{\mech}^i(x)\msb_s(x)dx = \int_{\CA_s^{-i}}\int_{\CA_s^i}f_{\mech}^i(x)\ms_s^i(x^i)\ms_s^{-i}(x^{-i})dx\] where $\CA_s^{-i} = \otimes_{j\neq i}\CA_s^j, x = [x^1,\dots,x^{\na}], \ \in \reals^{k\na}, \ms_s^{-i} = \otimes_{j \neq i} \ms_s^j, x^{-1} = \otimes_{j \neq i} x^j$. Now, by continuity of $f_{\mech}^i$, there must exist a point $\bxsi \in \CA_s^i$ such that \[f_{\mech}^i(\msb_s) = \int_{\CA_s^{-i}}\int_{\CA_s^i}f_{\mech}^i(x)\delta(x-\bxsi)\ms_s^{-i}(x^{-i})dx\] and furthermore \[\bxsi \in \arg\max_{\gamma\in\CA_s^i}\int_{\CA_s^{-i}}\int_{\CA_s^i}f_{\mech}^i(x)\delta(x-\gamma)\ms_s^{-i}(x^{-i})dx\] since otherwise the mixed strategy $\ms_s^i$ cannot satisfy the Nash equilibria condition (it is dominated by a pure-strategy). Now, by Assumption~\ref{as:pay}, $f_{\mech}^i(\cdot)$ is thrice continuously differentiable w.r.t. $\mech$, so the mapping $\arg\max_{\gamma\in\CA_s^i}\int_{\CA_s^{-i}}\int_{\CA_s^i}f_{\mech}^i(x)\delta(x-\gamma)\ms_s^{-i}(x^{-i})dx$ is and thus the value $f_{\mech}^i(\msb_s)$ is thrice continuously differentiable w.r.t. $\mech$. Then, this implies that the functional $g_s^i(\ms_s^i(\mech))$ is thrice continuously differentiable by the smoothness properties of $f_{\mech}^i(\cdot)$ and $g_s^i(\cdot)$ in Assumptions~\ref{as:con},\ref{as:pay}.

\textbf{H4}: The algorithm parameters are taken to satisfy these conditions

\textbf{H5}: Our algorithm is isolated to the compact set $\mechspace$, eliminating the necessity of asymptotic objective function structure such as this requirement. 

\textbf{H6}: This is satisfied.

\textbf{H7}: Let $P_{\eta}$ be a parametrized probability measure over the parameter space $\mechspace$, defined by the Radon-Nikodym derivative 
$\frac{dP_{\eta}(\mech)}{d\mech} = \frac{\exp(-2L(\mech)/\eta^2)}{Z_{\eta}}$
where $\eta>0$ and $Z_{\eta} = \int_{\mechspace}\exp(-2L(\mech)/\eta^2)d\mech < \infty$ is a normalizing term. Then the unique weak limit $P = \lim_{\eta \to 0}P_{\eta}$ is given by an indicator on points in the set $\hat{\mechspace} := \{\mech : L(\mech) = 0\}$, i.e.,
$P(\mech) = \chi_{\mech\in\hat{\mechspace}} / Z_0$
with $\chi$ the indicator function and $Z_0 = \int_{\mechspace}\chi_{\mech\in\hat{\mechspace}}d\mech$.

\textbf{H8}: The sequence is trivially tight, since $P(\mech_k \in \mechspace) = 1 \, \forall k\in\nat$

\subsection{Proof of Theorem~\ref{thm:dreq}}
\label{sec:pfdreq}
We have that $\otimes_{t=1}^T \msb_t \in B_{\epsilon}(\otimes_{t=1}^T\hat{\msb}_t)$. Let $\hat{\psi} = [\hu_1^1,\hlam_1^1,\dots]$ denote the $\arg\min$ corresponding to the minimization \eqref{eq:dro}. Then, since $\hat{L}_{DR}(\mech) = 0$, we must have
\begin{align*}
    0 = \CE_{\mech \sim \otimes_{t=1}^T\msb_t}[h(\hat{\psi},\mech)] = \int_{\otimes_{t=1}^T\CA_{\mech,t}}\left[\arg\min_{r\geq 0} : \hu_s^i - \hu_t^i - \hlam_t^i g_{\mech,t}^i(\gamma_s^i) \leq \hlam_t^i\,r \right]d\otimes_{t=1}^T \msb_t
\end{align*}
    with $\gamma_s^i \sim \msb_s^i$, and thus for all $s,t\in[T], i\in[\na]$:
\begin{align*}
    &\left[\arg\min_{r\geq 0} : \hu_s^i - \hu_t^i - \hlam_t^i g_{\mech,t}^i(\gamma_s^i) \leq \hlam_t^i\,r\right] = 0\,\,\, \forall \gamma_s^i \in \textrm{supp}(\msb_s^i) \\
    &\Rightarrow \left[\arg\min_{r\geq 0} : \hu_s^i - \hu_t^i - \hlam_t^i \min_{\gamma \in \textrm{supp}(\msb_s^i)}g_{\mech,t}^i(\gamma) \leq \hlam_t^i\,r\right] = 0 \\
    &\Rightarrow \left[\arg\min_{r\geq 0} : \hu_s^i - \hu_t^i - \hlam_t^i \int_{\CA_s^i}g_{\mech,t}^i(\gamma_s^i)d\msb_s^i \leq \hlam_t^i\,r \right] = 0 
\end{align*}

\section{Generalized Revealed Preferences by Mixed-Integer Linear Programming \cite{snow2022identifying}}

{\color{black} Consider a group composed of $M$ members, each of which can consume some quantity of $N$ goods. At time $t \in \nat$ each member $i \in \{1,\dots, M\}$ has consumption given by the vector $\consi \in \posreals^N$, and the aggregate group consumption is given by $\cons = \sum_{i=1}^{\na}\consi \in \posreals^N$, subject to price vector (probe) $\alpha_{\dtime} \in \posreals^N$. It is assumed that the preferences of each member $i$ can be represented by a non-satiated and non-decreasing utility function $\utili(\beta),\ \beta \in \posreals^N$. 

For each $\dtime \in \{1,\dots,T\}$, suppose an analyst observes probe signals $\pricet \in \reals^N_+$, aggregate consumption $\cons 
 = \sum_{i=1}^M\consi \in \reals^N$, and "assignable quantities" $\asconsi \leq \consi \ \forall i \in \{1,\dots,M\}$. The assignable quantity $\asconsi$ represents observed consumption of some subset of the quantities consumed by individual $i$ at time $t$, and hence it is elementwise no greater than the total consumption vector $\consi$. The apparatus by which these assignable quantities are observed, and the amount observed, varies by application. We denote this dataset as 
 \begin{equation}
 \label{dataset}
    \dataset = \{\pricet, \cons, \{\asconsi\}_{i=1}^M, \ t \in \{1,\dots,T\} \}
 \end{equation}
 We emphasize that the true individual consumption vectors $\consi$ are hidden. Given this dataset, \cite{cherchye2011revealed} provides necessary and sufficient conditions for consistency with Pareto-optimality. To formulate these conditions, we need one more definition:
 For each observation $t$, \textit{feasible personalized quantities} $\feasconsi \in \posreals^N, \ i=1,\dots,M$ satisfy $\feasconsi \geq \asconsi \ \forall i$ and $\sum_{i=1}^{\na}\feasconsi = \cons$.

 Now we provide the main result of \cite{cherchye2011revealed} which states the equivalence between a consistency of dataset $\dataset$ with coordination and existence of a non-empty feasible region of a set of inequalities.

 \begin{theorem}{\cite{cherchye2011revealed}}
 \label{thm:cherchye1}
    Let $\dataset$ be a set of observations. The following are equivalent:
    \begin{enumerate}
    \item there exist a set of $M$ concave and continuous utility functions $U^1,\dots,U^m$ such that $\dataset$ is consistent with Pareto-optimality.
    \item there exist feasible personalized quantities $\feasconsi$ and numbers $u_j^i > 0, \lambda_j^i > 0$ such that for all $s,t \in \{1,\dots,T\}$: 
    \begin{equation}
    \label{af_ineq}
        u_s^i - u_t^i \leq \lambda_t^i[\alpha_t'q_s^i - \alpha_t 'q_t^i]
    \end{equation}
    for each member $i=1,\dots,M$.
    \end{enumerate}
\end{theorem}
\begin{proof}
See Appendix A.1 of \cite{cherchye2011revealed}.
\end{proof}
    \begin{lemma}
    \label{lem:util}
    Suppose the conditions of Theorem~\ref{thm:cherchye1} hold. Explicit monotone and concave utility functions that rationalize the dataset by satisfying Pareto optimality are given by
            \begin{equation}
            \label{eq:util}
                \hat{U}^i(\beta) = \min_{t \in \{1,\dots,T\}} \{u_t^i + \lambda_t^i \alpha_t'(\beta - q_t^i)\}
            \end{equation}
   \end{lemma}
 \begin{proof}
     This follows immediately by application of Afriat's Theorem \cite{afriat1967construction}.
 \end{proof}
 
 Thus if the feasible region of \eqref{af_ineq} is nonempty, then monotone concave utility functions which rationalize the dataset can be reconstructed using \eqref{eq:util}.

 Implementing this feasibility test \eqref{af_ineq} in practice presents a problem however, because the set $\{\feasconsi\}_{i=1}^M$ is \textit{unobserved}. To overcome this, \cite{cherchye2011revealed} present an equivalent mixed-integer linear program (MILP) formulation which allows for practical computation of this coordination test.

 \begin{theorem}{\cite{cherchye2011revealed}}
 \label{MILP}
 Let $\dataset$ be a set of observations. There exists a set of $\na$concave and continuous utility functions $U^1,\dots,U^M$ such that $\dataset$ is consistent with coordination if and only if there exist $\feasconsi \in \posreals^N$, $\eta_t^i \in \posreals$, and $x_{st}^i \in \{0,1\}, i=1,\dots,M$ that satisfy
 \begin{enumerate}[label=\roman*)]
    \item $\sum_{i=1}^M\feasconsi = \cons$ and $\feasconsi \geq \asconsi$
    \item $\eta_t^i = \pricet'\feasconsi$
    \item $\eta_s^i - \pricet'\feasconsi < y_s x_{st}^i$
    \item $x_{su}^i + x_{ut}^i \leq 1 + x_{st}^i$
    \item $\eta_t^i - \pricet'\feasconsi \leq y_t(1-x_{st}^i)$
 \end{enumerate}
 where $y_t = \pricet'\cons$ is the group consumption cost at time $t$.

 \end{theorem}

 \begin{proof}
    See Proposition 2 of \cite{cherchye2011revealed}.
 \end{proof}
 
 Theorem~\ref{MILP} provides a generalization of Afriat's theorem in revealed preferences, to handle assignable quantities via mixed-integer linear programming. An immediate extension of the work in this paper, would be to generalize Theorem~\ref{thm:MA_Af} in such a way; this would extend and tie in our results on finite-sample robustness in Section~\ref{sec:robustrp}.}

\end{document}